\def\BibTeX{{\rm B\kern-.05em{\sc i\kern-.025em b}\kern-.08em
    T\kern-.1667em\lower.7ex\hbox{E}\kern-.125emX}}
\newcommand{\ieeediamond}[1]{%
\begin{tikzpicture}[baseline=-0.65ex]
\node[diamond, draw=black, fill=white, minimum size=5mm, inner sep=0pt] {#1};
\end{tikzpicture}
}
\newtheorem{theorem}{Theorem}
\newtheorem{definition}{Definition}
\newtheorem{proposition}{Proposition}
\begin{document}
\title{High-Throughput and Scalable Secure Computation Protocols for Deep Learning with Packed Secret Sharing}
\author{Qinghui Zhang, Xiaojun Chen, Yansong Zhang, Xudong Chen, and Tingyu Fan
\thanks{Qinghui Zhang, Xiaojun Chen, Yansong Zhang, Xudong Chen, and Tingyu Fan are with the Institute of Information Engineering, Chinese Academy of Sciences, Beijing, China; the State Key Laboratory of Cyberspace Security Defense, Beijing, China; and also with the School of Cyber Security, University of Chinese Academy of Sciences, Beijing, China (e-mail: zhangqinghui@iie.ac.cn;chenxiaojun@iie.ac.cn;zhangyansong@iie.ac.cn;
chenxudong@iie.ac.cn;fantingyu@iie.ac.cn).}}

\maketitle

\begin{abstract}
Most existing privacy-preserving deep learning protocols based on secure multi-party computation (MPC) typically support at most four participants, demonstrating severely limited scalability. Liu et al. (USENIX Security'24) presented the first relatively practical approach by utilizing Shamir secret sharing with Mersenne prime fields. However, when processing deeper neural networks such as VGG16, their protocols incur substantial communication overhead, resulting in particularly significant latency in wide-area network (WAN) settings. To tackle this challenge, in this paper, we propose a high-throughput and scalable MPC protocol for deep learning in the honest-majority setting. The core of our approach lies in leveraging packed Shamir secret sharing (PSS) to enable parallel computation and reduce communication complexity. The main contributions are three-fold: i) We define a primitive called vector-matrix multiplication-friendly random share tuples and present a communication-efficient protocol for vector–matrix multiplication based on the primitive. ii) Inspired by the inherent parallelizability of multiple-input multiple-output convolutions, we design the filter packing approach that enables parallel convolution. iii) We further extend all non-linear protocols based on Shamir Secret Sharing to their PSS variants, which enable parallel non-linear evaluations. Extensive experiments across various datasets and neural networks demonstrate the superiority of our approach in WAN. Specifically, compared to Liu et al. (USENIX Security'24), our scheme reduces the communication by up to $5.9\times$, $11.2\times$, and $6.8\times$ in offline, online and total communication overhead, respectively. In addition, our scheme is up to $1.6\times$, $2.6\times$, and $1.8\times$ faster in offline, online and total running time, respectively.

\end{abstract}

\begin{IEEEkeywords}
Privacy-preserving deep learning, secure multi-party computation, packed Shamir secret sharing.
\end{IEEEkeywords}

\section{Introduction}
\label{Introduction}
\IEEEPARstart{D}{eep} learning (DL) has emerged as a transformative methodology with widespread applications across diverse domains, particularly in computer vision and natural language processing. However, its reliance on large-scale data, which often includes sensitive information, raises significant privacy concerns that require careful consideration. Secure multi-party computation (MPC) offers a promising solution to this challenge by enabling a set of $n$ parties to jointly compute a function over their private inputs while preserving both input privacy and output correctness~\cite{Yao1982ProtocolsFS,Goldreich1987HowTP,Chaum1988MultipartyUS,BenOr1988CompletenessTF}. Integrating MPC with deep learning thus allows multiple parties to collaboratively perform training or inference tasks without exposing their private data.

In recent years, numerous studies have been conducted to enhance the communication and computational efficiency of MPC-based privacy-preserving deep learning (PPDL) protocols. In the dishonest-majority setting, most existing protocols focus on the two-party scenario~\cite{Mohassel2017SecureMLAS,Demmler2015ABYA,Patra2020ABY20IM,Juvekar2018GazelleAL,Liu2017ObliviousNN,Riazi2018ChameleonAH,Chandran2019EzPCPA,Rathee2020CrypTFlow2P2,Lehmkuhl2021MUSESI,Xu2022SIMC2I}. In the honest-majority setting, most existing protocols consider the three-party scenario~\cite{Mohassel2018ABY3AM,Wagh2019SecureNN3S,Wagh2020FalconHM,Patra2020BLAZEBF,Li2023Efficient3F,Koti2020SWIFTSA} and four-party scenario~\cite{Byali2020FLASHFA,Koti2021TetradAS,Rachuri2019TridentE4,Dalskov2020FantasticFH}, where the protocols tolerate one corrupted party. Unfortunately, all the aforementioned protocols universally suffer from scalability limitations. This critical shortcoming was recently addressed by Liu et al.~\cite{Liu2024ScalableMC}, who designed a scalable protocol that can support a large number of parties for PPDL based on Shamir secret sharing~\cite{shamir1979share} against semi-honest adversaries in the honest-majority setting. Nevertheless, when we rerun the scalable protocol on relatively deeper neural networks such as VGG16, it exhibits poor performance owing to significant communication overhead. Hence, it is necessary to develop a more communication-efficient protocol that can better accommodate the requirements of real-world scenarios. For ease of reference, we denote this work~\cite{Liu2024ScalableMC} as “LXY24”.

In theory, Franklin and Yung introduced a parallelizable technique called packed Shamir secret sharing (PSS)~\cite{franklin1992communication}, which packs $k$ secrets into a single secret sharing. Naively replacing the Shamir secret sharing scheme in the LXY24 with a PSS scheme can enable parallel processing of $k$ instances,  which results in an amortized reduction in communication overhead by a factor of approximately $k$. However, in certain scenarios, such as a photo-based recognition application, only a single input may be available. If the above packed approach across multiple instances is still employed, it needs to introduce dummy inputs and pack them together with the real input to complete the inference process. More critically, this approach degenerates into the protocol similar to LXY24, thereby completely forfeiting the advantage of parallel computing provided by PSS. 

\textit{Given that the naive method isn’t efficient for processing a single input, we can turn to employ the substantial parallelism within a single inference pass.} To be specific, deep neural networks typically consist of linear layers and non-linear layers. Linear layers primarily execute matrix operations. For example, the fully connected layer is a vector-matrix multiplication. The convolutional layer can be expressed as a matrix multiplication~\cite{Wagh2019SecureNN3S}, which can be achieved through multiple vector-matrix multiplications. All these operations can be efficiently implemented using the blocked matrix strategy, which is particularly amenable to parallel computing acceleration. Under the block-processing strategy, each block of the vector and matrix is encoded into an individual PSS value, and block-wise multiply-accumulate operations are performed using DN protocol based on PSS (a natural extension of the DN protocol in~\cite{damgaard2007scalable}). Fig.~\ref{sh} shows a toy example. Particularly, ~\cite{goyal2022sharing} proposes the sharing transformation protocol that can transform a PSS packing secrets $(x_1,x_2)$ into one packing  $(x_1+x_2, 0)$ for the case $k=2$. Regarding non-linear layers, such as the ReLU function defined as $\text{ReLU}(x)=\text{Max}(0,x)$, the operations are element-wise. This means each output value depends solely on the single value at the corresponding position in the input tensor. Hence, the parallel non-linear operations are entirely achievable by packing $k$ values into an individual PSS value. \textit{Nevertheless, the above simple exploitation of parallelism in a single inference pass is not practical, especially for linear layers.}

\begin{figure*}[t]
    \centering
    \includegraphics[width=1\linewidth]{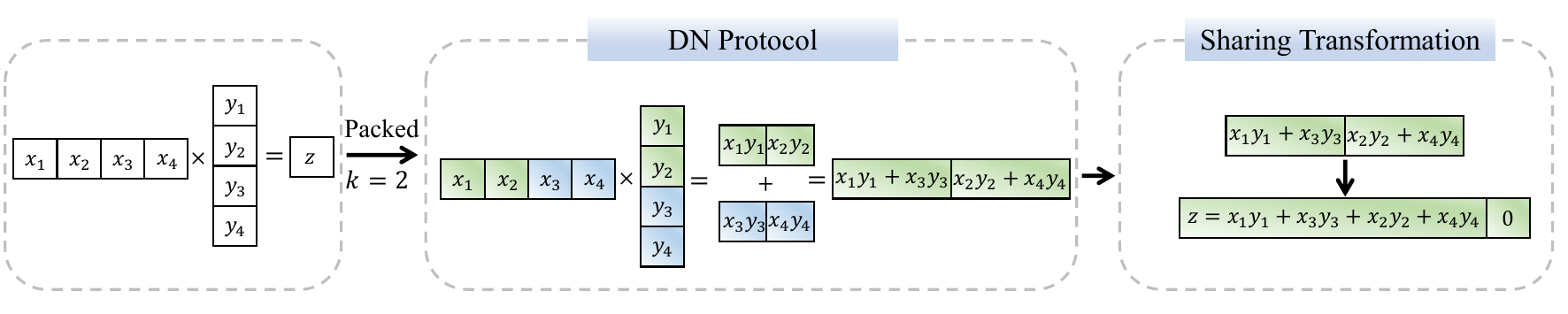}
    \caption{A toy example. Boxes with the same color represent a packed secret sharing, in which $k$ values are packed.}
    \label{sh}
\end{figure*}

\textit{First, the trivial vector-matrix multiplication eliminates the benefit of using the PSS.} As depicted in Fig.~\ref{sh}, each block of the vector is encoded into an individual PSS value, and block-wise multiply-accumulate operations are performed using the DN protocol. Since the computed result remains in a partitioned form, this necessitates an interactive sharing transformation protocol to combine these partial results into the final output. Compared with  LXY24, the entire process instead incurs more communication overhead. The reason is that LXY24 only involves the DN protocol for vector-matrix multiplication.

\textit{Second, when processing convolutional layers, zero padding of convolutional inputs incurs significant communication cost.} Recall that zero padding is also frequently applied to the convolutional inputs in many neural networks, such as VGG16. A convolutional layer can be achieved through multiple vector-matrix multiplications. Unfortunately, when zero padding needs to be applied to convolutional inputs, it becomes necessary to combine all inputs with these zero padding values into new PSS values. This combination process incurs significant communication overhead.

\subsection{Technique Overview}
\textit{To address the first issue,} we define a primitive called vector-matrix multiplication-friendly random share tuples (VM-RandTuple), which can be generated in batches by employing the Vandermonde matrix in the offline phase. Recall that the DN protocol~\cite{damgaard2007scalable} consists of two steps: First, the parties locally compute and send the results to $P_1$. Second, $P_1$ reconstructs the result and sends it back to the parties. Our optimization is motivated by the fact that $P_1$ holds the partial results plaintext, and the final output arises as a composition of partial results and VM-RandTuple, enabling the interactive sharing transformation protocol to be delegated and locally implemented within $P_1$. Theoretically, our optimization has the potential to achieve better online communication complexity vector-matrix multiplication.

\textit{To address the second issue,} we observed that the input is convolved with each filter in either a single-input multiple-output or multiple-input multiple-output convolution process. Therefore, such a convolution process inherently involves parallel computation. In light of this, we propose a filter packing approach, i.e., we pack $k$ values that are selected respectively from the same position of $k$ filters into a single packed secret share. Correspondingly, we also pack $k$ copies of each input pixel into a single packed secret share. Based on this packing approach, not only do we achieve a
parallel convolution computation, but also there is no need to recombine zero padding values with the input to form new PSS representations. Instead, we directly pack $k$ copies of zero padding values locally. Moreover, due to $k$ copies of the input being packed into a single packed secret share, which differs from the packing approach of the vector-matrix multiplication, we need to design a protocol for the inter-layer connectivity.

\subsection{Contributions}
 Building upon the above techniques, in this paper, we propose a parallel and scalable MPC protocol by leveraging PSS for PPDL. The main contributions are summarized below.
 
\begin{itemize}
    \item {\textbf{Efficient Vector-matrix Multiplication.} We define a primitive called vector-matrix multiplication-friendly random share tuples and design a protocol to generate these tuples for computing vector-matrix multiplication.  We further combine these tuples with the DN~\cite{damgaard2007scalable} protocol, resulting in an efficient protocol for vector-matrix multiplication with a single round of online communication.}
    \item {\textbf{Parallel Convolution Computation.} We propose a filter packing approach for convolution computation. By integrating our method with the DN protocol, we achieve a parallel convolution computation protocol that requires merely single-round online communication. Our convolution computation protocol can further support zero padding without extra communication.}
    \item {\textbf{Parallel Non-linear Operations.} We extend all protocols based on Shamir secret sharing for non-linear operations in LXY24 to the PSS-based protocols, which enable parallel computation of the non-linear operations, resulting in efficient protocols for parallel ReLU with $O(\text{log}_2{\ell})$ rounds of online communication.}
    \item {\textbf{Evaluations.} We implement all protocols and conduct extensive evaluations on various neural networks in both LAN and WAN settings with varying numbers of parties and different packaging parameters. Experimental results demonstrate progressively enhanced performance advantages with increasing numbers of participating entities and deeper neural network architectures. Compared to LXY24, our scheme reduces the communication by up to $5.9\times$, $11.2\times$, and $6.8\times$ in offline, online, and total communication overhead, respectively. At the same time, our scheme is up to $1.6\times$, $2.6\times$, and $1.8\times$ in offline, online and total running time, respectively. Furthermore, experimental results reveal the superiority of our protocol in terms of memory overhead. For example, in our experiment setup, our protocol reliably supports VGG16 inference, whereas LXY24 suffers from memory overflow problems.}  
\end{itemize}

\section{Preliminary}
\subsection{Notation}
Let $p$ be a Mersenne prime, which is the form of $p=2^{\ell}-1$ for some prime $\ell$. In PPDL, $\ell$ is chosen to be 31 and 61, which is sufficient. $\mathbb{F}_p$ denotes a finite field. We denote by $[n]$ the set $\{0,1,\cdots,n-1\}$, where $n\in\mathbb{N},\mathbb{N}$ is the set of natural numbers. We use the lowercase letter $x$ to denote a scalar and $x_i$ to denote the $i$-th bit of the scalar $x$. The bold lowercase letter $\boldsymbol{x}=\{\boldsymbol{x_i}\}_{i=0}^{n-1}$ is a vector, $\boldsymbol{x}_i$ is the $i$-th element of the vector $\boldsymbol{x}$, $\boldsymbol{x}_{i,j}$ is $j$-th bit of the $i$-th element of the vector $\boldsymbol{x}$ and $\boldsymbol{x}^i$ is indexed by $i$ to distinguish different vectors. Correspondingly, uppercase letter $X$ is a matrix and $X_{i,j}$ is the element indexed by row $i$ and column $j$ of the matrix $X$. The bold uppercase letter $\boldsymbol{X}$ is a tensor. $Van(n, n-t)$ is an $n$-by-$(n-t)$ Vandermonde matrix.

\subsection{Packed Shamir Secret Sharing Scheme}
\label{PSS}
The packed Shamir secret sharing (PSS) scheme~\cite{franklin1992communication} is a natural generalization of the standard Shamir secret sharing scheme~\cite{shamir1979share}.

For a vector $\boldsymbol{x}\in \mathbb{F}_q^k$, we will use $\llbracket \boldsymbol{x} \rrbracket_d\in\mathbb{F}_p$ to denote a degree-$d$ PSS, where $k-1\leq d\leq n-1$. It requires $d+1$ shares to reconstruct the whole secrets, and any $d-k+1$ shares are independent of the secrets. The scheme $\llbracket \boldsymbol{x} \rrbracket_d$ corresponds to a degree-$d$ polynomial $f$ such that the values $\{f(i)\}_{i=1}^{n}$ represent the shares and $\{f(\boldsymbol{s}_i)=\boldsymbol{x}_i\}_{i=0}^{k-1}$, where $\{\boldsymbol{s}_i\}_{i=0}^{k-1}$ are the default positions to
store the secrets $\boldsymbol{x}=\{\boldsymbol{x}_i\}_{i=0}^{k-1}$. The PSS scheme has the following nice properties: 
\begin{itemize}
    \item Linear Homomorphism: For all $d\geq k-1$ and $\boldsymbol{x},\boldsymbol{y}\in \mathbb{F}_q^k,\llbracket \boldsymbol{x} \rrbracket_d+\llbracket \boldsymbol{y} \rrbracket_d=\llbracket \boldsymbol{x}+\boldsymbol{y} \rrbracket_d$. Assume that $c$ is a public constant, $c+\llbracket \boldsymbol{x} \rrbracket_d$ and $c\times\llbracket \boldsymbol{x} \rrbracket_d$ denote adding $c$ to each element of $\boldsymbol{x}$ and multiplying each element of $\boldsymbol{x}$ by $c$, respectively.
    \item Multiplication-friendliness in~\cite{goyal2022sharing}: When $d\leq n-k$, all parties can locally multiply a public vector $\boldsymbol{c}\in\mathbb{F}_p^k$ with the share $\llbracket \boldsymbol{x} \rrbracket_{d}$. First, all parties locally compute $\llbracket \boldsymbol{c} \rrbracket_{k-1}$. Then all parties can locally compute $\llbracket \boldsymbol{c}* \boldsymbol{x} \rrbracket_{k-1+d}=\llbracket \boldsymbol{c} \rrbracket_{k-1}\times\llbracket \boldsymbol{x} \rrbracket_{d}$, where $*$ denote coordinate-wise multiplication.
     \item Multiplicative: For all $d_1,d_2\geq k-1,d_1+d_2 < n$ and for all $\boldsymbol{x},\boldsymbol{y}\in \mathbb{F}_q^k,\llbracket \boldsymbol{x} \rrbracket_{d_1}\times\llbracket \boldsymbol{y} \rrbracket_{d_2}=\llbracket \boldsymbol{x}*\boldsymbol{y} \rrbracket_{d_1+d_2}$. If $d_1=d_2=\frac{n-1}{2}$, we can compute $\llbracket \boldsymbol{x} \rrbracket_{d}*\llbracket \boldsymbol{y} \rrbracket_{d}=\llbracket \boldsymbol{x}\times\boldsymbol{y} \rrbracket_{d}$ by a natural extension of the DN multiplication protocol in~\cite{damgaard2007scalable}. We summarize the functionality $\mathcal{F}_{PDN}$ in Appendix~\ref{appendixA}. The protocol is as follows:
    \begin{list}{}{}
        \item {(1) There are $n$ parties $\{P_i\}_{i=1}^{n}$. They prepare a pair of two random sharings ($\llbracket \boldsymbol{r} \rrbracket_{2d},\llbracket \boldsymbol{r} \rrbracket_{d}$) of the same secrets vector $\boldsymbol{r}\in\mathbb{F}_p^k$.}
        \item {(2) All parties compute $\llbracket \boldsymbol{a} \rrbracket_{2d}=\llbracket \boldsymbol{x} \rrbracket_{d}\times\llbracket \boldsymbol{y} \rrbracket_{d}+\llbracket \boldsymbol{r} \rrbracket_{2d}$ and send $\llbracket \boldsymbol{a} \rrbracket_{2d}$ to $P_1$.}
        \item {(3) $P_1$ reconstructs the secrets $\boldsymbol{a}$, generates $\llbracket \boldsymbol{a} \rrbracket_{d}$, and distributes the shares to all other parties.}
        \item {(4) All parties compute $\llbracket \boldsymbol{z} \rrbracket_{d}=\llbracket \boldsymbol{a} \rrbracket_{d}-\llbracket \boldsymbol{r} \rrbracket_{d}$.}
    \end{list}
\end{itemize}

In addition, a degree-$t$ Shamir secret sharing scheme corresponds to a degree-$t$ polynomial $f$ such that $\{f(i)\}_{i=1}^{n}$  are the shares and $f(0)$ is the secret. However, we do not always need to store the secret at point 0. We use $[x|_j]_t$ to denote a degree-$t$ Shamir sharing whose secret is stored at the point $j,j\notin\{1,\cdots,n\}$.

Now let $E_i\in\{0,1\}^k$ denote the unit vector, whose $i$-th entry is 1 and all other entries are 0. Then parties can convert $E_i$ to $\llbracket E_i\rrbracket_{k-1}$ locally. There are the following nice properties:
\begin{itemize}
    \item Transforming Shamir secret shares at $k$ distinct positions into a single PSS locally: 
    \begin{center}
        $\llbracket \boldsymbol{x} \rrbracket_{t+k-1}=\sum_{i=0}^{k-1}\llbracket E_i \rrbracket_{k-1}\cdot[\boldsymbol{x}_i|_{s_i}]_t$.
    \end{center}
    \item Correspondingly, we can choose $k$ secrets from different position of $k$ distinct PSS, and generates a new PSS that the positions of the secrets we choose are from different positions remain unchanged in the new PSS: 
    \begin{center}
    $\llbracket \boldsymbol{x} \rrbracket_{d+k-1}=\sum_{i=0}^{k-1}\llbracket E_{i} \rrbracket_{k-1}\cdot\llbracket \boldsymbol{x}^i\rrbracket_{d}$, 
    \end{center}
    where $\boldsymbol{x}=\{\boldsymbol{x}^i_i\}_{i=0}^{k-1}$. 
\end{itemize}

\subsection{Neural Network}
\label{NN}
In this paper, we focus on the convolutional neural networks (CNN), a widely used in computer vision. It mainly consists of fully connected layers, convolutional layers and non-linear layers.

\textbf{Fully Connected Layer (FC).} The input and output of FC are vector $\boldsymbol{v}\in\mathbb{F}_p^{n_i}$ and $\boldsymbol{u}\in\mathbb{F}_p^{n_0}$ respectively. An FC is formalized as a tuple $(W,\boldsymbol{b})$ where $W$ is a $(n_i\times n_o)$ weight matrix and $\boldsymbol{b}$ is a bias vector of length $n_o$. The output is represented as a linear transformation $\boldsymbol{u}=\boldsymbol{v}W+\boldsymbol{b}$. In a word, the FC is a vector-matrix multiplication. We propose
an efficient protocol for computing vector-matrix multiplication in \S\ref{sec::Vector-Matrix Multiplication}.

\textbf{Convolution Layer (Conv).} The input to a Conv layer is a tensor $\boldsymbol{X}\in \mathbb{R}^{w_i\times h_i\times c_i}$ where $w$ is the image width, $h$ is the image height, and $c$ is the number of input channels. The Conv is then parameterized by the tuple $(f_w,f_h,c_i,c_o)$. In other words, there are $c_o$ filter banks, each consisting of $c_i$ many $f_w\times f_h$ filters. The Conv process is sliding each filter with a
certain stride $s$ on the tensor $\boldsymbol{X}$. The output of the Conv can then be parameterized by the tuple $(w_o,h_o,c_o)$ where $w_o=\lfloor (w_i-f_w+2p)/s\rfloor+1, h_o=\lfloor (h_i-f_h+2p)/s\rfloor+1, p$ is padding size applied to the edges of each $w_i\times h_i$ matrix of the tensor $\boldsymbol{X}$. Fig.~\ref{CNNcompute} shows the Conv process. For efficiently achieving convolution computation, we design an efficient protocol in \S\ref{sec::Conv}.
\begin{figure}[t]
    \centering
    \includegraphics[width=0.5\textwidth]{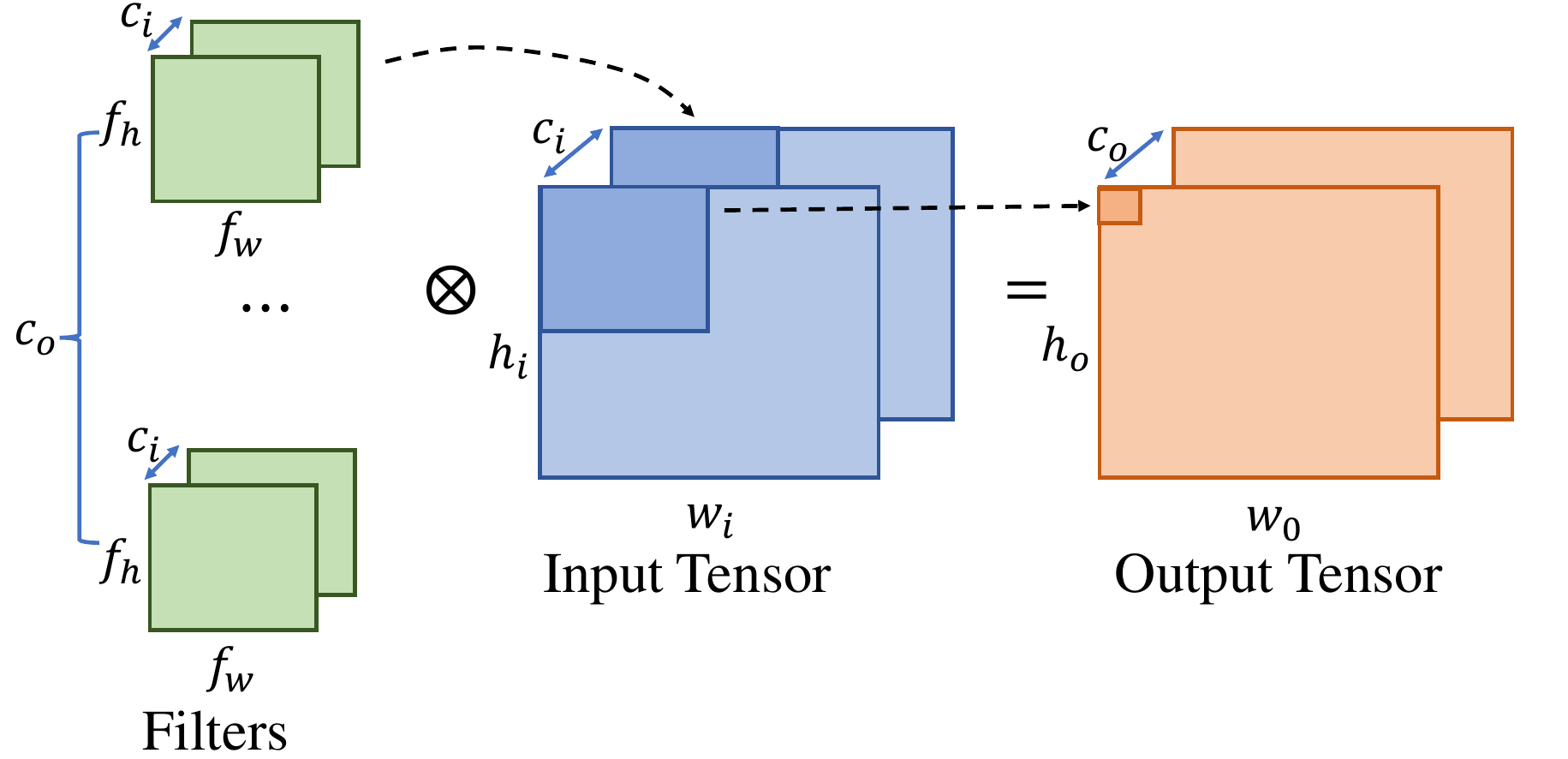}
    \caption{Convolution process.}
    \label{CNNcompute}
\end{figure}

\textbf{Non-Linear Layer (ReLU and Pool).} In deep learning, the non-linear layers consist
of activation functions and pooling functions. The activation functions are applied to each element of input independently. The common activation functions is ReLU function: $\text{ReLU}(x)=\text{Max}(0,x)$. The pooling function reduces the output size. One of the most popular pooling functions in CNN is the MaxPool function that arranges inputs into several windows and calculates the maximum for each window. We construct the parallel non-linear operators in \S\ref{sec::nonline}.

\subsection{Model}
 We assume there are $n\in\mathbb{N}$ parties, denoted by $P_1, P_2,\cdots,P_n,$ where $n\geq 2d+1,d\in\mathbb{N}$. We describe our protocols in the semi-honest model with passive security. For every pair of parties, assume there exists a private and authentic synchronous channel. The number of corrupted parties that are assumed to be poly-time bounded is at most $t=d-k+1$ for a natural extension of the DN~\cite{damgaard2007scalable} multiplication protocol, where $2\leq k\leq d$. If $k=1$, the PSS becomes standard Shamir secret sharing. It is worth noting that a degree-$d$ PSS can pack $k\in\mathbb{N}$ secrets, and $k$ and default positions $\{\boldsymbol{s}_i\}_{i=0}^{k-1}$ are publicly known to the $n\in\mathbb{N}$ parties.

\section{Efficient Vector-Matrix Multiplication}
\label{sec::Vector-Matrix Multiplication}
In this section, we first present the design of the vector-matrix multiplication protocol in \S\ref{Protocol-Design-VM}. Then, we extend the vector-matrix multiplication protocol to support truncation in \S\ref{trun}.

\begin{figure*}[t]
    \centering
    \includegraphics[width=1\linewidth]{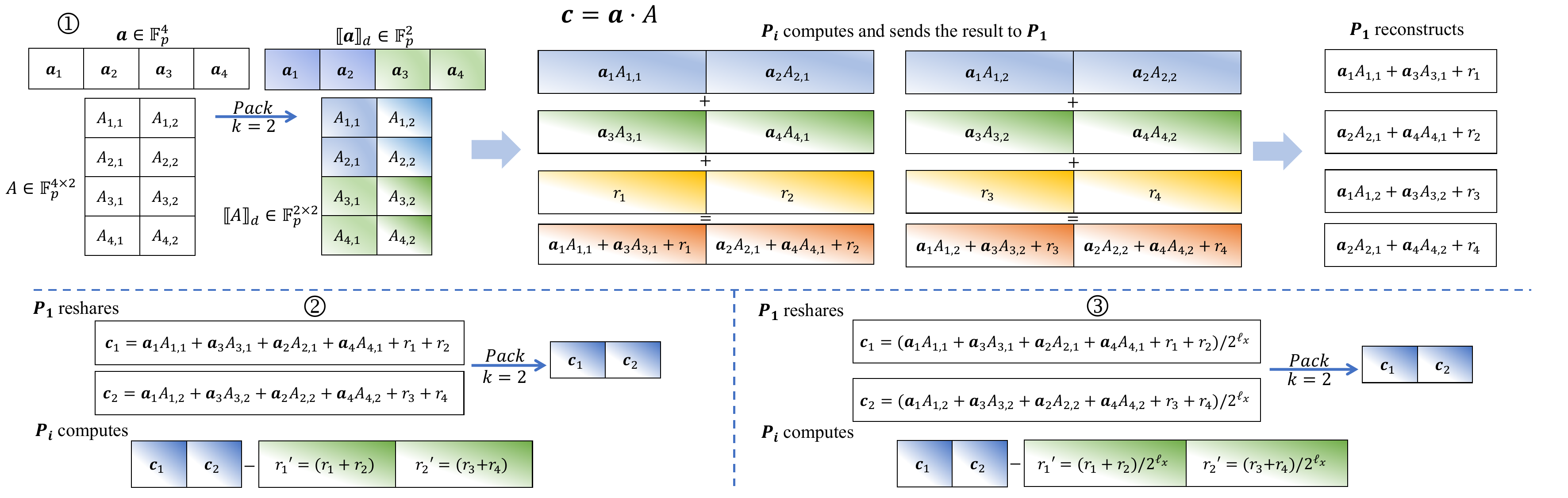}
    \caption{A toy example. Boxes with the same color represent a packed secret sharing, in which 
$k$ values are packed and located at positions $(\boldsymbol{s}_0,\cdots,\boldsymbol{s}_{k-1})$; blank boxes indicate plaintext values. \ding{172} and \ding{173} correspond to general vector-matrix multiplication. \ding{172} and \ding{174} correspond to fixed-point vector-matrix multiplication.}
    \label{multtoy}
\end{figure*}

\subsection{Protocol Design}
\label{Protocol-Design-VM}
FC is a key layer in the deep learning \S\ref{NN}, which consists of vector-matrix multiplications. In light of the earlier analysis, vector-matrix multiplication $\boldsymbol{a}\cdot A$ can be calculated by blocked matrix strategy, where $\boldsymbol{a}\in\mathbb{F}_{p}^u,A\in\mathbb{F}_{p}^{u\times v}$. Specifically, we divide the vector $\boldsymbol{a}\in\mathbb{F}_{p}^u$ into $\lceil u/k\rceil$ blocks and divide each column of the matrix $A\in\mathbb{F}_{p}^{u\times v}$ into $\lceil u/k\rceil$ blocks. The size of each block is $k$. Then we pack each block into a PSS (e.g., $\llbracket \boldsymbol{a}\rrbracket\in\mathbb{F}_p^{\lceil u/k\rceil},\llbracket A\rrbracket\in\mathbb{F}_p^{\lceil u/k\rceil\times v}$) and combine it with the DN~\cite{damgaard2007scalable} protocol. 
 
In order to reduce communication cost, we first define vector-matrix multiplication-friendly random share tuple (VM-RandTuple), i.e., $(\llbracket \boldsymbol{r}\rrbracket_{2d}\in\mathbb{F}_{p}^{k},\llbracket \boldsymbol{r'}\rrbracket_{d}\in\mathbb{F}_{p})$, where $\boldsymbol{r}=\{\boldsymbol{r}_i\}_{i=0}^{k^2-1},\boldsymbol{r}'=\{\boldsymbol{r}'_i=\sum_{j=i\cdot k}^{(i+1)\cdot k-1}\boldsymbol{r}_j\}_{i=0}^{k-1}$. We can utilize a Vandermonde matrix for the batch generation of such random share tuples in the offline phase. We formally define the functionality $\mathcal{F}_{VM-RandTuple}$ in Appendix~\ref{appendixA} and provide a secure protocol implementation in Algorithm~\ref{RP}. Based on the VM-RandTuple, a key optimization to the second step of DN protocol is having $P_1$ combine these partial results into the final output and reshare the final output to the parties. This optimization directly leads to a $k$-fold reduction in the amortized communication cost of the second step of DN protocol, compared with LXY24. 

We describe a toy example in Fig.~\ref{multtoy}. It embodies a core idea of our technology. \ding{172} and \ding{173} correspond to general vector-matrix multiplication; \ding{172} and \ding{174} correspond to fixed-point vector-matrix multiplication. We defer the details of the latter to \S\ref{trun}. In Fig.~\ref{multtoy}, we set $k=2$, meaning that $2$ values are packed into a single packed secret share. Accordingly, the vector $\boldsymbol{a}$ is packed into 2 packed secret shares, i.e., $\llbracket\boldsymbol{a}\rrbracket\in\mathbb{F}_p^2$ and the matrix $A$ is packed into $2\times2$ packed secret shares, i.e., $\llbracket A\rrbracket\in\mathbb{F}_p^{2\times2}$. Then all parties $\{P_i\}_{i=1}^{n}$ can compute $\llbracket \boldsymbol{a}\cdot A \rrbracket_{2d}=\llbracket \boldsymbol{a} \rrbracket_{d}\cdot\llbracket A \rrbracket_{d}$ locally. Since $\llbracket \boldsymbol{a}\cdot A \rrbracket_{2d}$ has a degree of $2d$, a degree reduction is required. $P_i$ sends the $\llbracket \boldsymbol{a}\cdot A \rrbracket_{2d}+\llbracket\boldsymbol{r}\rrbracket_{2d}$ to $P_1$ and $P_1$ reconstructs the secret values at the default positions $\{\boldsymbol{s}_i\}_{i=0}^{k-1}$ and then calculates the sum of the results. Finally, $P_1$ packs the multiple results, i.e., $(\boldsymbol{c}_1,\boldsymbol{c}_2)$ into a single degree-$d$ PSS and shares to the other parties. We formally define the functionality $\mathcal{F}_{VecMatMult}$ in Appendix~\ref{appendixA} and provide a secure protocol implementation in Algorithm~\ref{MVMP}. 

To sum up, given a vector of length $u$ and a $u$-by-$v$ matrix, the online complexity is exactly 1 round with $(1+\frac{1}{k})uv$ field elements amortized per party. The offline complexity is exactly 2 round with $(1+\frac{1}{k})\frac{2nv}{n+2k-1}$ field elements amortized per party.

\begin{algorithm}
\caption{VM-RandTuple $\Pi_{VM-RandTuple}$}
\KwIn{None.}
\KwOut{ $\llbracket \boldsymbol{r}\rrbracket_{2d}\in\mathbb{F}_p^{(n-t)\cdot k},\llbracket\boldsymbol{r'}\rrbracket_d\in\mathbb{F}_p^{(n-t)}$, where 

$\boldsymbol{r}=\{\boldsymbol{r}_i\}_{i=0}^{(n-t)\cdot k^2-1}, $
$\boldsymbol{r}'=\{\boldsymbol{r}'_i=\sum_{j=i\cdot k}^{(i+1)\cdot k-1}\boldsymbol{r}_j\}_{i=0}^{(n-t)\cdot k -1}.$}

    \begin{enumerate}[label=\arabic*., align=left, leftmargin=1.2em, labelwidth=1em, labelsep=0.2em]
        \item $P_i,i\in\{1,\cdots,n\}$ randomly samples a vector $\boldsymbol{a}^{i}$ with length of $k^2$ and distributes $\llbracket\boldsymbol{a}^{i}\rrbracket_{2d}\in\mathbb{F}_p^k.$
        \item $P_i$ computes $\boldsymbol{b}^{i}=\{\boldsymbol{b}_u^i=\sum_{j=u\cdot k}^{(u+1)\cdot k-1}\boldsymbol{a}^i_j\}_{u=0}^{k-1}$ and distributes $\boldsymbol{c}^{i}=\{[\boldsymbol{b}^{i}_{u}|_{\boldsymbol{s}_u}]_t\}_{u=0}^{k-1}.$
        \item $P_i$ computes 

            \begin{center}
                  $A =Van(n,n-t)^T\cdot(\llbracket \boldsymbol{a}^{1}\rrbracket_{2d},\cdots,\llbracket \boldsymbol{a}^{n}\rrbracket_{2d} ).$
            
             $B =Van(n,n-t)^T\cdot(\boldsymbol{c}^{1},\cdots,\boldsymbol{c}^{n}).$
            \end{center}

        \item \For{$j \leftarrow 0$ \KwTo $n-t-1$}{
            Compute $\boldsymbol{f}_{j} =\sum_{i=0}^{k-1}\llbracket E_i \rrbracket_{k-1}\cdot B_{j,i}.$
            
            \For{$u \leftarrow 0$ \KwTo $k-1$}{
            Set $\boldsymbol{g}_{j\cdot k+u} =A_{j,u}.$
            
            }
        }
       
        \item $P_i$ sets $\llbracket\boldsymbol{r}\rrbracket_{2d}=\boldsymbol{g}=\{\boldsymbol{g}_i\}_{i=0}^{(n-t)\cdot k},\llbracket\boldsymbol{r}'\rrbracket_{d}=\boldsymbol{f}=\{\boldsymbol{f}_i\}_{i=0}^{n-t-1}.$
    \end{enumerate}
\Return $\llbracket \boldsymbol{r}\rrbracket_{2d},\llbracket\boldsymbol{r'}\rrbracket_d.$

\label{RP}
\end{algorithm}

\begin{algorithm}
    \caption{Vector-Matrix Mult $\Pi_{VecMatMult}$}
    
    There are $n$ parties where $n=2d+1, 1<k\leq d$ and the number of corrupted parties is $t=d-k+1$. The original vector $\boldsymbol{a}\in\mathbb{F}_p^{u}$ and matrix $A\in\mathbb{F}_p^{u\times v}$.
    
    \KwIn{$\llbracket \boldsymbol{a}\rrbracket_d\in\mathbb{F}_p^{\lceil 
 u/k\rceil},\llbracket A\rrbracket_d\in\mathbb{F}_p^{\lceil u/k\rceil\times v}$.}
    \KwOut{ $\llbracket\boldsymbol{c}\rrbracket_d\in\mathbb{F}_p^{\lceil v/k\rceil},$ where $\boldsymbol{c}=\boldsymbol{a}\cdot A$.}

if $u,v$ is not divisible by $k$, the original vector and matrix can be zero-padded accordingly.
    
    \SetKw{KwCommon}{Common Randomness:}
    \KwCommon{$(\llbracket \boldsymbol{r}\rrbracket_{2d}\in\mathbb{F}_p^{v},\llbracket \boldsymbol{r'}\rrbracket_{d}\in\mathbb{F}_p^{\lceil v/k\rceil})\leftarrow \mathcal{F}_{VM-RandTuple}$} where $\boldsymbol{r}=\{\boldsymbol{r}_i\}^{kv-1}_{i=0},\boldsymbol{r'}=\{\boldsymbol{r}_i=\sum_{j=i\cdot k}^{(i+1)\cdot k -1}\boldsymbol{r}_j\}^{v-1}_{i=0}.$

  \begin{enumerate}[label=\arabic*., align=left, leftmargin=1.2em, labelwidth=1em, labelsep=0.2em]
        \item  The parties compute $\llbracket \boldsymbol{z}\rrbracket_{2d}=\llbracket \boldsymbol{a}\rrbracket_d\cdot\llbracket A\rrbracket_d+\llbracket \boldsymbol{r}\rrbracket_{2d}$ and send $\llbracket \boldsymbol{z}\rrbracket_{2d}$ to the $P_1$.  
        \item $P_1$ reconstructs the values $\{\boldsymbol{z}_i\}_{i=0}^{kv-1}$, computes $\boldsymbol{f}=\{\boldsymbol{f}_i=\sum_{j=i\cdot k}^{(i+1)\cdot k-1}\boldsymbol{z}_j\}^{v-1}_{i=0}$, and distributs $\llbracket\boldsymbol{f}\rrbracket_d\in\mathbb{F}_d^{\lceil v/k\rceil}$ to other parties.
        \item The parties compute $\llbracket\boldsymbol{c}\rrbracket_d=\llbracket\boldsymbol{f}\rrbracket_d-\llbracket\boldsymbol{r'}\rrbracket_d.$
    \end{enumerate}

    \Return $\llbracket\boldsymbol{c}\rrbracket_d.$
    \label{MVMP}
\end{algorithm}

% \begin{theorem}
% \label{RandomPairs}
% $\Pi_{VM-RandTuple}$ securely realizes $\mathcal{F}_{VM-RandTuple}$ in the stand-alone model with abort, in the presence of a fully semi-honest adversary controlling $t$ corrupted parties.
% \end{theorem}

% \begin{proof}
%     The proof of this theorem is given in Appendix~\ref{appendixC}.
% \end{proof}

% \begin{theorem}
% \label{theoremVector-Matrix}
% $\Pi_{VecMatMult}$ securely realizes $\mathcal{F}_{VecMatMult}$ in the $\mathcal{F}_{VM-RandTuple}$-hybrid model with abort, in the presence of a fully semi-honest adversary controlling $t$ corrupted parties.
% \end{theorem}

% \begin{proof}
%     The proof of this theorem is given in Appendix~\ref{appendixC}.
% \end{proof}

\subsection{Truncation}
\label{trun}
In MPC, decimal numbers are universally encoded using fixed-point representation. In this representation, a decimal value is represented as $\ell$-bit integers, where the first $\ell-\ell_x$ bits are the integer part and the last $\ell_x$ bits are the fractional part. Multiplying two fixed-point numbers yields $2\ell_x$ fractional bits and $\ell-2\ell_x$ integer bits; truncation is required to reduce the fractional part back to $\ell_x$ bits, ensuring precision and avoiding overflow. LXY24 proposes a novel truncation protocol in the Mersenne prime field that reduces the share-to-secret gap to just 1 bit, enabling the use of a 31-bit Mersenne prime. However, this protocol is not well-suited to our setting. First, integrating it into our work would require additional open operations and degree reduction. Second, for deeper neural networks, it necessitates either a 61-bit Mersenne prime or a reduction in precision to avoid overflow. Therefore, we propose a fixed-point vector-matrix multiplication protocol in the 61-bit Mersenne prime field, which serves as a natural extension of ABY3~\cite{Mohassel2018ABY3AM}, striking a balance between communication efficiency and precision in Algorithm~\ref{MVMP_T}. In Fig.~\ref{multtoy}, \ding{172} and \ding{174} correspond to fixed-point vector-matrix multiplication.

 \begin{algorithm}
    \caption{Fixed Vector-Matrix Mult $\Pi_{VecMatMult}^{Fixed}$}
    The original vector $\boldsymbol{a}\in\mathbb{F}_p^{u}$ and matrix $A\in\mathbb{F}_p^{u\times v}$. 
    
    \KwIn{$\llbracket \boldsymbol{a}\rrbracket_d\in\mathbb{F}_p^{\lceil 
 u/k\rceil},\llbracket A\rrbracket_d\in\mathbb{F}_p^{\lceil u/k\rceil\times v}$.}
    \KwOut{$\llbracket\boldsymbol{c}\rrbracket_d\in\mathbb{F}_p^{\lceil v/k\rceil},$ where $\boldsymbol{c}=\boldsymbol{a}\cdot A$.}

if $u,v$ is not divisible by $k$, the original vector and matrix can be zero-padded accordingly.
    
    \SetKw{KwCommon}{Common Randomness:}
    \KwCommon{$(\llbracket \boldsymbol{r}\rrbracket_{2d}\in\mathbb{F}_p^{v},\llbracket \boldsymbol{r'}\rrbracket_{d}\in\mathbb{F}_p^{\lceil v/k\rceil})\leftarrow \mathcal{F}_{TruncTriple}$} where $\boldsymbol{r}=\{\boldsymbol{r}_i\}^{kv-1}_{i=0}, \boldsymbol{r'}=\{\boldsymbol{r}'_i=\sum_{j=i\cdot k}^{(i+1)\cdot k-1}\boldsymbol{r}_j/2^{\ell_x}\}^{v-1}_{i=0}.$

     \begin{enumerate}[label=\arabic*., align=left, leftmargin=1.2em, labelwidth=1em, labelsep=0.2em]
        \item  The parties compute $\llbracket \boldsymbol{z}\rrbracket_{2d}=\llbracket \boldsymbol{a}\rrbracket_d\cdot\llbracket A\rrbracket_d+\llbracket \boldsymbol{r}\rrbracket_{2d}$ and send $\llbracket \boldsymbol{z}\rrbracket_{2d}$ to the $P_1$. 
        \item $P_1$ reconstructs the values $\{\boldsymbol{z}_i\}_{i=0}^{kv-1}$, computes $\boldsymbol{f}=\{\boldsymbol{f}_i=(\sum_{j=i\cdot k}^{(i+1)\cdot k-1}\boldsymbol{z}_j)/2^{\ell_x}\}^{v-1}_{i=0}$ and distributs $\llbracket\boldsymbol{f}\rrbracket_d\in\mathbb{F}_d^{\lceil v/k\rceil}$ to other parties.
        \item The parties compute $\llbracket\boldsymbol{c}\rrbracket_d=\llbracket\boldsymbol{f}\rrbracket_d-\llbracket\boldsymbol{r}'\rrbracket_d.$
    \end{enumerate}
    
    \Return $\llbracket\boldsymbol{c}\rrbracket_d.$
    \label{MVMP_T}
\end{algorithm}

We can see clearly that the key is how to generate the random shares $\llbracket\boldsymbol{r}\rrbracket_{2d},\llbracket\boldsymbol{r}'\rrbracket_{d}$. The vectors $\boldsymbol{r}'$ and $\boldsymbol{r}$ satisfy the following relations: $\boldsymbol{r}'=\{\boldsymbol{r}'_i\}^{k-1}_{i=0},\boldsymbol{r}=\{\boldsymbol{r}_i\}^{k^2-1}_{i=0},\boldsymbol{r}'_i=(\sum_{i=j\cdot k}^{(j+1)\cdot k-1}\boldsymbol{r}_i)/2^{\ell_x},j\in[k]$. We formally define the functionality $\mathcal{F}_{TruncTriple}$ in Appendix~\ref{appendixA} and provide a secure protocol implementation in Algorithm~\ref{TT}. 
Specifically, $\llbracket\boldsymbol{r}'\rrbracket_{d}$ and $k$ random packed secret shares $\llbracket \boldsymbol{w}^{i}\rrbracket_d,i\in[k]$ are genereted, where the length of vector $\boldsymbol{w}^i$ is $k$. Then there is the following relation: $\boldsymbol{r}_j=\boldsymbol{w}_{\lfloor j/k\rfloor}^{j\pmod{k}},j\in[k^2]$. Fig.~\ref{truncation} depicts a sample example. It should be noted that $\llbracket\boldsymbol{w}^i\rrbracket_d$ needs to be recombined. The recombination process consists of two steps: i) All secret values in vector $\boldsymbol{w}^i$ are converted into Shamir secret shares $[\boldsymbol{w}^i_j|_i]_d$. Inspired by~\cite{Zhou2024PrivateNN}, we obtain Theorem~\ref{theoremPSS2SS}. We treat a single packed secret share as $k$ distinct Shamir secret shares and apply Theorem~\ref{theoremPSS2SS} to achieve local conversion. We formalize this process in Algorithm~\ref{PSS2SS}; ii) These transformed values are recombined through linear combination with vector $\llbracket E_i\rrbracket_{k-1}$. Although Algorithm~\ref{PSS2SS} performs the conversion of secret shares locally, it results in an increased degree of $2d$. Therefore, a degree reduction is required before recombination by functionality $\mathcal{F}_{DgreeTrans}$. We provide secure protocol implementation for $\mathcal{F}_{RandomBits},\mathcal{F}_{Random},\mathcal{F}_{DegreeT rans}$ in Appendix~\ref{appendixB}.

\begin{proposition}
If $xy\leq 2^k$ for some $k<\ell$, then with probability at least $1-2^{k}/(2^{\ell}-1)$ it holds that $z = \lfloor xy/2^{\ell_x}\rfloor+v$ for some $v\in \{0, 1\}$.
\end{proposition}
\begin{proof}
    The proof of this proposition is given in Appendix~\ref{appendixC}.
\end{proof}

% \begin{theorem}
% \label{theoremVector-MatrixTrunc}
% $\Pi_{VecMatMult}^{Fixed}$ securely realizes $\mathcal{F}_{VecMatMult}^{Fixed}$ in the $\mathcal{F}_{TruncTriple}$-hybrid model with abort, in the presence of a fully semi-honest adversary controlling $t$ corrupted parties.
% \end{theorem}

% \begin{proof}
%     The proof of this theorem is given in Appendix~\ref{appendixC}.
% \end{proof}

\begin{proposition}
\label{theoremPSS2SS}
A Shamir secret share $[x|_a]_d,$ where $f(a)=x,n=2d+1,a\notin\{1,2,\cdots,n\},f$ is a degree-$d$ polynomial. Each of the $n$ parties holds its own share $f(i),i\in\{1,2,\cdots, n\}$. The secret value $x$ can be locally converted to $[x|_b]_{2d}$, with the conversion performed as $f'(i)=f(i)\Pi_{j=1,j\neq i}^n\frac{a-j}{b-j}$.
\end{proposition}
\begin{proof}
    The proof of this theorem is given in Appendix~\ref{appendixC}.
\end{proof}

 \begin{algorithm}
    \caption{Trunc Triple $\Pi_{TruncTriple}$}
    \KwIn{None.}
    \KwOut{$\llbracket \boldsymbol{r}\rrbracket_{2d}\in\mathbb{F}_p^{k},\llbracket\boldsymbol{r'}\rrbracket_d\in\mathbb{F}_p$,} 
    
    where $\boldsymbol{r}=\{\boldsymbol{r}_i\}^{k^2-1}_{i=0},$
    
    $\boldsymbol{r}'=\{\boldsymbol{r}'_i=\sum_{j=i\cdot k}^{(i+1)\cdot k -1}\boldsymbol{r}_j/2^{\ell_x}\}^{k-1}_{i=0}.$

     \begin{enumerate}[label=\arabic*., align=left, leftmargin=1.2em, labelwidth=1em, labelsep=0.2em]
        \item  The parties generete $\ell$ random bits PSS $\llbracket\boldsymbol{b}^i\rrbracket_d\leftarrow\mathcal{F}_{RandomBits}$, where the length of each vector $\boldsymbol{b}^i$ is $k,i\in[\ell]$.
        \item Compute $\llbracket\boldsymbol{r'}\rrbracket_{d}=\sum_{i=\ell_x}^{\ell-1}2^{i-\ell_x}\cdot\llbracket\boldsymbol{b}^i\rrbracket_{d},\llbracket\boldsymbol{q}\rrbracket_{d}=\sum_{i=0}^{\ell-1}2^{i}\cdot\llbracket\boldsymbol{b}^i\rrbracket_{d}.$
        \item Generete $k-1$ random PSS $\llbracket \boldsymbol{w}^i\rrbracket_d\leftarrow\mathcal{F}_{Random},i\in[1,k-1]$, and compute $\llbracket \boldsymbol{w}^0\rrbracket_d=\llbracket \boldsymbol{q}\rrbracket_d-\sum_{i=1}^{k-1}\llbracket \boldsymbol{w}^i\rrbracket_d$, where the bit length of each element $\boldsymbol{w}^i_j$ is $\ell$.
        \item Get $[\boldsymbol{w}^i_j|_{s_{i}}]_{2d},i,j\in[k]$ by $\Pi_{ShConvert}(\llbracket \boldsymbol{w}^i\rrbracket_{d})$(Algorithm~\ref{PSS2SS}) locally and get $[\boldsymbol{w}^i_j|_{s_{i}}]_{2d-k+1}\leftarrow\mathcal{F}_{DegreeTrans}([\boldsymbol{w}^i_j|_{s_{i}}]_{2d})$.
        \item Compute $\llbracket\boldsymbol{r}\rrbracket_{2d}=\sum_{i=0}^{k-1}\llbracket E_i\rrbracket_{k-1}\cdot[\boldsymbol{w}^i_j|_{s_{i}}]_{2d-k+1}.$
    \end{enumerate}

    \Return $\llbracket \boldsymbol{r}\rrbracket_{2d},\llbracket\boldsymbol{r'}\rrbracket_d.$
    \label{TT}
\end{algorithm}

\begin{figure}[t]
    \centering
    \includegraphics[width=1\linewidth]{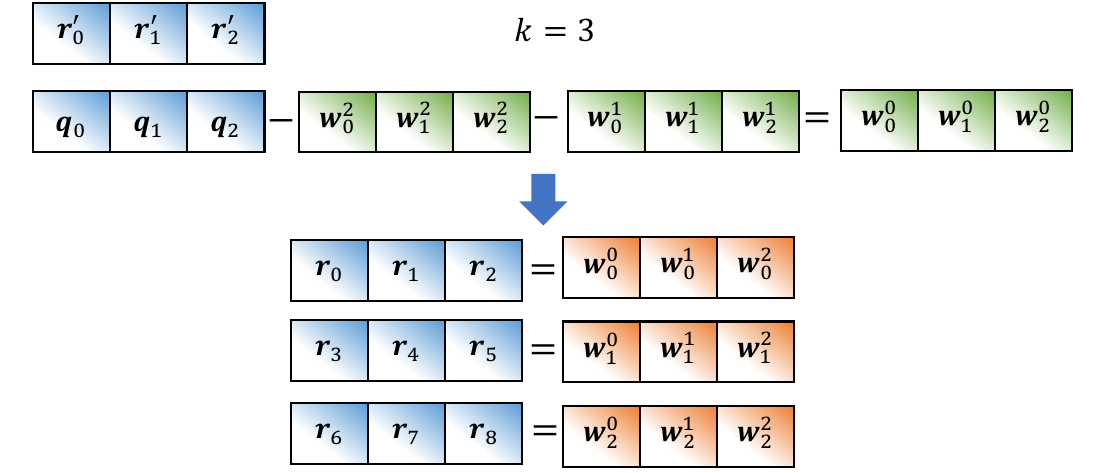}
    \caption{A sample example. $\boldsymbol{q}_i/2^{\ell_x}=\boldsymbol{r}'_i,i\in[3],\boldsymbol{q}_0=(\boldsymbol{r}_0+\boldsymbol{r}_1+\boldsymbol{r}_2),\boldsymbol{q}_1=(\boldsymbol{r}_3+\boldsymbol{r}_4+\boldsymbol{r}_5),\boldsymbol{q}_2=(\boldsymbol{r}_6+\boldsymbol{r}_7+\boldsymbol{r}_8).$}
    \label{truncation}
\end{figure}

% \begin{theorem}
% \label{theoremTruncTriple}
% $\Pi_{TruncTriple}$ securely realizes $\mathcal{F}_{TruncTriple}$ in the $(\mathcal{F}_{RandomBits},\mathcal{F}_{Random},\mathcal{F}_{DegreeT rans})$-hybrid model with abort, in the presence of a fully semi-honest adversary controlling $t$ corrupted parties.
% \end{theorem}
% \begin{proof}
%     The proof of this theorem is given in Appendix~\ref{appendixC}.
% \end{proof}

 \begin{algorithm}
    \caption{$\Pi_{ShConvert}$}
    \KwIn{$\llbracket \boldsymbol{x}\rrbracket_d,$ where $\llbracket \boldsymbol{x}\rrbracket_d$ can be regarded as $k$ distinct Shamir secret shares at position $\boldsymbol{s}_0$ to $\boldsymbol{s}_{k-1}$.}
    \KwOut{$[\boldsymbol{x}_v|_{\boldsymbol{s}_v}]_{2d},v\in[k]$.}

     \begin{enumerate}[label=\arabic*., align=left, leftmargin=1.2em, labelwidth=1em, labelsep=0.2em]
        \item Each party computes locally \For{$v \leftarrow 0$ \KwTo $k-1$}{
    
    $[\boldsymbol{x}_v|_b]_{2d} \leftarrow [\boldsymbol{x}_v|_{\boldsymbol{s}_v}]_{d}\cdot\Pi_{j=1,j\neq i}^n\frac{\boldsymbol{s}_{v}-j}{b-j},i\in[1,n]$\;
  }
    \end{enumerate}

    \Return $[\boldsymbol{x}_v|_b]_{2d}.$
    \label{PSS2SS}
\end{algorithm}

To sum up, given a vector of length $u$ and a $u$-by-$v$ matrix, the online complexity is the same as protocol $\Pi_{VecMatMult}$. The offline complexity of this protocol depends on the generation of the random bits, random values and degree transformation, that is 5 rounds and $(n-1)(\frac{4\ell+k\ell+2k^2}{n-t}+\frac{4\ell+2k^2}{n}+k\ell+\ell)$ field elements amortized per party for a truncation triple.

\section{Efficient Convolutions}
\label{sec::Conv}
In this section, we first present the design of the convolution computation protocol in \S\ref{PD-DP}. Then, we introduce the connections between layers in \S\ref{PT}.

\subsection{Protocol Design}
\label{PD-DP}
FC is exactly a vector-matrix multiplication. Similarly, a Conv~\S\ref{NN} can also be expressed as a matrix multiplication that can be achieved through multiple vector-matrix multiplications. In light of the above analysis, when zero
padding needs to be applied to the convolutional input, the combination process incurs
significant communication overhead.

To avoid significant communication cost, we present our efficient convolution computation protocol by leveraging the inherent parallelizability of convolution computation. To be specific, the input is convolved with each filter in either a single-input multiple-output or multiple-input multiple-output convolution process as shown in Fig.~\ref{CNNcompute}. In light of this, we propose a filter packing approach, i.e., we pack $k$ values that are selected respectively from the same position of $k$ filters into a single packed secret share. Correspondingly, we also pack $k$ copies of the input into a single packed secret share. Fig.~\ref{CNNp} depicts a sample example. Then we can complete convolution computation by using a parallel matrix multiplication protocol. Further, when zero padding is applied to the input of convolutional operations, it can be performed locally without communication since we can directly pack $k$ zero values into a single packed secret share.

We directly present the protocol for parallel fixed-point matrix multiplication in Algorithm~\ref{PMatMult} and formally define the functionality $\mathcal{F}_{PMatMult}^{Fixed}$ in Appendix~\ref{appendixA}. In offline Step 2, it holds $R_{\alpha,\beta}=\sum_{i=0}^{\ell-1}2^i\cdot R''^i_{\alpha,\beta}\cdot R''^i_{\alpha,\beta},$ since $R''^i_{\alpha,\beta}\in \{0,1\}^k.$ In addition, $\llbracket \boldsymbol{0}\rrbracket_d$ can be generated by PRG without communication. 

To sum up, given a $u$-by-$v$ matrix and a $v$-by-$m$ matrix, the online complexity is 1 round with $\frac{2um}{k}$ field elements amortized per party. The offline complexity is 4 rounds and $\frac{um\ell}{k}(\frac{3n}{n+2k-1}+2+\frac{n+2k}{2n})$ field elements amortized per party.
\begin{algorithm}
    \caption{Fixed Parallel Matrix Mult $\Pi_{PMatMult}^{Fixed}$}
  $2k$ original matrices $A^i\in\mathbb{F}_p^{u\times v},B^i\in\mathbb{F}_p^{v\times m},i\in[k]$. $\llbracket A\rrbracket_d\in\mathbb{F}_p^{u\times v}$ and $\llbracket B\rrbracket_d\in\mathbb{F}_p^{v\times m}$ pack all $A^i$ and all $B^i$, respectively. All $k$ $A^i_{\alpha,\beta},\alpha\in[u],\beta\in[v]$ are packed into a packed secret share $\llbracket A_{\alpha,\beta}\rrbracket_d\in\mathbb{F}_{p}$. The case for all $B^i_{\alpha,\beta},\alpha\in[v],\beta\in[m]$ follow similarly.

    \KwIn{$\llbracket A\rrbracket_d\in\mathbb{F}_p^{u\times v},\llbracket B\rrbracket_d\in\mathbb{F}_p^{v\times m}$.}
     \KwOut{ $\llbracket C\rrbracket_d\in\mathbb{F}_p^{u\times m}$, where $C^i=A^i\cdot B^i$ and $\llbracket C_{\alpha,\beta}\rrbracket_d\in\mathbb{F}_p$ packs $k$ values $\boldsymbol{c}_i,i\in[k]$ and $\boldsymbol{c}_i$ is inner product of the $\alpha$-th row of $A^i$ and the $\beta$-th column of $B^i$.}

    \SetKw{KwOffline}{Offline:}
    \KwOffline{}

     \begin{enumerate}[label=\arabic*., align=left, leftmargin=1.2em, labelwidth=1em, labelsep=0.2em]
        \item  Generate $u\cdot m\cdot\ell$ random bits PSS $\llbracket R''^i_{\alpha,\beta} \rrbracket_d\leftarrow\mathcal{F}_{RandomBits},\alpha\in[u],\beta\in[m],i\in[\ell]$, where the length of each $R''^i_{\alpha,\beta}$ is $k$.
        \item Get $\llbracket R_{\alpha,\beta}\rrbracket_{2d}=\sum_{i=0}^{\ell-1}2^{i}\cdot\llbracket R''^i_{\alpha,\beta}\rrbracket_d\cdot\llbracket R''^i_{\alpha,\beta}\rrbracket_d+\llbracket \boldsymbol{0}\rrbracket_d.$
        \item Get $\llbracket R'_{\alpha,\beta}\rrbracket_{d}=\sum_{i=\ell-\ell_x}^{\ell-1}2^{i-\ell_x}\cdot\llbracket R''^i_{\alpha,\beta}\rrbracket_d.$
    
    \end{enumerate}

    \SetKw{KwOnline}{Online:}
    \KwOnline{}
         \begin{enumerate}[label=\arabic*., align=left, leftmargin=1.2em, labelwidth=1em, labelsep=0.2em]
        \item Compute $\llbracket C\rrbracket_{2d}=\llbracket A \rrbracket_d\cdot \llbracket B \rrbracket_d+\llbracket R\rrbracket_d$ and send $\llbracket C\rrbracket_{2d}$ to the $P_1$.
        \item $P_1$ reconstructs the matrix $C$, computes $C'=C/2^{\ell_x}$ and distributes $\llbracket C'\rrbracket_d$ to the other parties.
        \item  Compute $\llbracket C\rrbracket_d=\llbracket C'\rrbracket_d-\llbracket R'\rrbracket_d.$
    \end{enumerate}
    \Return $\llbracket C\rrbracket_d.$
    \label{PMatMult}
\end{algorithm}

% \begin{theorem}
% \label{theoremPMatMult}
% $\Pi_{PMatMult}^{Fixed}$ securely realizes $\mathcal{F}_{PMatMult}^{Fixed}$ in the $\mathcal{F}_{RandomBits}$-hybrid model with abort, in the presence of a fully semi-honest adversary controlling $t$ corrupted parties.
% \end{theorem}
% \begin{proof}
%     The proof of this theorem is given in Appendix~\ref{appendixC}.
% \end{proof}

\subsection{Inter-layer Connections}
\label{PT}
The number of output channels in a convolutional layer is determined by the number of filters. Each filter produces an independent output channel. After accomplishing the convolution computation by using our parallel matrix multiplication protocol, the $k$ values that are selected respectively from the same position of $k$ output channels are packed into a single packed secret share as shown in Fig.~\ref{CNNp}. It should be noted that the results may require further processing before being propagated to the next layer. 

When the next layer performs a non-linear operation, the result can be directly utilized as input since non-linear layer operations are performed element-wise~\S\ref{sec::nonline}. 

When the next layer is an FC layer, seamless integration can be achieved without requiring any communication. The reason is that the results of the convolution computation require flattening to a vector and the flattened vector may be regarded as already packed through block-wise packing approach in~\S\ref{sec::Vector-Matrix Multiplication}. Correspondingly, we only need to pack the weight matrix according to the index of each element of the flattened vector. 

When the next layer still performs a Conv operation, a pack transformation is required for the result of the convolution computation since we need to pack $k$ copies of the input into a single packed secret share. To be specific, the pack transformation is that given a PSS $\llbracket\boldsymbol{x}\rrbracket_d$, it outputs $\llbracket\boldsymbol{x}^i\rrbracket_d$, where $\boldsymbol{x}$ and $\boldsymbol{x}^i$ are two vectors of length $k,\boldsymbol{x}^i=(\boldsymbol{x}_i,\cdots,\boldsymbol{x}_i),i\in[k]$. Fig.~\ref{CNNp} presents an example. We formally define the functionality $\mathcal{F}_{PackTrans}$ in Appendix~\ref{appendixA} and provide a secure protocol implementation in Algorithm~\ref{PackTrans}. In the offline, we generate $\llbracket \boldsymbol{r}'^i\rrbracket_d,\llbracket \boldsymbol{r}\rrbracket_d,$ where $\llbracket\boldsymbol{r}\rrbracket_d=(\boldsymbol{r}_0,\cdots,\boldsymbol{r}_{k-1}),\llbracket\boldsymbol{r}'^i\rrbracket_d=(\boldsymbol{r}_i,\cdots,\boldsymbol{r}_{i}),i\in[k].$ In the online, parties compute $\llbracket\boldsymbol{z}\rrbracket_d=\llbracket\boldsymbol{x}\rrbracket_d+\llbracket\boldsymbol{r}\rrbracket_d$ and reveal the vector $\boldsymbol{z}$ to $P_1$. Then, $P_1$ sets $\boldsymbol{z}^i=(\boldsymbol{z}_i,\cdots,\boldsymbol{z}_i)$ and distributes $\llbracket\boldsymbol{z}^i\rrbracket_d$. Finally, parties compute $\llbracket \boldsymbol{x}^i \rrbracket_d=\llbracket\boldsymbol{z}^i\rrbracket_d-\llbracket\boldsymbol{r}'^i\rrbracket_d$.

To sum up, the online complexity is 1 round and $(n-1)(\frac{1}{2}+k)$ field elements amortized per party. The offline complexity is 2 rounds and  $(n-1)(\frac{2+k}{n-t}+\frac{2}{n})$ field elements amortized per party.
\begin{figure*}
    \centering
    \includegraphics[width=1\linewidth]{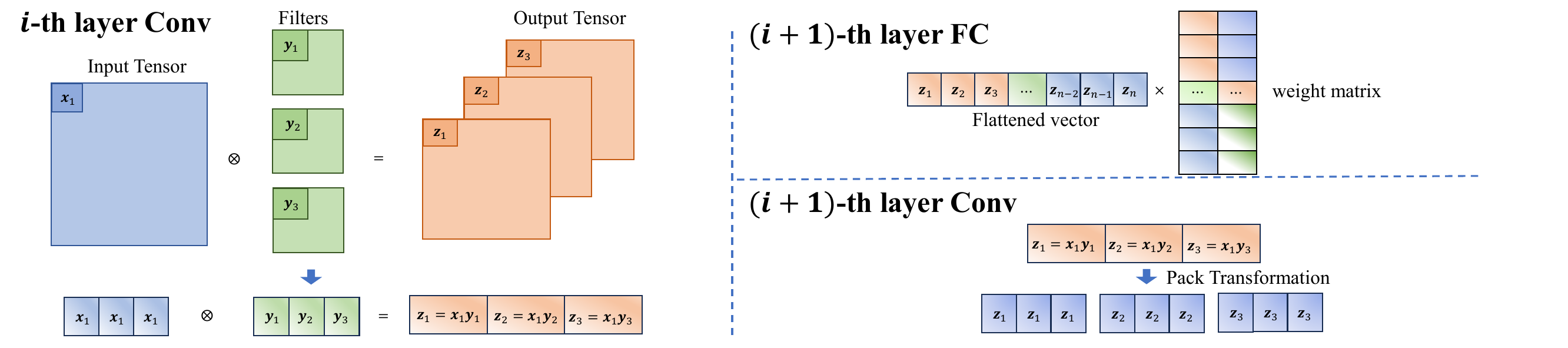}
    \caption{A toy example where $k=3$. Each pixel of the input tensor needs to be multiplied by each filter. $\boldsymbol{y}_1,\boldsymbol{y}_2,\boldsymbol{y}_3$ are packed into a single packed secret share. Three copies of $\boldsymbol{x}_1$ are packed into a single packed secret share. If the next layer is still a Conv operation, the result $\llbracket \boldsymbol{z}\rrbracket_{d}$ needs to be converted to $\llbracket \boldsymbol{z}^i\rrbracket_{d},i\in\{1,2,3\},\boldsymbol{z}=(\boldsymbol{z}_1,\boldsymbol{z}_2,\boldsymbol{z}_3),\boldsymbol{z}^1=(\boldsymbol{z}_1,\boldsymbol{z}_1,\boldsymbol{z}_1),\boldsymbol{z}^2=(\boldsymbol{z}_2,\boldsymbol{z}_2,\boldsymbol{z}_2),\boldsymbol{z}^3=(\boldsymbol{z}_3,\boldsymbol{z}_3,\boldsymbol{z}_3)$ by pack transformation protocol. If the next layer is an FC operation, the result $\llbracket \boldsymbol{z}\rrbracket_{d}$ can become a flattened vector seamlessly and the weight matrix can be packed according to the index of each element
of the flattened vector.}
    \label{CNNp}
\end{figure*}

\begin{algorithm}
    \caption{Pack Transformation $\Pi_{PackTrans}$}
    \KwIn{$\llbracket \boldsymbol{x}\rrbracket_d\in\mathbb{F}_p$.}
    \KwOut{ $\llbracket \boldsymbol{x}^i\rrbracket_d\in\mathbb{F}_p,i\in[k]$, where $\boldsymbol{x}^i$ is a vector of length $k$ and each element is equivalent to $\boldsymbol{x}_i$.}

    \SetKw{KwOffline}{Offline:}
    \KwOffline{}

     \begin{enumerate}[label=\arabic*., align=left, leftmargin=1.2em, labelwidth=1em, labelsep=0.2em]
        \item Generare $\llbracket \boldsymbol{r}'^i\rrbracket_d\leftarrow\mathcal{F}_{Random}$, where $\llbracket\boldsymbol{r}'^i\rrbracket_d=(\boldsymbol{r}_i,\cdots,\boldsymbol{r}_{i}),\boldsymbol{r}_i$ is a random value.
        \item  Compute $\llbracket \boldsymbol{r} \rrbracket_{d+k-1}=\sum_{i=0}^{k-1}\llbracket E_{i}\rrbracket_{k-1}\cdot\llbracket \boldsymbol{r}'^i\rrbracket_d,$ where $\boldsymbol{r}=(\boldsymbol{r}_0,\cdots,\boldsymbol{r}_{k-1}).$
        \item Get $\llbracket \boldsymbol{r} \rrbracket_d\leftarrow \mathcal{F}_{DegreeTrans}(\llbracket \boldsymbol{r} \rrbracket_{d+k-1}).$
    \end{enumerate}

    \SetKw{KwOnline}{Online:}
    \KwOnline{}
    
     \begin{enumerate}[label=\arabic*., align=left, leftmargin=1.2em, labelwidth=1em, labelsep=0.2em]
        \item Compute $\llbracket \boldsymbol{z}\rrbracket_{d}=\llbracket \boldsymbol{x}\rrbracket_d+\llbracket \boldsymbol{r}\rrbracket_d$ and send $\llbracket \boldsymbol{z}\rrbracket_{d}$ to the $P_1$.
        \item $P_1$ reconstructs the vector $\boldsymbol{z}=(\boldsymbol{z}_0,\cdots,\boldsymbol{z}_{k-1})$, sets $\boldsymbol{z}^i=(\boldsymbol{z}_i,\boldsymbol{z}_i,\cdots,\boldsymbol{z}_i)$ and distributes $\llbracket \boldsymbol{z}^i\rrbracket_d$ to the other parties.
        \item Compute $\llbracket \boldsymbol{x}^i\rrbracket_d=\llbracket \boldsymbol{z}^i\rrbracket_d-\llbracket \boldsymbol{r'}^i\rrbracket_d,i\in[k].$
    \end{enumerate}
    \Return $\llbracket \boldsymbol{x}^i\rrbracket_d.$
    \label{PackTrans}
\end{algorithm}

% \begin{theorem}
% \label{theoremPackTrans}
% $\Pi_{PackTrans}$ securely realizes $\mathcal{F}_{PackTrans}$ in the $(\mathcal{F}_{DegreeTrans},\mathcal{F}_{Random})$-hybrid model with abort, in the presence of a fully semi-honest adversary controlling $t$ corrupted parties.
% \end{theorem}

% \begin{proof}
%     The proof of this theorem is given in Appendix~\ref{appendixC}.
% \end{proof}

\section{Non-Linear Layers}
\label{sec::nonline}
In this section, we introduce our efficient protocols for non-linear operations such as parallel prefix-ORs and bitwise less than in \S\ref{Prefix-ORs-subsection}, parallel derivative of ReLU (DReLU) and ReLU in \S\ref{DReLU-subsection}, and parallel Maxpool in \S\ref{Maxpool-subsection}.

\subsection{Parallel Prefix-ORs and Bitwise Less Than}
\label{Prefix-ORs-subsection}
We first describe our method to compute the prefix-ORs, which is the core and bottleneck to realizing the bitwise less-than functionality. The prefix-ORs operation ($b_j=\lor_{i=0}^{j}a_i,a_i\in\{0,1\},j\in[\ell]$) can be implemented via its dual problem, that is, $\bar{b}_j=\land_{i=0}^{j}\bar{a}_i.$ It is worth noting that LXY24 designs a protocol for prefix-ORs with only one round of communication; however,~\cite{zhang2025helix} identified a privacy leakage issue in that protocol. In this work, we avoid this privacy leakage by utilizing functionality $\mathcal{F}_{PDN}$ and provide a secure protocol implementation for parallel prefix-ORs in Algorithm~\ref{Preor}. $\mathcal{F}_{PreMult}$ denotes parallel prefix multiplication functionality and we formally define the functionality in Appendix~\ref{appendixA}. Further, $\mathcal{F}_{PreMult}$ can be securely implemented by directly employing $\mathcal{F}_{PDN}$ as a subroutine. As illustrated in Fig.~\ref{premult}, an 8-input prefix multiplication is depicted, where \ieeediamond{} denotes a multiplication gate.

To sum up, the online communication complexity is $\lceil \text{log}_2(\ell)\rceil$ rounds with $\frac{\lceil\ell \text{log}_2(\ell)\rceil}{k}$ field elements amortized per party. The offline communication complexity is 1 rounds with $\frac{n\ell}{k(n-t)}\lceil \text{log}_2(\ell)\rceil$ field elements amortized per party.

\begin{algorithm}[t]
    \caption{Prefix-ORs $\Pi_{PreOR}$}
    \KwIn{$(\llbracket \boldsymbol{a}^0\rrbracket_d,\cdots,\llbracket \boldsymbol{a}^{\ell-1}\rrbracket_d)$, where $\boldsymbol{a}^i$ is a bit vector of length $k,\llbracket \boldsymbol{a}^i\rrbracket_d\in\mathbb{F}_p,\boldsymbol{a}^i=(\boldsymbol{a}^i_0,\cdots,\boldsymbol{a}^i_{k-1})$.}
    \KwOut{ All prefix ORs $\llbracket\boldsymbol{a}^0\rrbracket_d,\llbracket\boldsymbol{a}^0\lor\boldsymbol{a}^1\rrbracket_d,\cdots, \llbracket\lor_{i=0}^{\ell-1}\boldsymbol{a}^i\rrbracket_d$, where $\lor_{i=0}^{\ell-1}\boldsymbol{a}^i=(\lor_{i=0}^{\ell-1}\boldsymbol{a}^i_0,\cdots,\lor_{i=0}^{\ell-1}\boldsymbol{a}^i_{k-1}).$}

     \begin{enumerate}[label=\arabic*., align=left, leftmargin=1.2em, labelwidth=1em, labelsep=0.2em]
        \item  Compute $\llbracket \boldsymbol{b}^i\rrbracket_d=1-\llbracket\boldsymbol{a}^i\rrbracket_d$.
        \item  Get $\llbracket \boldsymbol{c}^i\rrbracket_d\leftarrow\mathcal{F}_{PreMult}(\llbracket \boldsymbol{b}^0\rrbracket_d,\cdots,\llbracket \boldsymbol{b}^{\ell-1}\rrbracket_d)$.
        \item Compute $1-\llbracket \boldsymbol{c}^0\rrbracket_d,\cdots,1-\llbracket \boldsymbol{c}^{\ell-1}\rrbracket_d$.
    \end{enumerate}
    
    \Return $\llbracket\boldsymbol{a}^0\rrbracket_d,\llbracket\boldsymbol{a}^0\lor\boldsymbol{a}^1\rrbracket_d,\cdots, \llbracket\lor_{i=0}^{\ell-1}\boldsymbol{a}^i\rrbracket_d.$
    
    \label{Preor}
\end{algorithm}

% \begin{theorem}
% \label{theoremPreOR}
% $\Pi_{PreOR}$ securely realizes $\mathcal{F}_{PreOR}$ in the $\mathcal{F}_{PreMult}$-hybrid model with abort, in the presence of a fully semi-honest adversary controlling $t$ corrupted parties.
% \end{theorem}
% \begin{proof}
%     The proof of this theorem is given in Appendix~\ref{appendixC}.
% \end{proof}

Based on PSS, multiple bitwise less-than circuits can be securely computed in parallel. Specifically, given $k$ public values $a_0,\cdots,a_{k-1}$ and $\ell$ shared bits vectors $\llbracket\boldsymbol{b}^i\rrbracket_d,i\in[\ell],\boldsymbol{b}^i\in\{0,\cdots,1\}^k$, it outputs the share of $(a_j<b_j),b_j=\sum_{i=0}^{\ell-1}2^i\boldsymbol{b}^i_j,j\in[k]$. Our protocol $\Pi_{Bitwise-LT}$ follows the
 method in LXY24, with the substitution of our secure prefix-ORs subprotocol $\Pi_{PreOR}$. Algorithm~\ref{BitLT} depicts the protocol $\Pi_{Bitwise-LT}$. In Step 4, the XOR operation can be computed by $\llbracket \boldsymbol{a}^i \rrbracket_{k-1}\oplus\llbracket \boldsymbol{b}^i \rrbracket_{d}=\llbracket \boldsymbol{a}^i \rrbracket_{k-1}+\llbracket \boldsymbol{b}^i \rrbracket_{d}-2\cdot\llbracket \boldsymbol{a}^i \rrbracket_{k-1}\cdot\llbracket \boldsymbol{b}^i \rrbracket_{d}.$ We formally define the functionality $\mathcal{F}_{Xor}$ in Appendix~\ref{appendixA} and provide a secure protocol implementation in Appendix~\ref{appendixB}.
 
To sum up, the online communication complexity is $\lceil \text{log}_2(\ell)\rceil+2$ 
rounds with $\frac{2\ell+2+\lceil\ell \text{log}_2(\ell)\rceil}{k}$ field elements amortized per party. The offline
communication complexity is 1 rounds with $\frac{n}{k(n-t)}(4+\lceil \ell \text{log}_2(\ell)\rceil)$ field elements amortized per party.

% \begin{theorem}
% \label{theoremBitwiseLT}
% $\Pi_{Bitwise-LT}$ securely realizes $\mathcal{F}_{Bitwise-LT}$ in the $(\mathcal{F}_{PreOR},\mathcal{F}_{DegreeT rans},\mathcal{F}_{Xor})$-hybrid model with abort, in the presence of a fully semi-honest adversary controlling $t$ corrupted parties.
% \end{theorem}
% \begin{proof}
%     The proof of this theorem is given in Appendix~\ref{appendixC}.
% \end{proof}

\begin{figure}
    \centering
    \includegraphics[width=0.8\linewidth]{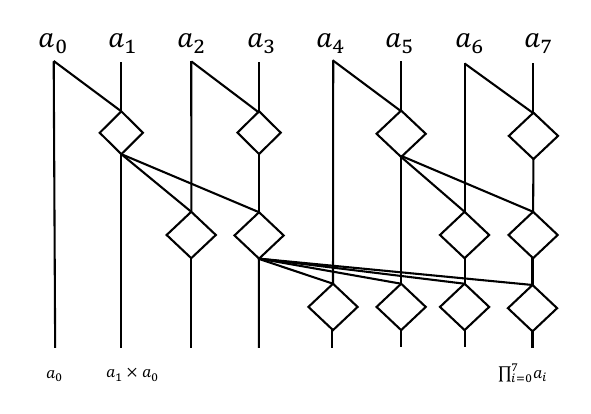}
    \caption{An 8-input prefix multiplication. \protect\ieeediamond{}  denotes a multiplication gate that can be securely implemented by $\mathcal{F}_{PDN}$.}
    \label{premult}
\end{figure}

\begin{algorithm}
    \caption{Bitwise Less Than $\Pi_{Bitwise-LT}$}
    \KwIn{$k$ public values $a_0,\cdots,a_{k-1}$, $\ell$ shared bits vectors $\llbracket\boldsymbol{b}^i\rrbracket_d,i\in[\ell],\boldsymbol{b}^i\in\{0,\cdots,1\}^k,b_j=\sum_{i=0}^{\ell-1}2^i\boldsymbol{b}^i_j$.}
    \KwOut{$\llbracket a_j<b_j \rrbracket_d,j\in[k]$}

    \begin{enumerate}[label=\arabic*., align=left, leftmargin=1.2em, labelwidth=1em, labelsep=0.2em]
        \item   Let $a_{0,0},\cdots,a_{0,\ell-1},\cdots,a_{k-1,0},\cdots,a_{k-1,\ell-1}$ be the decomposed bits of $a_0,\cdots,a_{k-1}$.
        \item Set $\bar{a}_{j,i}=1-a_{j,i}$ and $\llbracket\bar{\boldsymbol{b}}^i\rrbracket_d=1-\llbracket\boldsymbol{b}^i\rrbracket_d,i\in[\ell],j\in[k].$
        \item Compute $\llbracket \boldsymbol{a}^i \rrbracket_{k-1}$, where $\boldsymbol{a}^i=(\bar{a}_{0,i},\cdots,\bar{a}_{k,i}),i\in[\ell]$.
        \item Compute $\llbracket \boldsymbol{c}^i \rrbracket_{d}\leftarrow\mathcal{F}_{Xor}(\llbracket \boldsymbol{a}^i \rrbracket_{k-1},\llbracket \bar{\boldsymbol{b}}^i \rrbracket_{d})$.
        \item Get $\llbracket \boldsymbol{f}^{\ell-1} \rrbracket_{d},\cdots,\llbracket \boldsymbol{f}^0 \rrbracket_{d}\leftarrow\mathcal{F}_{PreOR}(\llbracket \boldsymbol{c}^{\ell-1} \rrbracket_{d},\cdots,\llbracket \boldsymbol{c}^0 \rrbracket_{d})$.
        \item Set $\llbracket \boldsymbol{h}^{i} \rrbracket_{d}=\llbracket \boldsymbol{f}^i \rrbracket_{d}-\llbracket \boldsymbol{f}^{i+1} \rrbracket_{d}$, where $\llbracket \boldsymbol{h}^{\ell-1} \rrbracket_{d}=\llbracket \boldsymbol{f}^{\ell-1} \rrbracket_{d}.$
        \item Compute $\llbracket a_j<b_j \rrbracket_{d+k-1}=\sum_{i=0}^{\ell-1}\llbracket \boldsymbol{a}^i\rrbracket_{k-1}\cdot\llbracket \boldsymbol{h}^i\rrbracket_{d}.$
        \item Output $\llbracket a_j<b_j \rrbracket_{d}\leftarrow \mathcal{F}_{DegreeTrans}(\llbracket a_j<b_j \rrbracket_{d+k-1}),j\in[k].$
    \end{enumerate}

    \Return  $\llbracket a_j<b_j \rrbracket_d,j\in[k]$.
    
    \label{BitLT}
\end{algorithm}

\subsection{Parallel DReLU and ReLU}
\label{DReLU-subsection}
 ReLU is a common activation function in neural networks. It can be represented as $\text{ReLU}(x) =\text{Max}(0, x)$ and the derivative of ReLU (DReLU) is 1 for $x\geq0$ and 0 for $x<0$.
\begin{center}
    DReLU$(x)= \left\{ \begin{array}{ll}
1,&x \geq 0 \\
0,&x < 0
\end{array} \right.$
\end{center}
Since integers are represented in binary two’s complement form, the most significant bit of an integer is related to DReLU. Furthermore, Mersenne primes are necessarily odd, which implies that $\text{MSB}(a)=\text{LSB}(2a)$ according to~\cite{Wagh2019SecureNN3S}. Therefore, it follows that $\text{DReLU}(a)=1-\text{LSB}(2a)$. As in~\cite{Wagh2019SecureNN3S,Liu2024ScalableMC}, in order to compute securely $\text{LSB}(2a)$, we mask $2a$ with a random value $r$ such that $y=2a+r$ and reveal $y$. Consequently, $\text{LSB}(y)=\text{LSB}(2a)\oplus\text{LSB}(r)$ if $y\geq r$, and $\text{LSB}(y)=\text{LSB}(2a)\oplus\text{LSB}(r)\oplus1$ otherwise and we can get $\text{LSB}(2a)=\text{LSB}(y)\oplus\text{LSB}(r)\oplus(y<r)$.

We employ the PSS to enable parallel computation of multiple DReLU.  Specifically, given $k$ public values $a_0,\cdots,a_{k-1}$, it outputs the of $\text{DReLU}(a_i),i\in[k]$. We provide a secure protocol implementation in Algorithm~\ref{DReLU}. 

To sum up, the online communication complexity is $\lceil \text{log}_2(\ell)\rceil+5$ 
rounds with $\frac{\ell\mathrm{log}(\ell)+2\ell+7+k}{k}$ field elements amortized per party. The offline
communication complexity is 4 rounds with $\frac{n}{k}(\frac{1}{n-t}(13+\lceil \ell \text{log}_2(\ell)\rceil)+\frac{4}{n}+k+1)$ field elements amortized per party.

Note that $\text{ReLU}(a)=\text{DReLU}(a)\cdot a$. So we can easily implement parallel computation of multiple $\text{ReLU}$ functions. The secure protocol is in Algorithm~\ref{ReLU}.  

To sum up, the online communication complexity is $\lceil \text{log}_2(\ell)\rceil+6$ 
rounds with $\frac{\ell\mathrm{log}(\ell)+2\ell+9+k}{k}$ field elements amortized per party. The offline
communication complexity is 4 rounds with $\frac{n}{k}(\frac{1}{n-t}(15+\lceil \ell \text{log}_2(\ell)\rceil)+\frac{4}{n}+k+1)$ field elements amortized per party.

\begin{algorithm}
    \caption{DReLU $\Pi_{DReLU}$}
    \KwIn{$\llbracket\boldsymbol{a}\rrbracket_d,\boldsymbol{a}$ is a vector with length $k$.}
    \KwOut{$\llbracket \text{DReLU}(\boldsymbol{a}) \rrbracket_d.$}

    \SetKw{KwOffline}{Offline:}
    \KwOffline{}

    \begin{enumerate}[label=\arabic*., align=left, leftmargin=1.2em, labelwidth=1em, labelsep=0.2em]
        \item  Generate $\ell$ packed bit secret shares $\llbracket \boldsymbol{r}^i\rrbracket_d\leftarrow\mathcal{F}_{RandomBits}, i\in[\ell]$, where $\boldsymbol{r}^i$ is a bit vector with length $k$.
        \item Compute $\llbracket \boldsymbol{r}\rrbracket_d=\sum_{i=0}^{\ell-1}2^i\cdot\llbracket \boldsymbol{r}^i\rrbracket_d,$ where $\boldsymbol{r}_j=\sum_{i=0}^{\ell-1}2^i\cdot\boldsymbol{r}^i_j,j\in[k].$
     
    \end{enumerate}

    \SetKw{KwOnline}{Online:}
    \KwOnline{}

    \begin{enumerate}[label=\arabic*., align=left, leftmargin=1.2em, labelwidth=1em, labelsep=0.2em]
        \item  Compute $\llbracket\boldsymbol{y}\rrbracket_d=2\llbracket\boldsymbol{a}\rrbracket_d+\llbracket\boldsymbol{r}\rrbracket_d$, send $\llbracket\boldsymbol{y}\rrbracket_d$ to $P_1$ and $P_1$ reconstructs and opens $\boldsymbol{y}=(\boldsymbol{y_0},\cdots,\boldsymbol{y}_{k-1}).$
        \item Compute Let $\boldsymbol{y}_{0,0},\cdots,\boldsymbol{y}_{0,\ell-1}, \cdots, \boldsymbol{y}_{k-1,0}, \cdots, \boldsymbol{y}_{k-1,\ell-1}$ be the decomposed bits of $\boldsymbol{y}_0, \cdots, \boldsymbol{y}_{k-1}$.
        \item Compute $\llbracket \boldsymbol{y}^0 \rrbracket_{k-1}$, where $\boldsymbol{y}^0=(\boldsymbol{y}_{0,0}, \cdots, \boldsymbol{y}_{k,0})$.
        \item Compute $\llbracket \boldsymbol{b}\rrbracket_{d}\leftarrow\mathcal{F}_{Xor}(\llbracket \boldsymbol{y}^0 \rrbracket_{k-1},\llbracket \boldsymbol{r}^0 \rrbracket_{d})$.
        \item Get $\llbracket \boldsymbol{c}\rrbracket_{d}\leftarrow\mathcal{F}_{Bitwise-LT}(\boldsymbol{y}_0, \cdots, \boldsymbol{y}_{k-1},\llbracket \boldsymbol{r}^i \rrbracket_{d}),i\in[\ell]$.
        \item Compute $\llbracket \boldsymbol{f}\rrbracket_{d}\leftarrow\mathcal{F}_{Xor}(\llbracket \boldsymbol{b} \rrbracket_{d},\llbracket \boldsymbol{c} \rrbracket_{d})$.
        \item Output $1-\llbracket \boldsymbol{f}\rrbracket_{d}.$
     
    \end{enumerate}
    \Return  $\llbracket \text{DReLU}(\boldsymbol{a}) \rrbracket_d.$
    
    \label{DReLU}
\end{algorithm}

% \begin{theorem}
% \label{theoremDReLU}
% $\Pi_{DReLU}$ securely realizes $\mathcal{F}_{DReLU}$ in the $(\mathcal{F}_{Bitwise-LT}, \mathcal{F}_{Xor})$-hybrid model with abort, in the presence
% of a fully semi-honest adversary controlling $t$ corrupted parties.
% \end{theorem}
% \begin{proof}
%     The proof of this theorem is given in Appendix~\ref{appendixC}.
% \end{proof}

\begin{algorithm}
    \caption{ReLU $\Pi_{ReLU}$}
    \KwIn{$\llbracket\boldsymbol{a}\rrbracket_d,\boldsymbol{a}$ is a vector with length $k$.}
    \KwOut{$\llbracket \text{ReLU}(\boldsymbol{a}) \rrbracket_d.$}
    \begin{enumerate}[label=\arabic*., align=left, leftmargin=1.2em, labelwidth=1em, labelsep=0.2em]
        \item Get $\llbracket\boldsymbol{b}\rrbracket_d\leftarrow\mathcal{F}_{DReLU}(\boldsymbol{a}).$
        \item Output $\llbracket\text{ReLU}(\boldsymbol{a})\rrbracket_d\leftarrow\mathcal{F}_{PDN}(\llbracket\boldsymbol{a}\rrbracket_d,\llbracket\boldsymbol{b}\rrbracket_d).$
    \end{enumerate}
    
    \Return  $\llbracket \text{ReLU}(\boldsymbol{a}) \rrbracket_d.$
    
    \label{ReLU}
\end{algorithm}

% \begin{theorem}
% \label{theoremReLU}
% $\Pi_{ReLU}$ securely realizes $\mathcal{F}_{ReLU}$ in the $(\mathcal{F}_{DReLU},\mathcal{F}_{PDN})$-hybrid model with abort, in the presence of a fully semi-honest adversary controlling $t$ corrupted parties.
% \end{theorem}
% \begin{proof}
%     The proof of this theorem is given in Appendix~\ref{appendixC}.
% \end{proof}

\subsection{Parallel Maxpool}
\label{Maxpool-subsection}
 As described in \S\ref{NN}, Maxpool is the key component in convolutional neural networks (CNN). The Maxpool can be reformulated as finding the maximum value in a vector of length $m$. We can represent $\text{Max}(a,b)$ as follows:
\begin{center}
    $\text{Max}(a,b)=\text{ReLU}(a-b)+b$
\end{center}
So Maxpool can be realized through ReLU. Assume that $m$ is a power of two. To identify the maximum among $m$ elements, the values are first grouped into pairs, and a comparison is performed within each pair to determine the local maximum, yielding a new vector of size $m/2$. This procedure is applied recursively over $\text{log}_2(m)$ iterations until a single maximum value is obtained. 

Similarly, we use PSS to enable the parallel computation of multiple Maxpool operations. Specifically, given $m$ shared vectors $\llbracket \boldsymbol{a}^i\rrbracket_d,$ where $\boldsymbol{a}^i$ is vector with length $k,i\in[m]$. It outputs the share of $\text{Max}(\boldsymbol{a}^1_j,\cdots,\boldsymbol{a}^m_{j}),j\in[k]$.
We provide a secure protocol $\Pi_{Maxpool}$ implementation in Algorithm~\ref{Maxpool}. 

To sum up, the online communication complexity is $\lceil \text{log}_2(m)\rceil(\lceil \text{log}_2(\ell)\rceil+6)$ 
rounds with $\frac{(m-1)(\ell\mathrm{log}(\ell)+2\ell+9+k)}{k}$ field elements amortized per party. The offline
communication complexity is 4 rounds with $\frac{n}{k}(\frac{1}{n-t}(15+\lceil \ell \text{log}_2(\ell)\rceil)+\frac{4}{n}+k+1)(m-1)$ field elements amortized  per party.

\begin{algorithm}[t]
    \caption{Maxpool $\Pi_{Maxpool}$}
    \KwIn{$\llbracket\boldsymbol{a}^i\rrbracket_d,\boldsymbol{a}$ is a vector with length $k,i\in[m]$. Assuming $m$ is a power of two.}
    \KwOut{$\llbracket\boldsymbol{b}\rrbracket_d,$ where $\boldsymbol{b}$ is a vector with length $k$ and $\boldsymbol{b}_j=\text{Max}(\boldsymbol{a}^1_j,\cdots,\boldsymbol{a}^m_j),j\in[k].$}

        \begin{enumerate}[label=\arabic*., align=left, leftmargin=1.2em, labelwidth=1em, labelsep=0.2em]
        \item If $m=1:$ Output $\llbracket\boldsymbol{a}^0\rrbracket_d$.
        \item Set $s=m$ and $\boldsymbol{c^1}=(\llbracket\boldsymbol{a}^0\rrbracket_d,\cdots,\llbracket\boldsymbol{a}^{m-1}\rrbracket_d).$
        \item \For{ $i\leftarrow 0$ \KwTo $\text{log}_2(m)-1$}{
            Set $s=s/2$,
            $(\llbracket \boldsymbol{f}^0\rrbracket_d,\cdots,\llbracket\boldsymbol{f}^{s-1} \rrbracket_d,\llbracket \boldsymbol{h}^0\rrbracket_d,\cdots,\llbracket\boldsymbol{h}^{s-1} \rrbracket_d)=\boldsymbol{c}^i.$

            \For{ $j\leftarrow 0$ \KwTo $s-1$}{
         $\llbracket \boldsymbol{g}^j\rrbracket_d\leftarrow\mathcal{F}_{ReLU}(\llbracket\boldsymbol{f}^j-\boldsymbol{h}^j\rrbracket_d)+\boldsymbol{h}^j$  
        }
        If $s=1:$ Output $\llbracket \boldsymbol{g}^0\rrbracket_d$.
        
        Else: Set $\boldsymbol{c}^{i+1}\leftarrow(\llbracket\boldsymbol{g}^{0}\rrbracket_d,\cdots,\llbracket\boldsymbol{g}^{s-1}\rrbracket_d)$.
        }

    \end{enumerate}

    \Return  $\llbracket\boldsymbol{b}\rrbracket_d.$
    
    \label{Maxpool}
\end{algorithm}

% \begin{theorem}
% \label{theoremM axpool}
% $\Pi_{Maxpool}$ securely realizes $\mathcal{F}_{Maxpool}$ in the $\mathcal{F}_{ReLU}$-hybrid model with abort, in the presence of a fully semi-honest adversary controlling $t$ corrupted parties.
% \end{theorem}
% \begin{proof}
%     The proof of this theorem is given in Appendix~\ref{appendixC}.
% \end{proof}

% \footnote{The source code will be open-sourced soon in github.}
\section{Evaluation}
\subsection{Experiments Environment}
We implement this framework on top of LXY24 in C++ and run our experiments on 11 Alibaba Cloud servers, each of which was equipped with an AMD EPYC 32-core processor with 64GB of RAM. We use the Eigen library to accelerate matrix multiplication and parallelize the computations using multithreading. To simulate different network conditions, we use Linux Traffic Control (tc) to simulate a wide-area network (WAN, RTT: 40 ms, 100 Mbps), and a local-area network (LAN, RTT: 0.2 ms, 10 Gbps). RTT denotes Round-Trip Time. We completed performance evaluations for inference in various settings involving different numbers of parties, including 11-party computation (11PC), 21PC, 31PC, and 63PC. In addition, we rerun the protocol of LXY24 under the same adversarial model. Since the single-round communication prefix multiplication protocol in LXY24 suffers from privacy leakage issues, we replace it with a logarithmic-round prefix multiplication protocol based on the DN multiplication protocol. In the experiment, all arithmetic operations on the shares are performed modulo $p = 2^{61}-1$, with a fixed-point precision of 13 bits. All evaluations are executed 10 times, and the average is recorded. 

\begin{table*}[ht!]
\centering
\caption{Online round and communication complexity of the main protocols. Numbers of the amortized communication are reported in field elements per party. The shapes of the fixed matrix multiplication with truncation are $u \times v$ and $v \times o$. 2D convolution $(\mathrm{Conv}_{m,c,o,f,s})$ with the inputs of size $m \times m$, $c$ input channels, $o$ kernels of size $f \times f$ , and the stride $s$. InLXY24, the input to the prefix-OR $(\mathrm{PreOR}_{\ell})$ is a single vector of length $\ell$. In ours, the input consists of $k$ vectors, each of length $\ell$. The input to several building blocks, including bitwise less than $(\mathrm{BitwiseLT}_{k})$, $\mathrm{DReLU}_{k}$, $\mathrm{ReLU}_{k}$ and $\mathrm{Maxpool}_{k}$, is a vector of length $k$.}
\label{table::TROur}
\renewcommand{\arraystretch}{1.2}  % 可选：增加行距
\begin{tabular}{|c|cc|cc|}
\hline
\multirow{2}{*}{Protocol} & \multicolumn{2}{c|}{Rounds} & \multicolumn{2}{c|}{Communication}\\
\cline{2-5}
&LXY24 &ours &LXY24  & ours \\
\hline
$\mathrm{MatMult}_{u,v,o}(\Pi_{VecMatMult}^{Fixed})$& 1 & 1& $2uv$ & $(k+1)uv/k$  \\
\hline
$\mathrm{Conv}_{m,c,o, f,s}(\Pi_{PMatMult}^{Fixed})$& 1 & 1 & $2((m-f)/s + 1)^2o$ & $2((m-f)/s + 1)^2o/k$  \\
\hline
$\mathrm{PreOR}_{\ell}(\Pi_{PreOR})$& $\mathrm{log}(\ell)$ & $\mathrm{log}(\ell)$ & $\ell\mathrm{log}(\ell)$ & $\ell\mathrm{log}(\ell)/k$  \\
\hline
$\mathrm{BitwiseLT}_{k}(\Pi_{Bitwise-LT})$& $\mathrm{log}(\ell)$ & $log(\ell)+2$ & $k\ell\mathrm{log}(\ell)$& $\ell\mathrm{log}(\ell)+2\ell+2$  \\
\hline
$\mathrm{DReLU}_{k}(\Pi_{DReLU})$& $\mathrm{log}(\ell)+1$ &$\mathrm{log}(\ell)+ 5$ & $k(\ell\mathrm{log}(\ell)+4)$ & $\ell\mathrm{log}(\ell)+2\ell+7+k$  \\
\hline
$\mathrm{ReLU}_{k}(\Pi_{ReLU})$& $\mathrm{log}(\ell)+2$ & $\mathrm{log}(\ell)+6$ & $k(\ell\mathrm{log}(\ell)+ 6)$ & $\ell\mathrm{log}(\ell)+2\ell+9+k$  \\
\hline
$\mathrm{Maxpool}_{k}(\Pi_{Maxpool})$& $\mathrm{log}(k)(\mathrm{log}(\ell)+2)$ &$\mathrm{log}(k)(\mathrm{log}(\ell)+6)$ & $(k-1)(\ell\mathrm{log}(\ell)+6)$& $(1-1/k)(\ell\mathrm{log}(\ell)+2\ell+9+k)$ \\
\hline

\end{tabular}
\end{table*}
\begin{table}[hb!]
\centering
\caption{The online running time and amortized communication costs of per party of the sub-protocol. Communication is in MB. Running time is in seconds. $n=21$ is the number of parties. $t$ is the number of adversaries. $k$ is the size of the package in PSS. $\mathrm{Conv}_{32,32,64,5,1}$;$\mathrm{MatMult}_{16,128,8192}$;$\mathrm{ReLU}_{64}$;$\mathrm{Maxpool}_{16,8,2}$.
} 
\label{table::Mirc}
\renewcommand{\arraystretch}{1.2}  % 可选：增加行距
\begin{tabular}{|c|c|c|c|c|c|c|}
\hline
Operation & t &Protocol 
& k
& Comm.(MB)
& WAN(s)
&LAN(s) \\
\hline
\multirow{4}{*}{$\mathrm{Conv}$} & \multirow{2}{*}{3} & LXY24 & - & 0.61 & 0.84 & 0.08\\
& & \textbf{Ours} & 8 & \textbf{0.09} & \textbf{0.15} & \textbf{0.05} \\
\cline{2-7}
 & \multirow{2}{*}{7} & LXY24 & - & 0.80 & 0.87 & 0.30 \\
 & & \textbf{Ours} & 4 & \textbf{0.16} & \textbf{0.27} & \textbf{0.26} \\
\cline{1-7}
\multirow{4}{*}{$\mathrm{MatMult}$} & \multirow{2}{*}{3} & LXY24 & - & 1.30 & 1.84 & 0.19 \\
& & \textbf{Ours} & 8 & \textbf{1.10} & \textbf{0.88} & \textbf{0.39}   \\
\cline{2-7}
& \multirow{2}{*}{7} & LXY24 & - & 1.70 & 1.99 & 0.41 \\
& & \textbf{Ours} & 4 & \textbf{1.16} & \textbf{0.92} & \textbf{0.41} \\
\cline{1-7}
\multirow{4}{*}{$\mathrm{ReLU}$} & \multirow{2}{*}{3} & LXY24 & - & 14.61 & 11.27 & 1.73 \\
& & \textbf{Ours} & 8 & \textbf{5.31} & \textbf{7.42} & \textbf{1.40} \\
\cline{2-7}
 & \multirow{2}{*}{7} & LXY24 & - & 16.96 & 12.62 &1.55 \\
& & \textbf{Ours} & 4 & \textbf{7.75} & \textbf{10.33} & \textbf{2.13} \\
\cline{1-7}
\multirow{4}{*}{$\mathrm{Maxpool}$} & \multirow{2}{*}{3} & LXY24 & - & 19.25 & 14.57 & 2.51 \\
& & \textbf{Ours} & 8 & \textbf{7.00} & \textbf{10.09} & \textbf{2.06} \\
\cline{2-7}
 & \multirow{2}{*}{7} & LXY24 & - & 22.35 & 16.86 & 3.85 \\
& & \textbf{Ours} & 4 & \textbf{10.22} & \textbf{13.66} & \textbf{2.79} \\
\hline
\end{tabular}
\end{table}

\subsection{Microbenchmarks}
\label{Mircben}
We present the theoretical online round and communication complexity of the main protocols in Table~\ref{table::TROur}, and benchmark the online costs of the main sub-protocols for secure inference, including:
\begin{itemize}
    \item Matrix multiplication $\mathrm{MatMult}_{u,v,o}$ with truncation between a $u \times v$ matrix and a $v \times o$ matrix.
    \item 2D convolution $(\mathrm{Conv}_{m,c,o, f,s})$ with the inputs of size $m \times m$, $c$ input channels, $o$ kernels of size $f \times f$ , and the stride $s$.
    \item Secure ReLU $(\mathrm{ReLU}_{n})$ for a matrix of size $n \times n$.
    \item Maxpool $(\mathrm{Maxpool}_{m,c,f})$ with $m \times m $ inputs, $c$ input channels, and $f\times f$ window.
\end{itemize}

Concretely, we measure the running time and communication costs in both WAN and LAN settings with $n=21$ parties, where the size of $k$ is chosen randomly. The results are presented in Table~\ref{table::Mirc}. The offline costs of the main sub-protocols are given in Appendix~\ref{appendixD}.

For the convolution operations, we reduce the communication costs by $5.0-6.8\times$ compared to LXY24. The improvements arise from the use of our filter packing technique. Furthermore, we are $3.2-5.6\times$ faster in WAN and $1.15-1.6\times$ faster in LAN. We observe that the advantage increases as $k$ increases. For the matrix multiplication operations, we reduce the communication costs by $1.2-1.5\times$ compared to LXY24. This stems from our efficient vector-matrix multiplication. Moreover, we are $2.1\times$ faster in WAN. It should be noted that due to computational cost, our method offers little advantage in LAN. For the ReLU and Maxpool operations, we reduce the communication costs by $2.2-2.8\times$ and achieve $1.2-1.5\times$ speedups in the WAN setting,  compared to LXY24. This benefits from our parallelizable non-linear protocol. However, in the LAN setting, the advantage becomes less significant.

\begin{table*}[ht!]
\centering
\caption{The amortized communication costs of per party of Liu et al.LXY24 and our protocol over different datasets and networks for secure inference. Communication is in MB. $n$ is the number of parties. $t$ is the number of adversaries. $k$ is the size of the package in PSS, and OOM indicates memory overflow.
}
\label{table::costcomputaion}
\renewcommand{\arraystretch}{1.2}  % 可选：增加行距
\begin{tabular}{|c|c|c|c|ccc|ccc|ccc|ccc|}
\hline
\multirow{2}{*}{n} & \multirow{2}{*}{t} & \multirow{2}{*}{protocol} & \multirow{2}{*}{k} 
& \multicolumn{3}{c|}{MiniONN(MNIST)} 
& \multicolumn{3}{c|}{LeNet(MNIST)} 
& \multicolumn{3}{c|}{AlexNet(CIFAR10)}
& \multicolumn{3}{c|}{VGG16(CIFAR10)}\\
\cline{5-16}
& & & & offline & online & total
 & offline & online & total & offline & online & total & offline & online & total\\
\hline
\multirow{4}{*}{11} & \multirow{2}{*}{3} & LXY24 & - & 71.77 & 40.45 & 112.22 & 105.63 & 59.53 & 165.16 & 165.04 & 99.40 & 264.44 & 1975.53 & 1113.65 & 3089.18 \\
& & \textbf{Ours} & 3 & \textbf{46.36} & \textbf{21.42} & \textbf{67.78} & \textbf{63.28} & \textbf{29.24} & \textbf{92.52} & \textbf{89.54} & \textbf{46.92} & \textbf{136.46} & \textbf{1157.48} & \textbf{534.87} & \textbf{1692.35} \\
\cline{2-16}
& \multirow{2}{*}{4} & LXY24 & - & 67.81 & 43.17 & 110.98 & 99.81 & 63.54 & 163.35 & 154.42 & 106.07 & 260.49 & 1866.63 &  1188.51 & 3055.14 \\
& & \textbf{Ours} & 2 &\textbf{55.78} & \textbf{26.59} & \textbf{82.37} & \textbf{82.10} & \textbf{39.14} & \textbf{121.24} & \textbf{121.60} & \textbf{65.41} & \textbf{187.01} & \textbf{1537.66} & \textbf{732.92} & \textbf{2270.58} \\
\hline

\multirow{4}{*}{21} & \multirow{2}{*}{3} &LXY24 & - & 78.73 & 36.97 & 115.70 & 115.89 & 54.42 & 170.31 & 183.03 & 90.86 & 273.89 & 2167.38 &1017.94 &3185.32\\
& & \textbf{Ours} & 8 &\textbf{22.63} & \textbf{8.32} & \textbf{30.95} & \textbf{39.20} & \textbf{14.40} & \textbf{53.60} & \textbf{47.93} & \textbf{20.44} & \textbf{68.37} & \textbf{624.24} & \textbf{229.78} & \textbf{854.02}\\
\cline{2-16}
& \multirow{2}{*}{7} &LXY24 & - & 72.29 & 42.93 & 115.22 & 106.40 & 63.20 & 169.60 & 169.30 & 105.50 & 270.80 & 1989.81 & 1182.10 & 3171.91\\
& & \textbf{Ours} & 4 &\textbf{34.02} & \textbf{14.49} & \textbf{48.51} & \textbf{50.51} & \textbf{21.50} & \textbf{72.01} & \textbf{73.15} & \textbf{35.63} & \textbf{108.78} & \textbf{938.21} & \textbf{399.58} & \textbf{1337.79}\\
\hline

\multirow{4}{*}{31} & \multirow{2}{*}{3} & LXY24 & - & 80.89 & 35.62 & 116.51 & 119.07 & 52.42 & 171.49 & 188.64 & 87.54 & 276.18 & 2226.84 & 980.78 & 3207.62 \\
& & \textbf{Ours} & 13 &\textbf{29.13} & \textbf{8.74} & \textbf{37.87} & \textbf{32.80} & \textbf{9.82} & \textbf{42.62} & \textbf{40.32} & \textbf{14.23} & \textbf{54.55} & \textbf{505.03} & \textbf{151.77} & \textbf{656.80} \\
\cline{2-16}
& \multirow{2}{*}{12} &LXY24 & - & 70.30 & 44.85 & 115.15 & 103.47 & 66.02 & 169.49 & 159.74 & 110.21 & 269.95 & OOM & OOM& OOM\\
& & \textbf{Ours} & 4 &\textbf{33.95} & \textbf{14.20} & \textbf{48.15} & \textbf{50.39} & \textbf{21.08} & \textbf{71.47} & \textbf{72.84} & \textbf{34.92} & \textbf{107.76} & \textbf{936.08} &\textbf{391.66} & \textbf{1327.74} \\
\hline

\multirow{4}{*}{63} & \multirow{2}{*}{3} &LXY24 & - & 83.06 & 34.11 & 117.17 & 122.26 & 50.21 & 172.47 & 194.29 & 83.85 & 278.14 & OOM & OOM& OOM\\
& & \textbf{Ours} & 29 &\textbf{24.86} & \textbf{4.68} & \textbf{29.54} & \textbf{28.04} & \textbf{5.25} & \textbf{33.29} & \textbf{33.23} & \textbf{7.51} & \textbf{40.74} & \textbf{458.81} &\textbf{86.83} & \textbf{545.64} \\
\cline{2-16}
& \multirow{2}{*}{20} &LXY24 & - & 75.26 & 42.84 & 118.10 & 110.78 & 63.05 & 173.83 & 172.48 & 105.27 & 277.75 &OOM & OOM& OOM \\
& & \textbf{Ours} & 12 & \textbf{27.44} & \textbf{7.70} & \textbf{35.14} & \textbf{32.27} & \textbf{9.05} & \textbf{41.32} & \textbf{38.49} & \textbf{12.71} & \textbf{51.20} & \textbf{542.71} & \textbf{152.61} & \textbf{695.32} \\
\hline
\end{tabular}
\end{table*}

\begin{table*}
\renewcommand{\arraystretch}{1.2}  % 可选：增加行距
\centering
\caption{Running time of Liu et al.LXY24 and our protocol over different datasets and networks for secure inference in LAN. Running time is in seconds. $n$ is the number of parties. $t$ is the number of adversaries. $k$ is the size of the package in PSS, and OOM indicates memory overflow.
}
\label{table::computaiontimeLAN}

\begin{tabular}{|c|c|c|c|ccc|ccc|ccc|ccc|}
\hline
\multirow{2}{*}{n} & \multirow{2}{*}{t} & \multirow{2}{*}{protocol} & \multirow{2}{*}{k} 
& \multicolumn{3}{c|}{MiniONN(MNIST)} 
& \multicolumn{3}{c|}{LeNet(MNIST)} 
& \multicolumn{3}{c|}{AlexNet(CIFAR10)}
& \multicolumn{3}{c|}{VGG16(CIFAR10)}\\
\cline{5-16}
& & & & offline & online & total
 & offline & online & total & offline & online & total & offline & online & total\\
\hline
\multirow{4}{*}{11} & \multirow{2}{*}{3} & LXY24 & - & 1.13 & 0.20 &	1.33 	& 1.70 &	0.28 	& 1.98 &	2.74 &	0.46 	& 3.20 	& 39.30 	& 5.12 &	44.42 
 \\
& & \textbf{Ours} & 3 & \textbf{0.93} & \textbf{0.33} & \textbf{1.26} & \textbf{1.29} & \textbf{0.46} & \textbf{1.75} & \textbf{1.77} & \textbf{0.74} & \textbf{2.51} & \textbf{29.26} & \textbf{8.26} & \textbf{37.52} \\
\cline{2-16}
& \multirow{2}{*}{4} & LXY24 & - & 1.48 &	0.21 &	1.69 &	2.22 &	0.31 &	2.53 	&3.48 &	0.51 	&3.99 &	48.96 	&5.83 &	54.79 
 \\
& & \textbf{Ours} & 2 &\textbf{1.27} & \textbf{0.47} & \textbf{1.74} & \textbf{1.75} & \textbf{0.68} & \textbf{2.43} & \textbf{2.66} & \textbf{1.17} & \textbf{3.83} & \textbf{42.37} & \textbf{13.06} & \textbf{55.43} \\
\hline

\multirow{4}{*}{21} & \multirow{2}{*}{3} &LXY24 & - & 1.17 &	0.23 &	1.40 	& 1.69 	& 0.29 &	1.98 	& 2.79 &	0.57 &	3.36 &	23.22 &	6.14 &	29.36 
\\
& & \textbf{Ours} & 8 &\textbf{0.74} & \textbf{0.34} & \textbf{1.08} & \textbf{1.37} & \textbf{0.60} & \textbf{1.97} & \textbf{1.54} & \textbf{0.85} & \textbf{2.39} & \textbf{21.33} & \textbf{9.83} & \textbf{31.16}\\
\cline{2-16}
& \multirow{2}{*}{7} & LXY24 & - & 1.93 &	0.28 &	2.21 &	2.84 &	0.44& 	3.28 &	4.78 &	0.69 &	5.47 &	38.62 &	8.68 &	47.30 
\\
& & \textbf{Ours} & 4 &\textbf{1.10} & \textbf{0.45} & \textbf{1.55} & \textbf{1.71} & \textbf{0.72} & \textbf{2.43} & \textbf{2.21} & \textbf{1.19} & \textbf{3.40} & \textbf{35.01} & \textbf{13.25} & \textbf{48.26}\\
\hline

\multirow{4}{*}{31} & \multirow{2}{*}{3} & LXY24 & - & 1.15 	& 0.25 	& 1.40 &	1.76 &	0.36 &	2.12 	& 2.68 &	0.89 &	3.57 & 	38.25 &	7.09 &	45.34 
 \\
& & \textbf{Ours} & 13 &\textbf{1.37} & \textbf{0.68} & \textbf{2.05} & \textbf{1.59} & \textbf{0.79} & \textbf{2.35} & \textbf{1.87} & \textbf{1.15} & \textbf{3.02} & \textbf{25.83} & \textbf{13.63} & \textbf{39.46} \\
\cline{2-16}
& \multirow{2}{*}{12} & LXY24 & - & 3.15 &	0.44 &	3.59 &	4.79& 	0.64 	&5.43 &	7.54 &	1.35 &	8.89 
 & OOM & OOM& OOM\\
& & \textbf{Ours} & 4 &\textbf{1.56} & \textbf{0.68} & \textbf{2.24} & \textbf{2.34} & \textbf{1.24} & \textbf{3.58} & \textbf{3.28} & \textbf{1.62} & \textbf{4.90} & \textbf{51.42} &\textbf{21.59} & \textbf{73.01} \\
\hline

\multirow{4}{*}{63} & \multirow{2}{*}{3} &LXY24 & - & 1.77 &	0.86 	& 2.63 &	2.76 &	0.51 	& 3.27 &	3.91 &	1.97 &	5.88 
 & OOM & OOM& OOM\\
& & \textbf{Ours} & 29 &\textbf{2.65} & \textbf{1.72} & \textbf{4.37} & \textbf{3.06} & \textbf{1.93} & \textbf{4.99} & \textbf{3.52} & \textbf{2.78} & \textbf{6.30} & \textbf{50.31} &\textbf{33.26} & \textbf{83.57} \\
\cline{2-16}
& \multirow{2}{*}{20} &LXY24 & - & 4.68 &	1.16 &	5.84 &	7.27 &	1.13& 	8.40& 	12.59 &	2.61& 	15.20 
&OOM & OOM& OOM \\
& & \textbf{Ours} & 12 & \textbf{2.42} & \textbf{1.50} & \textbf{3.92} & \textbf{2.83} & \textbf{1.85} & \textbf{4.68} & \textbf{3.35} & \textbf{2.48} & \textbf{5.83} & \textbf{82.09} & \textbf{57.29} & \textbf{139.38} \\
\hline
\end{tabular}
\end{table*}

\begin{table*}[t]
\centering
\caption{Running time of LXY24 and our protocol over different datasets and networks for secure inference in WAN. Running time is in seconds. $n$ is the number of parties. $t$ is the number of adversaries. $k$ is the size of the package in PSS, and OOM indicates memory overflow.
}
\label{table::computaiontimeWAN}
\renewcommand{\arraystretch}{1.2}  % 可选：增加行距
\begin{tabular}{|c|c|c|c|ccc|ccc|ccc|ccc|}
\hline
\multirow{2}{*}{n} & \multirow{2}{*}{t} & \multirow{2}{*}{protocol} & \multirow{2}{*}{k} 
& \multicolumn{3}{c|}{MiniONN(MNIST)} 
& \multicolumn{3}{c|}{LeNet(MNIST)} 
& \multicolumn{3}{c|}{AlexNet(CIFAR10)}
& \multicolumn{3}{c|}{VGG16(CIFAR10)}\\
\cline{5-16}
& & & & offline & online & total
 & offline & online & total & offline & online & total & offline & online & total\\
\hline
\multirow{4}{*}{11} & \multirow{2}{*}{3} &LXY24 & - & 17.30 & 9.64& 	26.94 &	23.80 &	13.92 & 37.72 &	35.35 &	22.15 &	57.50 &	426.93 & 182.71 & 609.64 \\
& & \textbf{Ours} & 3 & \textbf{18.63} & \textbf{13.52} & \textbf{32.15} & \textbf{24.04} & \textbf{17.06} & \textbf{41.10} & \textbf{34.93} & \textbf{30.86} & \textbf{65.79} & \textbf{396.67} & \textbf{261.88} & \textbf{658.55} \\
\cline{2-16}
& \multirow{2}{*}{4} &LXY24 & - & 16.82 & 9.93 & 26.75 & 23.51 &	13.21 & 36.72 & 34.34 & 22.43 & 56.77 & 420.59 & 183.26 & 603.85 
 \\
& & \textbf{Ours} & 2 &\textbf{19.11} & \textbf{15.21} & \textbf{34.32} & \textbf{27.50} & \textbf{21.76} & \textbf{49.26} & \textbf{39.24} & \textbf{37.38} & \textbf{76.62} & \textbf{451.03} & \textbf{335.75} & \textbf{786.78} \\
\hline

\multirow{4}{*}{21} & \multirow{2}{*}{3} &LXY24 & - & 32.82 & 15.68 	& 48.50 & 48.14 & 22.02 & 70.16 & 64.87 & 35.80 & 100.67 & 914.20 & 307.68 & 1221.88 
\\
& & \textbf{Ours} & 8 &\textbf{23.08} & \textbf{11.84} & \textbf{34.92} & \textbf{39.41} & \textbf{18.66} & \textbf{58.07} & \textbf{46.28} & \textbf{29.24} & \textbf{75.52} & \textbf{574.19} & \textbf{239.57} & \textbf{813.76}\\
\cline{2-16}
& \multirow{2}{*}{7} &LXY24 & - & 30.16 & 15.72 & 45.88&44.11&22.05 	&66.16 	&62.81 &36.66 &99.47 &910.92 &357.59 &1268.51 \\
& & \textbf{Ours} & 4 &\textbf{27.76} & \textbf{16.93} & \textbf{44.69} & \textbf{39.87} & \textbf{23.74} & \textbf{63.61} & \textbf{53.99} & \textbf{40.58} & \textbf{94.57} & \textbf{684.68} & \textbf{360.72} & \textbf{1045.40}\\
\hline

\multirow{4}{*}{31} & \multirow{2}{*}{3} &LXY24 & - & 48.14&20.09 	&68.23 &67.59 &28.47&96.06&95.06 & 46.27& 141.33&1203.35&461.91&1665.26 \\
& & \textbf{Ours} & 13 &\textbf{49.91} & \textbf{17.84} & \textbf{67.75} & \textbf{55.79} & \textbf{19.59} & \textbf{75.38} & \textbf{65.68} & \textbf{30.40} & \textbf{96.08} & \textbf{793.62} & \textbf{240.29} & \textbf{1033.91} \\
\cline{2-16}
& \multirow{2}{*}{12} & LXY24 & - & 44.12&22.01&66.13&63.73&30.32& 	94.05&90.15&49.95&140.10 
& OOM & OOM& OOM\\
& & \textbf{Ours} & 4 &\textbf{39.76} & \textbf{22.66} & \textbf{62.42} & \textbf{57.29} & \textbf{31.31} & \textbf{88.60} & \textbf{77.38} & \textbf{52.89} & \textbf{130.27} & \textbf{1029.76} &\textbf{498.14} & \textbf{1527.90} \\
\hline

\multirow{4}{*}{63} & \multirow{2}{*}{3} & LXY24 & - & 89.99 & 36.69 &	126.68&131.76&	55.08&	186.84 &	188.09 &	86.27 &	274.36 
& OOM & OOM& OOM\\
& & \textbf{Ours} & 29 &\textbf{102.48} & \textbf{20.09} & \textbf{122.57} & \textbf{112.61} & \textbf{22.09} & \textbf{134.70} & \textbf{129.51} & \textbf{33.07} & \textbf{162.58} & \textbf{1738.20} &\textbf{343.99} & \textbf{2082.19} \\
\cline{2-16}
& \multirow{2}{*}{20} &LXY24 & - & 87.73 & 45.19 & 132.92&124.20 &64.03& 	188.23 &	180.91 &	106.59& 	287.50 
&OOM & OOM& OOM \\
& & \textbf{Ours} & 12 & \textbf{89.40} & \textbf{26.46} & \textbf{115.86} & \textbf{103.31} & \textbf{30.16} & \textbf{133.47} & \textbf{119.13} & \textbf{44.83} & \textbf{163.96} & \textbf{1791.95} & \textbf{476.84} & \textbf{2268.79} \\
\hline
\end{tabular}
\end{table*}

\subsection{Benchmarks on Secure Inference}
 We conducted evaluation on 4 standard neural networks: MiniONN~\cite{Liu2017ObliviousNN} from the secure inference community, and LeNet~\cite{LeCun1998GradientbasedLA}, AlexNet~\cite{Krizhevsky2012ImageNetCW}, and VGG16~\cite{Simonyan2014VeryDC} from the deep learning community. We select 2 standard benchmarking datasets: MNIST~\cite{LeCun1998GradientbasedLA} and CIFAR-10~\cite{Krizhevsky2009LearningML}.

\subsubsection{Communication Costs of Secure Inference}
The communication costs of the protocol from LXY24 and our protocol on various datasets and neural networks are shown in Table~\ref{table::costcomputaion}. Compared to LXY24, we achieve a significant advantage in terms of offline communication, online communication, and total communication costs. Moreover, our analysis reveals a progressively more pronounced advantage along the diagonal from the upper-left to the lower-right corner of the table. For instance, compared to LXY24, our approach improves the communication $1.3-5.9\times$, $1.6-11.2\times$, and $1.4-6.8\times$  in offline, online, and total communication overhead respectively as $k$ increases for AlexNet. Additionally, for VGG16, LXY24 suffers from memory overflow under both the 31PC where 12 parties are controlled by the adversary and 63PC, whereas our approach continues to execute successfully. This advantage primarily stems from the fact that more parties enable the selection of larger values for $k$ in PSS, where an increased $k$ confers greater benefits for deeper neural networks and larger datasets.

\subsubsection{Running Time of Secure Inference (WAN)} The running time of LXY24 and ours in WAN is shown in Table~\ref{table::computaiontimeWAN}. Similar to Table~\ref{table::costcomputaion}, our advantages become increasingly significant from the top-left to the bottom-right of the table. However, it is worth noting that in the case of 11PC, there is no improvement compared to LXY24. This phenomenon primarily stems from our scheme is significantly affected by computational overhead, and when a small $k$ is selected, the communication efficiency improvement is limited. Furthermore, we can find that when the communication overhead increases by more than $3.8\times$, our online running time will be significantly improved. For example, the online phase is $2.4-2.6\times$ faster under 63PC for AlexNet. At the same time, as the number of parties and 
$k$ increase, the offline running time and the total running time can be improved by $1.1-1.6\times$ and $1.1-1.8\times$ respectively. Additionally, ours performs significantly better for deeper networks. For instance, the performance on AlexNet is better than that on MiniONN. The reason is that deeper networks involve more convolution operations, and the improvements in our communication costs and running time become more pronounced. In section~\ref{Mircben}, it can be observed that our method achieves significant advantages in convolution operations.

\subsubsection{Running Time of Secure Inference (LAN)} The running time of LXY24 and ours in LAN is shown in Table~\ref{table::computaiontimeLAN}. We observe that in the online phase, our running time shows little to no advantage compared to LXY24, as our scheme is more heavily affected by computational overhead. However, the offline and the total running time, as the number of parties and $k$ increases, will also have significant improvements, with the increase being $1.1-3.8\times$ and $1.2-2.6\times$ respectively.

In a word, our improvements in the LAN setting are less pronounced compared to those in the WAN setting. The reason lies in the higher computational overhead required by our scheme. Fortunately, as the number of parties and $k$ increases, our scheme exhibits more substantial advantages, mainly owing to our efficient vector-matrix multiplication, parallel convolution computation, and parallel non-linear computation based on PSS. Moreover, MPC protocols are more commonly deployed in WAN settings, so our improvements hold practical significance.

\section{Related Work}
Privacy-preserving deep learning has been a prominent research focus in recent years. Based on the number of participating parties, existing works can be broadly categorized into two categories: those involving a small number of parties (e.g., 2PC, 3PC, 4PC) and those designed for a large number of parties ($n$PC).

In the area of two-party computation (2PC), ABY~\cite{Demmler2015ABYA} and ABY2.0~\cite{Patra2020ABY20IM} design a framework that combines secure computation schemes based on arithmetic sharing , boolean sharing, and Yao’s garbled circuits (GC). SecureML~\cite{Mohassel2017SecureMLAS} proposes a scheme in the two-server outsourced computation models with secret sharing and GC. Meanwhile, MiniONN~\cite{Liu2017ObliviousNN} designs fast matrix multiplication protocols. Gazelle~\cite{Juvekar2018GazelleAL} and Delphi~\cite{Mishra2020DelphiAC} combine techniques from HE and MPC to achieve fast private inference. CrypTFlow2~\cite{Rathee2020CrypTFlow2P2} designs an efficient comparison protocol, and its followup works~\cite{Huang2022CheetahLA,Feng2025PantherPS, Balla2025SecONNdsSO} focus on improving performance. In addition, there exists many works~\cite{Pang2024BOLTPA,Lu2023BumbleBeeST,Zhang2024SecureTI} for large language model inference combines techniques from HE and secret sharing. In order to solve the performance bottleneck of 2PC, Chameleon~\cite{Riazi2018ChameleonAH} employed a semi-honest third-party to generate correlated randomness for multiplication triplets in offline. Further, for improving performance in non-linear functions such as ReLU, SecureNN~\cite{Wagh2019SecureNN3S} constructs novel protocols for comparsion with the help of a third-party. For providing better overall efficiency, some works ABY3~\cite{Mohassel2018ABY3AM}, Falcon\cite{Wagh2020FalconHM}, Swift~\cite{Koti2020SWIFTSA} are propoed based on 3PC replicated secret sharing. Additionally, ~\cite{Chaudhari2019ASTRAHT,Patra2020BLAZEBF} and~\cite{Dalskov2020FantasticFH,Byali2020FLASHFA,Rachuri2019TridentE4,Koti2021TetradAS} focus on resisting malicious adversaries and require much more costs in 3PC and 4PC, respectively.

Instead, a critical limitation common to all these representative works lies in their restriction to a maximum of four participating parties, consequently exhibiting inadequate scalability. In order to solve the bottleneck, many works are proposed. Motion~\cite{Braun2022MOTIONA} designs an $n$PC framework that
combines secure computation schemes based on arithmetic
sharing, boolean sharing, and GC. Howerver, since computation is required between every pair of parties in Motion, the overall efficiency is limited.  Alessandro et al.~\cite{Baccarini2023MultiPartyRS} propose an $n$PC framework by utilizing replicated
secret sharing, but its sharing structure is relatively complex and incurs high communication overhead for multiplication operations. LXY24 first constructs a practical inference framework using Shamir secret sharing. Unfortunately, when dealing with deep neural networks such as VGG16, the communication overhead remains substantial. Franklin et al.~\cite{franklin1992communication} introduced the PSS and its follow-up works~\cite{Goyal2021UnconditionalCM, goyal2022sharing,Escudero2022TurboPackHM} focus on improving multiplication operation and resisting malicious adversaries in theory. In this work, we mainly achieve efficient vector-matrix multiplication, convolution computation and non-linear operations for more practical PPDL using PSS.

\section{Conclusion \& Future Work}
In this paper, we construct secure blocks based on packed Shamir secret sharing for high-throughput and scalable privacy-preserving deep learning. Extensive evaluations also present our improvements. For future work, we are willing to extend to malicious adversary scenarios and improve computational efficiency by GPU.

\appendix

\subsection{The Offline Costs of Microbenchmarks}
\label{appendixD}
The offline costs of the main sub-protocols are given in Table~\ref{table::MircOff}. Concretely, for linear operations (i.e., convolution and matrix multiplication operations), there is no advantage in running time. The reason is that the offline complexity for generating random bits and truncation triples is greater. For example, generating truncation triples needs to call the degree transformation protocol (Algorithm~\ref{DegreeTrans}) multiple times. However, it is worth noting that our communication costs still show improvement with a larger $k$. Take the case with $k=8$ as an example, the offline cost of matrix multiplication achieves improvement. Therefore, we can reasonably infer that when a larger $k$ is chosen, the communication cost is further reduced, which in turn leads to an advantage in running time. This is reflected in the Table~\ref{table::computaiontimeWAN}. For the non-linear operations (i.e., ReLU and Maxpool), we reduce the communication costs by $3.1-3.5\times$ compared to LXY24. Furthermore, we are $1.5-2.0\times$ faster in WAN and $1.1-1.6\times$ faster in LAN. 

\begin{table}[ht!]
\centering
\caption{The offline running time and amortized communication costs of per party of the sub-protocol. Communication is in MB. Running time is in seconds. $n=21$ is the number of parties. $t$ is the number of adversaries. $k$ is the size of the package in PSS. $\mathrm{Conv}_{32,32,64,5,1}$;$\mathrm{MatMult}_{16,128,8192}$;$\mathrm{ReLU}_{64}$;$\mathrm{Maxpool}_{16,8,2}$.
} 
\label{table::MircOff}
\renewcommand{\arraystretch}{1.2}  % 可选：增加行距
\begin{tabular}{|c|c|c|c|c|c|c|}
\hline
Operation & t &Protocol 
& k
& Comm.(MB)
& WAN(s)
&LAN(s) \\
\hline
\multirow{4}{*}{$\mathrm{Conv}$} & \multirow{2}{*}{3} & LXY24 & - & 65.74 & 65.66 & 7.54\\
& & \textbf{Ours} & 8 & \textbf{62.12} & \textbf{72.28} & \textbf{10.67} \\
\cline{2-7}
 & \multirow{2}{*}{7} & LXY24 & - & 77.18 & 72.41 & 10.04 \\
 & & \textbf{Ours} & 4 & \textbf{76.29} & \textbf{83.19} & \textbf{11.92} \\
\cline{1-7}
\multirow{4}{*}{$\mathrm{MatMult}$} & \multirow{2}{*}{3} & LXY24 & - & 141.94 & 140.27 & 16.32 \\
& & \textbf{Ours} & 8 & \textbf{117.49} & \textbf{203.83} & \textbf{32.37}   \\
\cline{2-7}
& \multirow{2}{*}{7} & LXY24 & - & 165.97 & 154.22 & 21.49 \\
& & \textbf{Ours} & 4 & \textbf{205.77} & \textbf{225.00} & \textbf{33.02} \\
\cline{1-7}
\multirow{4}{*}{$\mathrm{ReLU}$} & \multirow{2}{*}{3} & LXY24 & - & 26.81 & 15.73 & 2.47 \\
& & \textbf{Ours} & 8 & \textbf{7.68} & \textbf{8.49} & \textbf{1.56} \\
\cline{2-7}
 & \multirow{2}{*}{7} & LXY24 & - & 23.49 &16.50 & 2.44\\
& & \textbf{Ours} & 4 & \textbf{11.04} & \textbf{11.24} & \textbf{2.17} \\
\cline{1-7}
\multirow{4}{*}{$\mathrm{Maxpool}$} & \multirow{2}{*}{3} & LXY24 & - & 35.34 & 22.31 & 2.92  \\
& & \textbf{Ours} & 8 & \textbf{10.12} & \textbf{11.12} & \textbf{2.01} \\
\cline{2-7}
 & \multirow{2}{*}{7} & LXY24 & - & 30.97 & 21.48 & 3.74 \\
& & \textbf{Ours} & 4 & \textbf{14.55} & \textbf{14.82} & \textbf{2.53} \\
\hline
\end{tabular}
\end{table}

\subsection{The protocol \texorpdfstring{$\Pi_{Random}$}{Random},\texorpdfstring{$\Pi_{RandomBits}$}{RandomBits},\texorpdfstring{$\Pi_{DegreeTrans}$}{DegreeTrans},\texorpdfstring{$\Pi_{Xor}$}{Xor}}
\label{appendixB}
\subsubsection{\texorpdfstring{$\Pi_{Random}$}{Random}} Batch generation of random shares using Vandermonde matrices. The protocol implementation in Algorithm~\ref{Random}. It is a natural
 extension of protocols for generating random shares based on Shamir secret sharing. Note that the degree of the polynomial chosen in Algorithm~\ref{Random} is $d$. We can select any degree $d_1$ as long as it satisfies $d\leq d_1 \leq n-1$. The communication complexity is 1 round with $\frac{n-1}{k(n-t)}$ field elements amortized per generated share. 

\begin{algorithm}
    \caption{Random Shares $\Pi_{Random}$}
    \KwIn{None.}
    \KwOut{ $\llbracket \boldsymbol{r}_1\rrbracket_d,\cdots,\llbracket \boldsymbol{r}_{n-t}\rrbracket_d$.}

    \begin{enumerate}[label=\arabic*., align=left, leftmargin=1.2em, labelwidth=1em, labelsep=0.2em]
        \item Each party $P_i$ chooses a random vector $\boldsymbol{u}^i\in\mathbb{F}_{p}^k$ and distributs $\llbracket\boldsymbol{u}^i\rrbracket_d$ to other parties.
        \item Holding $n$ shares $(\llbracket \boldsymbol{u}^1\rrbracket_d,\cdots,\llbracket \boldsymbol{u}^n\rrbracket_d),$ each party $P_i$ computes $(\llbracket \boldsymbol{r}_1\rrbracket_d,\cdots,\llbracket \boldsymbol{r}_{n-t}\rrbracket_d)=Van(n,n-t)^T\cdot(\llbracket \boldsymbol{u}^1\rrbracket_d,\cdots,\llbracket \boldsymbol{u}^n\rrbracket_d)$.
    \end{enumerate}
    \Return $\llbracket \boldsymbol{r}_1\rrbracket_d,\cdots,\llbracket \boldsymbol{r}_{n-t}\rrbracket_d$.
    \label{Random}
\end{algorithm}

\subsubsection{\texorpdfstring{$\Pi_{DegreeTrans}$}{DegreeTrans}}~\cite{goyal2022sharing} designed a sharing transformation protocol to securely perform the degree transformation. Specifically, given a PSS $\llbracket \boldsymbol{x}\rrbracket_{n-1}$, it outputs $\llbracket \boldsymbol{x}\rrbracket_{n-k}.$ We extend this protocol simply to support arbitrary degree conversions for Shamir secret sharing and PSS in Algorithm~\ref{DegreeTrans}. Note that Step 1 can be performed during the offline phase. The communication complexity of the protocol is 2 round with  $\frac{2(n-1)}{k}(\frac{1}{n-t}+\frac{1}{n})$ field elements amortized per party (resp. $2(n-1)(\frac{1}{n-t}+\frac{1}{n})$ field elements for Shamir secret sharing).

\begin{algorithm}
    \caption{Degree Trans $\Pi_{DegreeTrans}$}
    \KwIn{$\llbracket \boldsymbol{x}\rrbracket_{d_1}$ (reps. $[x|_s]_{d_2}$), $d\leq d_1\leq n-1,d-k+1\leq d_2\leq n-1$.}
    \KwOut{ $\llbracket \boldsymbol{x}\rrbracket_{d_3}$ (reps. $[x|_s]_{d_4}$), $d\leq d_3\leq n-1$, $d-k+1\leq d_4\leq n-1,d_1\neq d_3,d_2\neq d_4$.}

    \begin{enumerate}[label=\arabic*., align=left, leftmargin=1.2em, labelwidth=1em, labelsep=0.2em]
        \item  Get $(\llbracket \boldsymbol{r}\rrbracket_{d_1},\llbracket \boldsymbol{r}\rrbracket_{d_3})$ (resp. $([r|_s]_{d_2},[r|_s]_{d_4})$)$\leftarrow\mathcal{F}_{Random}$.
        \item All parties compute $\llbracket\boldsymbol{x'}\rrbracket_{d_1}=\llbracket\boldsymbol{x}\rrbracket_{d_1}+\llbracket\boldsymbol{r}\rrbracket_{d_1}$ (resp. $[x'|_s]_{d_2}=[x|_s]_{d_2}+[r|_s]_{d_2}$) and send their shares to $P_1$.
        \item $P_1$ reconstructs the secret $\boldsymbol{x}'$ (resp. $x'$) and distributes $\llbracket\boldsymbol{x}'\rrbracket_{d_3}$ (resp. $[x'|_s]_{d_4}$) to all parties.
        \item All parties locally compute $\llbracket\boldsymbol{x}\rrbracket_{d_3}=\llbracket\boldsymbol{x}'\rrbracket_{d_3}-\llbracket\boldsymbol{r}\rrbracket_{d_3}$ (resp. $[x|_s]_{d_4}=[x'|_s]_{d_4}+[r|_s]_{d_4}$).
    \end{enumerate}
    
    \Return $\llbracket \boldsymbol{x}\rrbracket_{d_3}$ (reps. $[x|_s]_{d_4}$).
    \label{DegreeTrans}
\end{algorithm}

\subsubsection{\texorpdfstring{$\Pi_{RandomBits}$}{RandomBits}}~\cite{damgaard2006unconditionally} proposed a protocol to securely generate the shares of random bits. The parties generate a bits $r\in\{0,1\}$ by computing $S(a)=\frac{\frac{a}{\sqrt{a^2}}+1}{2}$, where $a\in\mathbb{F}_p^*$ is a uniformly random non-zero element. $\sqrt{a^2}$ is formulated to equal the unique element in $[1,(p-1)/2]$ and $a/\sqrt{a^2}$ is $1$ or $-1$. We extend this protocol using PSS in Algorithm~\ref{RandomBits}. The communication complexity is 4 rounds with $\frac{n-1}{k}(\frac{5}{n-t}+\frac{4}{n}+k+1)$ field elements amortized per generated random bits.

\begin{algorithm}
    \caption{Random Bits $\Pi_{RandomBits}$}
    \KwIn{None.}
    \KwOut{ $\llbracket \boldsymbol{r}\rrbracket_{d},\boldsymbol{r}=\{0,1\}^k$.}

    \begin{enumerate}[label=\arabic*., align=left, leftmargin=1.2em, labelwidth=1em, labelsep=0.2em]
        \item Get $\llbracket \boldsymbol{a}\rrbracket_d$ by functionality $\mathcal{F}_{Random}$.
        \item Compute $\llbracket \boldsymbol{a}'\rrbracket_d\leftarrow\mathcal{F}_{PDN}(\llbracket \boldsymbol{a}\rrbracket_d,\llbracket \boldsymbol{a}\rrbracket_d)$.
        \item All parties open $\boldsymbol{a}'$. If there exists a zero in $\boldsymbol{a}'$, then abort. Otherwise, proceed as below.
        \item Compute $\boldsymbol{b}_i=\sqrt{\boldsymbol{a}'_i},i\in[k],\llbracket \boldsymbol{b}\rrbracket_{k-1}$ and $\llbracket \boldsymbol{c}\rrbracket_{d+k-1}=\llbracket\boldsymbol{b}\rrbracket_{k-1}\cdot\llbracket\boldsymbol{a}\rrbracket_{d}$ locally.
        \item Get $\llbracket \boldsymbol{c}\rrbracket_{d}\leftarrow\mathcal{F}_{DegreeTrans}(\llbracket \boldsymbol{c}\rrbracket_{d+k-1})$.
        \item Compute $\llbracket\boldsymbol{r}\rrbracket_d=\frac{\llbracket\boldsymbol{c}\rrbracket_d+1}{2}.$

    \end{enumerate}
    
    \Return $\llbracket \boldsymbol{r}\rrbracket_{d}$.
    \label{RandomBits}
\end{algorithm}

\subsubsection{\texorpdfstring{$\Pi_{Xor}$}{Xor}} Given two inputs $\llbracket \boldsymbol{a}\rrbracket_{d_1},\llbracket \boldsymbol{b}\rrbracket_{d},k-1\leq d_1\leq d$, we get $\llbracket \boldsymbol{a}\rrbracket_{d_1}\oplus\llbracket \boldsymbol{b}\rrbracket_{d}$ by computing $\llbracket \boldsymbol{a}\rrbracket_{d_1}\oplus\llbracket \boldsymbol{b}\rrbracket_{d}=\llbracket \boldsymbol{a}\rrbracket_{d_1}+\llbracket \boldsymbol{b}\rrbracket_{d}-2\llbracket \boldsymbol{a}\rrbracket_{d_1}\llbracket \boldsymbol{b}\rrbracket_{d}$, where $\boldsymbol{a},\boldsymbol{b}$ are bit strings. The protocol implementation in Algorithm~\ref{Xor}. The communication complexity is the same as protocol $\Pi_{DegreeTrans}$ amortized per xor operation.

\begin{algorithm}
    \caption{Xor $\Pi_{Xor}$}
    \KwIn{$\llbracket \boldsymbol{a}\rrbracket_{d_1},\llbracket \boldsymbol{b}\rrbracket_{d},k-1\leq d_1\leq d,\boldsymbol{a},\boldsymbol{b}\in\{0,1\}^k$.}
    \KwOut{ $\llbracket \boldsymbol{a}\oplus\boldsymbol{b}\rrbracket_{d},\boldsymbol{a}\oplus\boldsymbol{b}=(\boldsymbol{a}_0\oplus\boldsymbol{b}_0,\cdots,\boldsymbol{a}_{k-1}\oplus\boldsymbol{b}_{k-1})$.}

    \begin{enumerate}[label=\arabic*., align=left, leftmargin=1.2em, labelwidth=1em, labelsep=0.2em]
        \item Compute $\llbracket\boldsymbol{c}\rrbracket_{d+d_1}=2\cdot\llbracket\boldsymbol{a}\rrbracket_{d_1}\cdot\llbracket\boldsymbol{b}\rrbracket_{d}.$
        \item Set $\llbracket \boldsymbol{a}\oplus\boldsymbol{b}\rrbracket_{d+d_1}=\llbracket\boldsymbol{a}\rrbracket_{d_1}+\llbracket\boldsymbol{b}\rrbracket_{d}+\llbracket\boldsymbol{c}\rrbracket_{d+d_1}.$
        \item Get $\llbracket \boldsymbol{a}\oplus\boldsymbol{b}\rrbracket_d\leftarrow\mathcal{F}_{DegreeTrans}(\llbracket \boldsymbol{a}\oplus\boldsymbol{b}\rrbracket_{d+d_1})$.
    
    \end{enumerate}
    
    \Return $\llbracket \boldsymbol{a}\oplus\boldsymbol{b}\rrbracket_d$.
    \label{Xor}
\end{algorithm}

\subsection{Functionality Description}
\label{appendixA}
% Random
\begin{tcolorbox}[colback=white, colframe=black, boxsep=1pt, left=10pt, right=10pt]
{\centering  \textbf{Functionality} $\mathcal{F}_{Random}$\par} 
 
    \textbf{Input:} The functionality receives no inputs. 
    
    \textbf{Output:} Compute the following
   
    \begin{enumerate}[label=\arabic*., align=left, leftmargin=1.2em, labelwidth=1em, labelsep=0.2em]
            \item  Randomly sample a vector $\boldsymbol{r}\in\mathbb{F}_p^k$.
            \item Distribute shares of $\llbracket\boldsymbol{r} \rrbracket_d$ to the parties.
    \end{enumerate}
    
\end{tcolorbox}

% RandomBits
\begin{tcolorbox}[colback=white, colframe=black, boxsep=1pt, left=10pt, right=10pt]
{\centering  \textbf{Functionality} $\mathcal{F}_{RandomBits}$\par} 
 
          \textbf{Input:} The functionality receives no inputs.
          
           \textbf{Output:} Compute the following
   
\begin{enumerate}[label=\arabic*., align=left, leftmargin=1.2em, labelwidth=1em, labelsep=0.2em]
            \item Randomly sample a bit string $\boldsymbol{r}\in\mathbb{F}_2^k$.
            \item Distribute shares of $\llbracket\boldsymbol{r} \rrbracket_d$ to the parties.
\end{enumerate}
\end{tcolorbox}

% DegreeTrans
\begin{tcolorbox}[colback=white, colframe=black, boxsep=1pt, left=10pt, right=10pt]
    {\centering  \textbf{Functionality} $\mathcal{F}_{DegreeTrans}$\par} 

 \textbf{Input:} The functionality receives inputs $\llbracket \boldsymbol{a}\rrbracket_{d_1}$ (resp. $[a|_s]_{d_2}$), where $d\leq d_1\leq n -1,d-k+1\leq d_2\leq n -1$.
      \textbf{Output:} Compute the following
   
\begin{enumerate}[label=\arabic*., align=left, leftmargin=1.2em, labelwidth=1em, labelsep=0.2em]
  \item Reconstruct $\boldsymbol{a}\in \mathbb{F}_p^k$ (resp. $a\in \mathbb{F}_p$).
  \item Distribute shares of $\llbracket \boldsymbol{a}\rrbracket_{d_3}$ (resp. $[a|_s]_{d_4}$) to the parties, where $d\leq d_3\leq n -1,d-k+1\leq d_4\leq n -1,d_3\neq d_1,d_4\neq d_2.$
\end{enumerate}
\end{tcolorbox}

% PDN
\begin{tcolorbox}[colback=white, colframe=black, boxsep=1pt, left=10pt, right=10pt]
{\centering  \textbf{Functionality} $\mathcal{F}_{PDN}$\par} 
     \textbf{Input:} The functionality receives inputs $\llbracket \boldsymbol{a}\rrbracket_{d}$ and $\llbracket \boldsymbol{b}\rrbracket_{d}$.
       
     \textbf{Output:} Compute the following
\begin{enumerate}[label=\arabic*., align=left, leftmargin=1.2em, labelwidth=1em, labelsep=0.2em]
  \item Reconstruct $\boldsymbol{a}\in \mathbb{F}_p^k$ and $\boldsymbol{b}\in \mathbb{F}_p^k$ to compute $\boldsymbol{c}_i=\boldsymbol{a}_i\boldsymbol{b}_i,i\in[k]$.
  \item Distribute shares of $\llbracket \boldsymbol{c}\rrbracket_{d}$ to the parties.
\end{enumerate}
\end{tcolorbox}

% PMatmult
\begin{tcolorbox}[colback=white, colframe=black, boxsep=1pt, left=10pt, right=10pt]
{\centering  \textbf{Functionality} $\mathcal{F}_{PMatMult}^{Fixed}$\par} 
     \textbf{Input:} The functionality receives inputs $\llbracket A\rrbracket_{d}\in\mathbb{F}_{p}^{u\times v}$ and $\llbracket B\rrbracket_{d}\in\mathbb{F}_{p}^{v\times w}$.
       
     \textbf{Output:} Compute the following
\begin{enumerate}[label=\arabic*., align=left, leftmargin=1.2em, labelwidth=1em, labelsep=0.2em]
  \item Reconstruct $A^i\in \mathbb{F}_p^{u\times v}$ and $B^i\in \mathbb{F}_p^{v\times w},i\in[k]$ to compute $C'^i=A^i\cdot B^i,i\in[k]$.
  \item Compute $C^i=C'^i_{\alpha,\beta}/2^{\ell_x},i\in[k],\alpha\in[u],\beta\in[w]$.
  \item Distribute shares of $\llbracket C\rrbracket_{d}\in\mathbb{F}_p^{u\times w}$ to the parties.
\end{enumerate}
\end{tcolorbox}

% VecMatMult
\begin{tcolorbox}[colback=white, colframe=black, boxsep=1pt, left=10pt, right=10pt]
{\centering  \textbf{Functionality} $\mathcal{F}_{VecMatMult}$\par} 
       \textbf{Input:} The functionality receives inputs a vector $\llbracket \boldsymbol{a}\rrbracket_{d}\in\mathbb{F}_q^u$ and a matrix $\llbracket \boldsymbol{A}\rrbracket_{d}\in\mathbb{F}_q^{u\times vk}$.
       
       \textbf{Output:} Compute the following
\begin{enumerate}[label=\arabic*., align=left, leftmargin=1.2em, labelwidth=1em, labelsep=0.2em]

  \item Reconstruct $\boldsymbol{a}\in \mathbb{F}_p^{ku}$ and $\boldsymbol{A}\in \mathbb{F}_p^{ku\times vk}$ to compute $\boldsymbol{c}=\boldsymbol{a}\cdot A$.
  \item Distribute shares of $\llbracket \boldsymbol{c}\rrbracket_{d}\in\mathbb{F}_{q}^{v}$ to the parties.
\end{enumerate}
\end{tcolorbox}

% VecMatMultTrunc
\begin{tcolorbox}[colback=white, colframe=black, boxsep=1pt, left=10pt, right=10pt]
{\centering  \textbf{Functionality} $\mathcal{F}_{VecMatMult}^{Fixed}$\par} 
       \textbf{Input:} The functionality receives inputs a vector $\llbracket \boldsymbol{a}\rrbracket_{d}\in\mathbb{F}_q^u$ and a matrix $\llbracket \boldsymbol{A}\rrbracket_{d}\in\mathbb{F}_q^{u\times vk}$.
       
       \textbf{Output:} Compute the following
\begin{enumerate}[label=\arabic*., align=left, leftmargin=1.2em, labelwidth=1em, labelsep=0.2em]

  \item Reconstruct $\boldsymbol{a}\in \mathbb{F}_p^{ku}$ and $\boldsymbol{A}\in \mathbb{F}_p^{ku\times vk}.$
  \item  Compute $\boldsymbol{c}=\boldsymbol{a}\cdot A$ and compute $\boldsymbol{c}_i'=\boldsymbol{c}_i/2^{\ell_x}$.
  \item Distribute shares of $\llbracket \boldsymbol{c}'\rrbracket_{d}\in\mathbb{F}_{q}^{v}$ to the parties.
\end{enumerate}
\end{tcolorbox}

% RandomPairs
\begin{tcolorbox}[colback=white, colframe=black, boxsep=1pt, left=10pt, right=10pt]
{\centering  \textbf{Functionality} $\mathcal{F}_{VM-RandTuple}$\par} 
\textbf{Input:} The functionality receives no inputs.

\textbf{Output:} Compute the following
\begin{enumerate}[label=\arabic*., align=left, leftmargin=1.2em, labelwidth=1em, labelsep=0.2em]

  \item Randomly sample a vecotr $\boldsymbol{r}\in\mathbb{F}_p^{kv}$ and compute a vector $\boldsymbol{r}'=(\sum_{i=0}^{k-1}\boldsymbol{r}_i,\cdots,\sum_{i=(v-1)k}^{kv-1}\boldsymbol{r}_i)\in\mathbb{F}_p^v$.
  \item Distribute shares of the random pairs $(\llbracket \boldsymbol{r}\rrbracket_{2d},\llbracket \boldsymbol{r'}\rrbracket_{d})$ to the parties.
\end{enumerate}
\end{tcolorbox}

% TruncTriple
\begin{tcolorbox}[colback=white, colframe=black, boxsep=1pt, left=10pt, right=10pt]
{\centering  \textbf{Functionality} $\mathcal{F}_{TruncTriple}$\par} 
\textbf{Input:} The functionality receives no inputs.

\textbf{Output:} Compute the following
\begin{enumerate}[label=\arabic*., align=left, leftmargin=1.2em, labelwidth=1em, labelsep=0.2em]

  \item Randomly sample a vecotr $\boldsymbol{r}\in\mathbb{F}_p^{kv}$ and compute a vector $\boldsymbol{r}'=(\frac{\sum_{i=0}^{k-1}\boldsymbol{r}_i}{2^{\ell_x}},\cdots,\frac{\sum_{i=(v-1)k}^{kv-1}\boldsymbol{r}_i}{2^{\ell_x}})\in\mathbb{F}_p^v$.
  \item Distribute shares of the random pairs $(\llbracket \boldsymbol{r}\rrbracket_{2d},\llbracket \boldsymbol{r'}\rrbracket_{d})$ to the parties.
\end{enumerate}
\end{tcolorbox}

% PackTrans
\begin{tcolorbox}[colback=white, colframe=black, boxsep=1pt, left=10pt, right=10pt]
{\centering  \textbf{Functionality} $\mathcal{F}_{PackTrans}$\par} 
\textbf{Input:} The functionality receives inputs $\llbracket \boldsymbol{x}\rrbracket_d.$

\textbf{Output:} Compute the following
\begin{enumerate}[label=\arabic*., align=left, leftmargin=1.2em, labelwidth=1em, labelsep=0.2em]

  \item Reconstruct $\boldsymbol{x}=(\boldsymbol{x}_0,\cdots,\boldsymbol{x}_{k-1})$ and set $\boldsymbol{x}^i=(\boldsymbol{x_i},\cdots,\boldsymbol{x_i})\in \mathbb{F}_p^{k},i\in[k]$.
  \item Distribute shares of $\llbracket \boldsymbol{x}^i\rrbracket_{d}$ to the parties.
\end{enumerate}
\end{tcolorbox}

% PreMult
\begin{tcolorbox}[colback=white, colframe=black, boxsep=1pt, left=10pt, right=10pt]
{\centering  \textbf{Functionality} $\mathcal{F}_{PreMult}$\par} 
\textbf{Input:} The functionality receives inputs $\llbracket \boldsymbol{a}^0\rrbracket_d,\cdots,$

$\llbracket \boldsymbol{a}^{\ell-1}\rrbracket_d,$ where $\boldsymbol{a}^i$ is a vector of length $k$, $\boldsymbol{a}^i=(\boldsymbol{a}_0^i,\cdots,\boldsymbol{a}_{k-1}^i).$

\textbf{Output:} Compute the following
\begin{enumerate}[label=\arabic*., align=left, leftmargin=1.2em, labelwidth=1em, labelsep=0.2em]

  \item Reconstruct 
  
  $(\boldsymbol{a}^{0}_0,\cdots,\boldsymbol{a}^{0}_{k-1},\boldsymbol{a}^{1}_0,\cdots,\boldsymbol{a}^{1}_{k-1},\cdots,\boldsymbol{a}^{\ell-1}_0,\cdots,\boldsymbol{a}^{\ell-1}_{k-1})$.
  \item Compute all prefix products $\boldsymbol{a}'^0=(\boldsymbol{a}_0^0,\cdots,\boldsymbol{a}_{k-1}^0),$

$ \boldsymbol{a}'^1=(\boldsymbol{a}_0^0\cdot\boldsymbol{a}_0^1,\cdots,\boldsymbol{a}_{k-1}^0\cdot \boldsymbol{a}_{k-1}^1)$,

$ \boldsymbol{a}'^{\ell-1}=(\Pi_{i=0}^{\ell-1}\boldsymbol{a}_0^{i},\cdots,\Pi_{i=0}^{\ell-1}\boldsymbol{a}_{k-1}^{i})$.
  \item Distribute the shares of these these products $\llbracket\boldsymbol{a}'^i\rrbracket_d,i\in[\ell]$ to the parties.
\end{enumerate}
\end{tcolorbox}

% PreOR
\begin{tcolorbox}[colback=white, colframe=black, boxsep=1pt, left=10pt, right=10pt]
{\centering  \textbf{Functionality} $\mathcal{F}_{PreOR}$\par} 
\textbf{Input:} The functionality receives inputs $\llbracket \boldsymbol{a}^0\rrbracket_d,\cdots,$

$\llbracket \boldsymbol{a}^{\ell-1}\rrbracket_d,$ where $\boldsymbol{a}^i$ is a bit vector of length $k$, $\boldsymbol{a}^i=(\boldsymbol{a}_0^i,\cdots,\boldsymbol{a}_{k-1}^i),\boldsymbol{a}^i_{j}\in\{0,1\}.$

\textbf{Output:} Compute the following
\begin{enumerate}[label=\arabic*., align=left, leftmargin=1.2em, labelwidth=1em, labelsep=0.2em]

  \item Reconstruct the bitwise decomposition of $k$ values $\boldsymbol{a}_j,j\in[k]$ such that $\boldsymbol{a}_j=\sum_{i=0}^{\ell-1}2^i\boldsymbol{a}_j^{i}$.
  \item Compute all prefix-ORs $\boldsymbol{a}'^0=(\boldsymbol{a}_0^0,\cdots,\boldsymbol{a}_{k-1}^0),$

$ \boldsymbol{a}'^1=(\boldsymbol{a}_0^0\vee\boldsymbol{a}_0^1,\cdots,\boldsymbol{a}_{k-1}^0\vee \boldsymbol{a}_{k-1}^1)$,

$ \boldsymbol{a}'^{\ell-1}=(\vee_{i=0}^{\ell-1}\boldsymbol{a}_0^{i},\cdots,\vee_{i=0}^{\ell-1}\boldsymbol{a}_{k-1}^{i})$.
  \item Distribute the shares of these prefix-ORs $\llbracket\boldsymbol{a}'^i\rrbracket_d,i\in[\ell]$ to the parties.
\end{enumerate}
\end{tcolorbox}

%Xor
\begin{tcolorbox}[colback=white, colframe=black, boxsep=1pt, left=10pt, right=10pt]
{\centering  \textbf{Functionality} $\mathcal{F}_{Xor}$\par} 
\textbf{Input:} The functionality receives inputs $\llbracket \boldsymbol{a}\rrbracket_{d_1},\llbracket \boldsymbol{b}\rrbracket_d,k-1\leq d_1\leq d,$ where $\boldsymbol{a}$ (resp. $\boldsymbol{b}$) is a bit vector of length $k$ and $\boldsymbol{a}_i$ (resp. $\boldsymbol{b}_i$)$\in\{0,1\}.$

\textbf{Output:} Compute the following
\begin{enumerate}[label=\arabic*., align=left, leftmargin=1.2em, labelwidth=1em, labelsep=0.2em]

  \item Reconstruct $\boldsymbol{a},\boldsymbol{b}$ to compute $\boldsymbol{c}_i=\boldsymbol{a}_i\oplus \boldsymbol{b}_i,i\in[k]$.
  \item Distribute shares of $\llbracket \boldsymbol{c}\rrbracket_{d}$ to the parties.
\end{enumerate}
\end{tcolorbox}

%Bitwise−LT
\begin{tcolorbox}[colback=white, colframe=black, boxsep=1pt, left=10pt, right=10pt]
{\centering  \textbf{Functionality} $\mathcal{F}_{Bitwise-LT}$\par} 
\textbf{Input:} The functionality receives inputs $k$  values $\boldsymbol{a}_j,j\in[k]$ and $\ell$ shares $\llbracket \boldsymbol{b}^i\rrbracket_d,i\in[\ell]$ where $\boldsymbol{b}^i=\{0,1\}^k,\boldsymbol{b}_j=\sum_{i=0}^{\ell}2^i\boldsymbol{b}_j^i.$

\textbf{Output:} Compute the following
\begin{enumerate}[label=\arabic*., align=left, leftmargin=1.2em, labelwidth=1em, labelsep=0.2em]
    \item Reconstruct $\boldsymbol{b}^i$ to compute $\boldsymbol{b}_j=\sum_{i=0}^{\ell-1}2^i\boldsymbol{b}_j^i,j\in[k].$
    \item Compute $\boldsymbol{c}_j=(\boldsymbol{a}_j<\boldsymbol{b}_j),$ where $c\in\{0,1\}.$
    \item Distribute the shares of $\llbracket \boldsymbol{c}\rrbracket_d$ to the parties.

\end{enumerate}
\end{tcolorbox}

%DReLU
\begin{tcolorbox}[colback=white, colframe=black, boxsep=1pt, left=10pt, right=10pt]
{\centering  \textbf{Functionality} $\mathcal{F}_{DReLU}$\par} 
\textbf{Input:} The functionality receives inputs $\llbracket \boldsymbol{a}\rrbracket_d$.

\textbf{Output:} Compute the following
\begin{enumerate}[label=\arabic*., align=left, leftmargin=1.2em, labelwidth=1em, labelsep=0.2em]

  \item Reconstruct $\boldsymbol{a}=(\boldsymbol{a}_0,\cdots,\boldsymbol{a}_{k-1})$ to compute $\boldsymbol{b}_i=\text{DReLU}(\boldsymbol{a}_i),i\in[k]$.
  \item Distribute the shares of $\llbracket \boldsymbol{b}\rrbracket_d$ to the parties.
\end{enumerate}
\end{tcolorbox}

\subsection{Proofs}
\setcounter{theorem}{0}
\setcounter{proposition}{0}
\label{appendixC}

\begin{proposition}
If $xy\leq 2^k$ for some $k<\ell$, then with probability at least $1-2^{k}/(2^{\ell}-1)$ it holds that $z = \lfloor xy/2^{\ell_x}\rfloor+v$ for some $v\in \{0, 1\}$.
\end{proposition}

\begin{proof}
    Let $z=xy,z'=(z+r)/2^{\ell_x}-r',$
    \begin{align}
            z'&=\frac{z+r-((z+r)\pmod{2^{\ell_x}})}{2^{\ell_x}}-r'\\
        &=\frac{z+r-(2^{\ell}-1)u-(z\pmod{2^{\ell_x}}+r\pmod{2^{\ell_x}})}{2^{\ell_x}}\\
        &+\frac{2^{\ell_x}v}{2^{\ell_x}}-r'\\
        &=\frac{z-z\pmod{2^{\ell_x}}}{2^{\ell_x}}+\frac{r-r\pmod{2^{\ell_x}}}{2^{\ell_x}}\\
        &-r'+v-\frac{(2^{\ell}-1)u}{2^{\ell_x}}\\
        &=z/2^{\ell_x}+r'+v-r'-\frac{(2^{\ell}-1)u}{2^{\ell_x}}\\
        &=z/2^{\ell_x}+v-2^{{\ell}-{\ell_x}}u,
        \end{align}
        where $u=((z+r)\geq p),p=2^\ell-1,v=((z'\pmod{2^d}+r'\pmod{2^{\ell_x}})\geq 2^{\ell_x}).$ The error is $v-2^{{\ell}-{\ell_x}}u$. The probability that $u=1$ is the probability that $r\geq p-z$. Moreover, since $z<2^{k}$ and $r$ is uniformly  random in $\mathbb{F}_p$, this probability is upper bounded by $2^{k}/(2^{\ell}-1)$.
\end{proof}

\begin{proposition}
A Shamir secret share $[x|_a]_d,$ where $f(a)=x,n=2d+1,a\notin\{1,2,\cdots,n\},f$ is a degree-$d$ polynomial. Each of the $n$ parties holds its own share $f(i),i\in\{1,2,\cdots, n\}$. The secret value $x$ can be locally converted to $[x|_b]_{2d}$, with the conversion performed as $f'(i)=f(i)\Pi_{j=1,j\neq i}^n\frac{a-j}{b-j}$.
\end{proposition}
\begin{proof}
        \begin{align}
            x &= \sum_{i=1}^nf(i)\Pi_{j=1,j\neq i}^n\frac{a-j}{i-j} =\sum_{i=1}^{t+1}f(i)\Pi_{j=1,j\neq i}^{t+1}\frac{a-j}{i-j} \\
                 &= \sum_{i=1}^{n}(\Pi_{j=1,j\neq i}^{n}\frac{b-j}{i-j})(f(i)\Pi^n_{j=1,j\neq i}\frac{a-j}{b-j}) \\
                 &= \sum_{i=1}^{n}(f(i)\Pi_{j=1,j\neq i}^n\frac{b-j}{i-j})(\Pi_{j=1,j\neq i}^n\frac{a-j}{b-j})=x  
        \end{align}
\end{proof}

For security proof, we only describe the simulation for protocol $\Pi_{VecMatMult}^{Fixed}$. The security proof for other protocols are simple compositions of local computations and invocations of ideal functionalities.

\begin{definition}[Semi-Honest Security]
Let $\Pi$ be a $n$-party protocol running in real-world and $\mathcal{F}:(\{0,1\}^{\lambda_1})^n\rightarrow(\{0,1\}^{\lambda_2})^n$ be the ideal randomized functionality. We say $\Pi$ securely computes $\mathcal{F}$
in the presence of a subset $\mathcal{C}\subset \{P_1,P_2,\cdots,P_n\}$ of the parties can be corrupted by semi-honest adversaries if for every corrupted $P_i,i\in \mathcal{C}$ and every input $\boldsymbol{x}\in(\{0,1\}^{\lambda_1})^n,$ there exists a simulator $\mathcal{S}$ such that: 
\begin{center}
    $\{view_{i,\Pi}(\boldsymbol{x}),output_{\Pi}(\boldsymbol{x})\}\overset{c}{\approx}\{\mathcal{S}(\boldsymbol{x}_i,\mathcal{F}_i(\boldsymbol{x})),\mathcal{F}(\boldsymbol{x})\}$,
\end{center}
where $view_{i,\Pi}(\boldsymbol{x})$ is the view of $P_i$ in the execution of $\Pi$ on $\boldsymbol{x}$, $output_{\Pi}(\boldsymbol{x})$ is the output of all parties and $\mathcal{F}_i(\boldsymbol{x})$ denotes the $i$-th output of $\mathcal{F}(\boldsymbol{x}).$
\end{definition}

\begin{theorem}
$\Pi_{VecMatMult}^{Fixed}$ securely realizes $\mathcal{F}_{VecMatMult}^{Fixed}$ in the $\mathcal{F}_{TruncTriple}$-hybrid model with abort, in the presence of a fully semi-honest adversary controlling $t$ corrupted parties.
\end{theorem}

\begin{proof}
    We will construct a simulator $\mathcal{S}$ to simulate the behaviors of honest parties. Let $\mathcal{C}$ denote the set of corrupted parties and $\mathcal{H}$ denote the set of honest parties. The simulator $\mathcal{S}$ works as follows.
    
     In the beginning, $\mathcal{S}$ receives the input shares of $\llbracket \boldsymbol{a}\rrbracket_d$ and $\llbracket A\rrbracket_d$ held by corrupted parties. When invoking $\mathcal{F}_{TruncTriple}$, $\mathcal{S}$ invokes the simulator of $\mathcal{F}_{TruncTriple}$ and receives from the adversary the shares of  $\llbracket \boldsymbol{r}\rrbracket_{2d}\in\mathbb{F}_p^v,\llbracket \boldsymbol{r}'\rrbracket_{d}\in \mathbb{F}_p^{\lceil v/k \rceil}$ held by corrupted parties.
    
    In Step 1, for each honest party, $\mathcal{S}$ samples a random vector as its shares of $\llbracket \boldsymbol{a}\rrbracket_d\cdot\llbracket A\rrbracket_d+\llbracket \boldsymbol{r}\rrbracket_{2d}.$ For each corrupted party, S computes its share of $\llbracket \boldsymbol{a}\rrbracket_d\cdot\llbracket A\rrbracket_d+\llbracket \boldsymbol{r}\rrbracket_{2d}.$  
    
    In Step 2, depending on whether $P_1$ is a corrupted party, there are two cases:
    \begin{itemize}
        \item If $P_1$ is an honest party, $\mathcal{S}$ uses these shares to reconstruct the secret $\boldsymbol{z}=\boldsymbol{a}\cdot A+\boldsymbol{r}$, and sends $\boldsymbol{z}$ back to corrupted parties. 
        \item If $P_1$ is a corrupted party, $\mathcal{S}$ sends the shares of  $\boldsymbol{z}=\boldsymbol{a}\cdot A+\boldsymbol{r}$ of honest parties to $P_1$. $P_1$ also receives the shares from corrupted parties. Then $P_1$ can reconstruct $\boldsymbol{z}$ and send $\boldsymbol{z}$ back to corrupted parties.
    \end{itemize}
     In Step 3, $\mathcal{S}$ computes the shares of $\boldsymbol{c}$ held by corrupted parties. Note that $\mathcal{S}$ has computed the shares of $\llbracket \boldsymbol{r}'\rrbracket_{d}$ held by corrupted parties when simulating $\mathcal{F}_{TruncTriple}$.

     In the real view, $\boldsymbol{z}$ is uniformly random since $\boldsymbol{r}$ is uniformly random. Hence, the view generated by the simulator $\mathcal{S}$ is indistinguishable from the real view underlying the security of $\mathcal{F}_{TruncTriple}$.
\end{proof}

\begin{theorem}
\label{RandomPairs}
$\Pi_{VM-RandTuple}$ securely realizes $\mathcal{F}_{VM-RandTuple}$ in the stand-alone model with abort, in the presence of a fully semi-honest adversary controlling $t$ corrupted parties.
\end{theorem}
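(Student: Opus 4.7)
The plan is to establish correctness and then build a simulator for a semi-honest adversary in the stand-alone model, following the DN-style randomness-extraction template.

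Correctness decomposes into three linear checks. First, every honest $P_i$ sets $\boldsymbol{b}^i_u=\sum_{j=uk}^{(u+1)k-1}\boldsymbol{a}^i_j$ by construction in Step~2; since $\mathrm{Van}(n,n-t)^T$ is linear and is applied coordinate-wise to the columns of Step~3, this block-sum identity is transported to the secrets underlying the rows of $A$ and $B$. Second, a degree count gives that $\boldsymbol{f}_j=\sum_{i=0}^{k-1}\llbracket E_i\rrbracket_{k-1}\cdot B_{j,i}$ is a valid degree-$(k-1+t)=d$ PSS, and by the Shamir-at-distinct-positions-to-PSS identity recalled at the end of \S\ref{PSS} the $i$-th packed slot of $\boldsymbol{f}_j$ equals the secret of $B_{j,i}$. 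Third, combining the two previous observations with $\boldsymbol{g}_{jk+u}=A_{j,u}$ yields the structural relation $\boldsymbol{r}'_i=\sum_{j=ik}^{(i+1)k-1}\boldsymbol{r}_j$ demanded by $\mathcal{F}_{VM-RandTuple}$, while an independent degree check confirms that $\llbracket\boldsymbol{r}\rrbracket_{2d}$ and $\llbracket\boldsymbol{r}'\rrbracket_{d}$ are bona fide sharings at their claimed degrees.

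For security, let $C$ denote the $t=d-k+1$ corrupted indices and $H=[n]\setminus C$ the honest set. I would describe a simulator $\mathcal{S}$ that queries $\mathcal{F}_{VM-RandTuple}$, obtains the corrupted shares of $(\llbracket\boldsymbol{r}\rrbracket_{2d},\llbracket\boldsymbol{r}'\rrbracket_d)$, and then for each $h\in H$ samples a fresh random $\boldsymbol{a}^h\in\mathbb{F}_p^{k^2}$, derives $\boldsymbol{b}^h$ deterministically, and picks the randomising polynomials of $\llbracket\boldsymbol{a}^h\rrbracket_{2d}$ and of $\{[\boldsymbol{b}^h_u|_{\boldsymbol{s}_u}]_t\}_u$ so that after the deterministic Vandermonde combination of Steps~3--5 the $C$-indexed shares of $\boldsymbol{g}$ and $\boldsymbol{f}$ match the target tuple. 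The central tool is the classical randomness-extraction property of $\mathrm{Van}(n,n-t)^T$: every square submatrix is invertible, so fixing the $t$ adversarial input rows induces a bijection from the remaining $n-t$ honest input rows to the $n-t$ extracted output rows. Together with the secret-hiding bound $t\le 2d-k+1$ for the degree-$2d$ PSS and standard Shamir secrecy for the degree-$t$ sharings (where $t$ shares in $\{1,\ldots,n\}$ are independent of the secret at $\boldsymbol{s}_u\notin\{1,\ldots,n\}$), this guarantees that the adversary's simulated view is identically distributed to the real view.

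The main obstacle is the joint consistency across the two parallel extractions: $A$ transports the sharings of $\boldsymbol{a}^h$ and $B$ transports the sharings of $\boldsymbol{b}^h$, yet $\boldsymbol{b}^h$ is a deterministic linear image of $\boldsymbol{a}^h$, coupling the two sides. I would resolve this by noting that the coupling only constrains the underlying secrets, not the randomising polynomial coefficients: once $\boldsymbol{a}^h$ and hence $\boldsymbol{b}^h$ are fixed, the coefficients of the PSS polynomials packing $\boldsymbol{a}^h$ and of the $k$ Shamir polynomials packing $\boldsymbol{b}^h$ remain free. Matching the $t$ corrupted targets per sharing then reduces to two independent rank-$t$ linear systems over coefficient spaces of dimension at least $d$ for each degree-$2d$ PSS and $t$ for each degree-$t$ Shamir sharing; solving them and sampling uniformly from the solution sets produces a perfectly distributed view, completing the proof.
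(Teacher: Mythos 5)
Your argument is correct in substance, and it supplies considerably more than the paper itself does for this theorem: the proof is deferred to Appendix~\ref{appendixC}, which only writes out a simulator for $\Pi_{VecMatMult}^{Trunc}$ and dispatches the remaining theorems with the remark that they are ``simple compositions of local computations and invocations of ideal functionalities.'' That remark does not really fit Theorem~\ref{RandomPairs}: $\Pi_{VM-RandTuple}$ is claimed in the \emph{stand-alone} model and consists of genuine message exchanges (each party dealing degree-$2d$ packed sharings and degree-$t$ Shamir sharings followed by a public linear extraction), so the argument actually required is exactly the Vandermonde randomness-extraction argument you give. Your correctness decomposition (linearity of $Van(n,n-t)^T$ transports the block-sum relation from the $\boldsymbol{a}^i$ to the extracted secrets; $k-1+t=d$ makes each $\boldsymbol{f}_j$ a valid degree-$d$ PSS via the Shamir-to-PSS identity of \S\ref{PSS}), your privacy ingredients (any $t\le 2d-k+1$ shares of a degree-$2d$ PSS, and any $t$ shares of a degree-$t$ Shamir sharing with secret at $\boldsymbol{s}_u\notin\{1,\dots,n\}$, are uniform and independent of the secrets; every $(n-t)\times(n-t)$ submatrix of $Van(n,n-t)^T$ is invertible), and your resolution of the $\boldsymbol{a}^h$/$\boldsymbol{b}^h$ coupling --- it lives only in the secrets, of which the corrupted shares are independent --- are precisely the points a full proof must make.

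Two small tightenings. First, the honest parties' shares at corrupted positions are generally not sampled from a positive-dimensional solution set: inverting the honest $(n-t)\times(n-t)$ Vandermonde block determines them uniquely from the target output shares and the corrupted parties' own (known) contributions, and for the degree-$t$ Shamir sharings the secret together with those $t$ values then pins down the polynomial completely, so only the degree-$2d$ sharings retain residual freedom (of dimension $d$, as you say). Second, a fully rigorous simulation must also match the joint distribution of the view with the honest parties' outputs; here note that the real-world output $\llbracket\boldsymbol{r}'\rrbracket_d$ is not a uniformly random degree-$d$ sharing of its secrets but lies in the span of $\llbracket E_i\rrbracket_{k-1}$ times degree-$t$ polynomials, so the functionality's ``distribute shares'' step must be read as producing sharings with that distribution (or as letting the simulator fix the corrupted shares). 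That imprecision is inherited from the paper's functionality description rather than introduced by you, but it is where the remaining care is needed.
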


\begin{proof} 
As for correctness, we need to generate the random pairs $(\llbracket\boldsymbol{r}\rrbracket_{2d},\llbracket\boldsymbol{r}'\rrbracket_{d}),$ where $\boldsymbol{r}=\{\boldsymbol{r}_i\}^{i=0}_{k^2-1},\llbracket\boldsymbol{r}\rrbracket_{2d}\in\mathbb{F}_{p}^{k},\boldsymbol{r}'=\{\sum_{j=i\cdot k}^{(i+1)\cdot k-1}\boldsymbol{r}_j\}_{i=0}^{k-1},\llbracket\boldsymbol{r}'\rrbracket_{d}\in\mathbb{F}_{p}$. $\llbracket\boldsymbol{r}\rrbracket_{2d}$ can be easily generated. First, $P_i$ randomly samples a vector $\boldsymbol{a}^i\in\mathbb{F}_{p}^{k^2}$ and sequentially pack every $k$ values into a degree-$2d$ PSS, i.e., $\llbracket\boldsymbol{a}^i\rrbracket_{2d}\in\mathbb{F}_{p}^{k}$ and distribute them to the other parties. Then, all parties can utilize a Vandermonde
matrix for the batch generation of such random shares. Every $k$ consecutive elements of the vector $\boldsymbol{r}'$ are sequentially packed into a PSS, i.e., $\llbracket \boldsymbol{r}'\rrbracket_d\in\mathbb{F}_{p}$. Each element of vector $\boldsymbol{r}'$ corresponds to the sum of $k$ values from vector $\boldsymbol{r}$, where $k$ values are respectively located at positions $\boldsymbol{s}_0$ to $\boldsymbol{s}_{k-1}$. Hence, $P_i$ first performs local summation on vector $\boldsymbol{a}^i$ for every $k$ values to obtain vector $\boldsymbol{b}^i$. Then $P_i$ perform a threshold-$(n, t)$ Shamir secret sharing sequentially on every $k$ values of vector $\boldsymbol{b}^i$, storing the secrets at corresponding positions in $\boldsymbol{s}_0$ to $\boldsymbol{s}_{k-1}$ in Step 2. Finally, parties combine the Shamir secret shares located at different positions into a PSS using a unit vector $\llbracket E_i\rrbracket_{k-1}$. Note that $\llbracket E_i\rrbracket_{k-1}$ is a degree-${k-1}$ PSS and each element of vector $\boldsymbol{f}^i$ is a degree-${t}$ Shamir secret share, so the resulting share has degree $d=t+k-1$, yielding $\llbracket \boldsymbol{r}'\rrbracket_d$. The privacy guarantees stem from Shamir secret sharing and packed Shamir secret sharing.
\end{proof}

\begin{theorem}
\label{theoremVector-Matrix}
$\Pi_{VecMatMult}$ securely realizes $\mathcal{F}_{VecMatMult}$ in the $\mathcal{F}_{VM-RandTuple}$-hybrid model with abort, in the presence of a fully semi-honest adversary controlling $t$ corrupted parties.
\end{theorem}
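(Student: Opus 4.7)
The plan is to exhibit a PPT simulator $\mathcal{S}$ that, given the corrupted parties' input shares and interacting with $\mathcal{F}_{VecMatMult}$, produces a joint view of the adversary perfectly indistinguishable from the real execution of $\Pi_{VecMatMult}$ in the $\mathcal{F}_{VM-RandTuple}$-hybrid model. Let $\mathcal{C}$ denote the set of at most $t=d-k+1$ corrupted parties. The argument hinges on the standard privacy of PSS: any $t$ shares of a degree-$d$ or degree-$2d$ PSS are independent of the $k$ packed secrets. As a first step, $\mathcal{S}$ emulates the hybrid call to $\mathcal{F}_{VM-RandTuple}$ by handing the corrupted parties uniformly random field elements as their shares of $\llbracket\boldsymbol{r}\rrbracket_{2d}$ and $\llbracket\boldsymbol{r'}\rrbracket_d$, which is identically distributed to the real call; $\mathcal{S}$ then queries $\mathcal{F}_{VecMatMult}$ once to obtain the corrupted output shares $\llbracket\boldsymbol{c}\rrbracket_d^{\mathcal{C}}$ that the simulation must reproduce.

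I would then split into two sub-cases. In Case~A ($P_1\notin\mathcal{C}$), the Step~1 messages flow only from $\mathcal{C}$ to the honest $P_1$ and need no simulation, while for Step~2 $\mathcal{S}$ programs each corrupted recipient's share of $\llbracket\boldsymbol{f}\rrbracket_d$ as $\llbracket\boldsymbol{c}\rrbracket_d^{\mathcal{C}}+\llbracket\boldsymbol{r'}\rrbracket_d^{\mathcal{C}}$; by the identity $\boldsymbol{f}=\boldsymbol{c}+\boldsymbol{r'}$ that holds in the real protocol and linearity of PSS, these shares are distributed exactly as in the real execution. In Case~B ($P_1\in\mathcal{C}$), $\mathcal{S}$ samples $\boldsymbol{z}\in\mathbb{F}_p^{kv}$ uniformly---which matches the real distribution because $\boldsymbol{z}$ is blinded by the uniform mask $\boldsymbol{r}$ whose secrets are hidden from $\mathcal{A}$---and then completes the honest-party shares of $\llbracket\boldsymbol{z}\rrbracket_{2d}$ by choosing the remaining $d$ degrees of freedom of the degree-$2d$ polynomial uniformly, subject to the $t$ corrupted shares (deterministically fixed by the input and $\llbracket\boldsymbol{r}\rrbracket_{2d}$ shares) and the $k$ secret positions. $\mathcal{S}$ derives $\boldsymbol{f}$ from $\boldsymbol{z}$ as in the protocol and produces the unique degree-$d$ polynomial for $\llbracket\boldsymbol{f}\rrbracket_d$ whose $k$ secrets are $\boldsymbol{f}$ and whose $t$ corrupted shares equal $\llbracket\boldsymbol{c}\rrbracket_d^{\mathcal{C}}+\llbracket\boldsymbol{r'}\rrbracket_d^{\mathcal{C}}$; the $t+k=d+1$ constraints exactly pin down the $d+1$ coefficients, and any honest-party share is read off by evaluation.

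The main obstacle I anticipate is the distributional equivalence in Case~B: one must verify that the simulator's programmed polynomial $\llbracket\boldsymbol{f}\rrbracket_d$ has the same conditional distribution as the one a corrupted $P_1$ would freshly sample in the real execution. The key observation is that in the real world the mask $\boldsymbol{r'}$ contributes an independent uniform offset to the corrupted shares of $\llbracket\boldsymbol{f}\rrbracket_d$, so those shares are uniform over $\mathbb{F}_p^{t\cdot\lceil v/k\rceil}$; symmetrically, in the ideal world both $\llbracket\boldsymbol{c}\rrbracket_d^{\mathcal{C}}$ from $\mathcal{F}_{VecMatMult}$ and the simulated $\llbracket\boldsymbol{r'}\rrbracket_d^{\mathcal{C}}$ are uniform, so their sum is uniform too. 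Combining this with the perfect emulation of $\mathcal{F}_{VM-RandTuple}$, the linearity argument of Case~A, and the uniformity of $\boldsymbol{z}$ in Case~B yields perfect indistinguishability and completes the proof.
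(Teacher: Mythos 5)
Your proposal is correct and follows essentially the same route as the paper, which proves the analogous statement for $\Pi_{VecMatMult}^{Trunc}$ in Appendix~\ref{appendixC}: a simulator in the hybrid model that hands the corrupted parties random shares from $\mathcal{F}_{VM-RandTuple}$, splits on whether $P_1$ is corrupted, and argues indistinguishability from the uniformity of $\boldsymbol{z}=\boldsymbol{a}\cdot A+\boldsymbol{r}$ induced by the mask $\boldsymbol{r}$. Your write-up is in fact somewhat more careful than the paper's, since it explicitly programs $\llbracket\boldsymbol{f}\rrbracket_d^{\mathcal{C}}=\llbracket\boldsymbol{c}\rrbracket_d^{\mathcal{C}}+\llbracket\boldsymbol{r'}\rrbracket_d^{\mathcal{C}}$ for output consistency and verifies the $t+k=d+1$ degree-of-freedom count for the interpolated polynomial.
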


\begin{proof}
The correctness can be easily proved by the theory of block matrix multiplication. As for privacy, each $\boldsymbol{z}_i$ revealed in Step 2 leaks no information about $\boldsymbol{a}$ or $A$. As $\boldsymbol{r}$ is assumed to be uniformly random, $\boldsymbol{z}_i$ is a uniformly random.
\end{proof}

\begin{theorem}
\label{theoremVector-MatrixTrunc}
$\Pi_{VecMatMult}^{Fixed}$ securely realizes $\mathcal{F}_{VecMatMult}^{Fixed}$ in the $\mathcal{F}_{TruncTriple}$-hybrid model with abort, in the presence of a fully semi-honest adversary controlling $t$ corrupted parties.
\end{theorem}
\begin{proof} 
The correctness holds because $\boldsymbol{f}/2^{\ell_x}-\boldsymbol{r}'=(\boldsymbol{a}\cdot A)/2^{\ell_x}$. The privacy can be proved similarly as Theorem~\ref{theoremVector-Matrix}. 
\end{proof}

\begin{theorem}
\label{theoremTruncTriple}
$\Pi_{TruncTriple}$ securely realizes $\mathcal{F}_{TruncTriple}$ in the $(\mathcal{F}_{RandomBits},\mathcal{F}_{Random},\mathcal{F}_{DegreeT rans})$-hybrid model with abort, in the presence of a fully semi-honest adversary controlling $t$ corrupted parties.
\end{theorem}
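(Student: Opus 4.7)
\smallskip
\noindent\textbf{Proof proposal.} The plan is a standard hybrid-model simulation argument. A key observation up front is that $\Pi_{TruncTriple}$ triggers no direct party-to-party messages: every communication step is absorbed into a call to $\mathcal{F}_{RandomBits}$, $\mathcal{F}_{Random}$, or $\mathcal{F}_{DegreeTrans}$, while steps 2, 3, the $\mathbf{ShConvert}$ invocations of step 4, and the $\llbracket E_i\rrbracket_{k-1}$-combination of step 5 are purely local linear computations on (P)SS shares. Consequently, the full view of any coalition $\mathcal{C}$ of at most $t=d-k+1$ corrupted parties is a deterministic function of the shares handed back by the three functionalities.

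I would construct a simulator $\mathcal{S}$ that, upon receiving from $\mathcal{F}_{TruncTriple}$ the ideal output tuple $(\llbracket\boldsymbol{r}\rrbracket_{2d},\llbracket\boldsymbol{r}'\rrbracket_d)$ restricted to $\mathcal{C}$, proceeds as follows. For each of the $\ell$ calls to $\mathcal{F}_{RandomBits}$, $\mathcal{S}$ samples uniformly random shares of $\llbracket\boldsymbol{b}^i\rrbracket_d$ for the parties in $\mathcal{C}$; likewise for the $k-1$ calls to $\mathcal{F}_{Random}$ it samples uniform shares of $\llbracket\boldsymbol{w}^i\rrbracket_d$, $i\in[1,k-1]$. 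It then executes steps 2--3 locally on these simulated shares to obtain the corrupted shares of $\llbracket\boldsymbol{q}\rrbracket_d$ and $\llbracket\boldsymbol{w}^0\rrbracket_d$, applies $\mathbf{ShConvert}$ locally, and simulates each $\mathcal{F}_{DegreeTrans}$ call by sampling fresh random shares of the desired degree for $\mathcal{C}$ that are consistent with the previously delivered input shares. Finally $\mathcal{S}$ computes the combination of step 5 locally and outputs the resulting transcript.

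For indistinguishability I would argue share by share. The $(d-k+1)$-privacy of degree-$d$ PSS guarantees that in the real execution each ideal-functionality response looks uniform and independent to $\mathcal{C}$, matching $\mathcal{S}$'s sampling exactly; the deterministic local post-processing (addition with public scalars $2^i$, subtraction to define $\llbracket\boldsymbol{w}^0\rrbracket_d$, $\mathbf{ShConvert}$, and multiplication by $\llbracket E_i\rrbracket_{k-1}$) is identical in both worlds, so the joint distribution of all intermediate shares coincides. For the final output, I would invoke Theorem~\ref{theoremPSS2SS} to track how $\mathbf{ShConvert}$ repositions each $\boldsymbol{w}^i_j$ from slot $s_j$ to slot $s_i$ without altering its value, and then check the bookkeeping: summing $k$ consecutive entries of $\boldsymbol{r}$ reproduces $\boldsymbol{q}_i=\sum_{m=0}^{k-1}\boldsymbol{w}^m_i$, while $\boldsymbol{r}'_i=\boldsymbol{q}_i/2^{\ell_x}$ by construction from the top $\ell-\ell_x$ bits sampled in step 1. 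This shows the simulated output tuple satisfies the exact relation imposed by $\mathcal{F}_{TruncTriple}$.

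The main obstacle is not the simulation template itself, which is routine, but the combinatorial verification that the index reshuffling performed by $\mathbf{ShConvert}$ followed by the $\llbracket E_i\rrbracket_{k-1}$-combination in step 5 produces $\boldsymbol{r}_j=\boldsymbol{w}^{j\bmod k}_{\lfloor j/k\rfloor}$ in the correct PSS slot, so that the partial sums of $\boldsymbol{r}$ line up exactly with the independently-derived $\boldsymbol{r}'$. Once this algebraic check is nailed down, the security reduction to $(\mathcal{F}_{RandomBits},\mathcal{F}_{Random},\mathcal{F}_{DegreeTrans})$ is immediate, completing the proof.
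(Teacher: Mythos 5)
Your proposal is correct and takes essentially the same approach as the paper: the paper's own ``proof'' of this theorem is just the remark in Appendix~\ref{appendixC} that, apart from the detailed simulation given for $\Pi_{VecMatMult}^{Trunc}$, the remaining protocols are ``simple compositions of local computations and invocations of ideal functionalities,'' which is precisely the observation your simulator formalizes. Your additional bookkeeping for the \textbf{ShConvert}/$\llbracket E_i\rrbracket_{k-1}$ recombination is a correctness check the paper omits, but it does not change the security argument.
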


\begin{proof} 
As for correctness, there
is the following relation: $\boldsymbol{q}_i/2^{\ell_x}=\boldsymbol{r}'_i,i\in[k],\boldsymbol{q}=\sum_{i=0}^{k}\boldsymbol{w}^i,\boldsymbol{q}_i=\sum_{j=0}^k\boldsymbol{w}_i^j$. Each $\boldsymbol{w}^i$ is a PSS at the positions $\boldsymbol{s}_0$ to $\boldsymbol{s}_{k-1}$. We can get $\llbracket \boldsymbol{r}\rrbracket_{2d}$ by Step 4 and 5. The privacy follows from the fact that this protocol invokes the private ideal functionalities $\mathcal{F}_{RandomBits},\mathcal{F}_{Random},\mathcal{F}_{DegreeTrans}$.
\end{proof}

\begin{theorem}
\label{theoremPMatMult}
$\Pi_{PMatMult}^{Fixed}$ securely realizes $\mathcal{F}_{PMatMult}^{Fixed}$ in the $\mathcal{F}_{RandomBits}$-hybrid model with abort, in the presence of a fully semi-honest adversary controlling $t$ corrupted parties.
\end{theorem}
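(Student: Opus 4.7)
The plan is to follow the standard simulation-based paradigm for honest-majority semi-honest MPC. Let $\mathcal{A}$ be a semi-honest adversary corrupting a set $\mathcal{T}$ with $|\mathcal{T}|\leq t=d-k+1$. I would construct a simulator $\mathcal{S}$ that, on input the corrupt parties' shares of $A$ and $B$ together with their output shares of $C$ obtained from $\mathcal{F}_{PMatMult}^{Trunc}$, produces a simulated view of the hybrid protocol that is statistically indistinguishable from $\mathcal{A}$'s view in the real execution.

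For the offline phase, $\mathcal{S}$ simulates each call to $\mathcal{F}_{RandomBits}$ by sampling, for every corrupt party, uniform random field elements as its share. This is a perfect simulation because any $t=d-k+1$ shares of a degree-$d$ PSS packing $k$ secrets are uniformly distributed and independent of the secrets. The remaining offline steps 2 and 3 are purely local, so $\mathcal{S}$ derives the corrupt parties' shares of $\llbracket R_{\alpha,\beta}\rrbracket_{2d}$ and $\llbracket R'_{\alpha,\beta}\rrbracket_d$ exactly as the real protocol does. I note that the addition of $\llbracket \boldsymbol{0}\rrbracket_d$ in offline step 2 is essential: it re-randomizes the degree-$2d$ polynomial so that its shares are distributed as a uniform random degree-$2d$ PSS of $R$, which is what makes the subsequent online simulation go through.

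For the online phase I split on whether $P_1\in\mathcal{T}$. In step 1, if $P_1$ is honest then corrupt parties only transmit messages, so nothing needs to be simulated. If $P_1\in\mathcal{T}$, then $\mathcal{S}$ must supply the $n-t$ shares that honest parties send to $P_1$. The key observation is that, from $\mathcal{A}$'s view, the mask $R$ is uniformly random over $\mathbb{F}_p^k$ per entry, so the reconstructed secret $AB+R$ is uniform and independent of $AB$. Accordingly, $\mathcal{S}$ samples a uniform matrix $\tilde C$, completes it to a degree-$2d$ PSS of $\tilde C$ that is consistent with the corrupt parties' locally held shares of $\llbracket AB+R\rrbracket_{2d}$, and forwards the corresponding honest-party shares to $\mathcal{A}$. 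In step 2, when $P_1$ is honest, $\mathcal{S}$ supplies the corrupt parties' shares of $\llbracket \tilde C/2^{\ell_x}\rrbracket_d$ by programming them to be consistent with both the final output shares received from $\mathcal{F}_{PMatMult}^{Trunc}$ and the already-fixed corrupt shares of $\llbracket R'\rrbracket_d$; when $P_1\in\mathcal{T}$, $\mathcal{S}$ uses whatever $\mathcal{A}$ itself generates. Step 3 is local and needs no simulation.

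The main obstacle will be rigorously arguing the distributional equivalence in online step 1 when $P_1$ is corrupt. Concretely, I need to show that, conditioned on the corrupt parties' shares of all $\llbracket R''^i_{\alpha,\beta}\rrbracket_d$ handed out in the offline phase, the per-entry secret $R_{\alpha,\beta}=\sum_{i=0}^{\ell-1} 2^i\, R''^i_{\alpha,\beta}$ is uniform (up to the usual $1/p$ Mersenne-field statistical error) from $\mathcal{A}$'s view, so that $AB+R$ is statistically indistinguishable from the simulator's $\tilde C$. This reduces to the $t$-privacy of each $\llbracket R''^i_{\alpha,\beta}\rrbracket_d$ combined with the uniformity of the honest bits sampled inside $\mathcal{F}_{RandomBits}$. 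Once this equivalence is in place, composing the hybrid simulators layer by layer yields statistical indistinguishability between real and simulated views. Correctness, namely that the reconstructed output matches $AB/2^{\ell_x}$ up to the standard 1-bit truncation error, follows from the algebraic identity using bit-idempotence $(R''^i)^2=R''^i$ together with the definition of $R'$, and is orthogonal to the privacy argument.
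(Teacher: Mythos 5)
Your overall strategy is the right one and matches the paper's: the paper only writes out the simulator for $\Pi_{VecMatMult}^{Trunc}$ and declares the remaining protocols, including this one, to be routine compositions of local computation and ideal-functionality calls, and your simulator is exactly that template transplanted to the matrix setting (random corrupt shares for $\mathcal{F}_{RandomBits}$, a uniform $\tilde C$ in place of $AB+R$, case split on whether $P_1$ is corrupt). Your treatment of the secret-level uniformity of $R_{\alpha,\beta}=\sum_i 2^i R''^i_{\alpha,\beta}$ as a uniform $\ell$-bit value that is only statistically close to uniform modulo the Mersenne prime is correct and is in fact more careful than the paper, which asserts uniformity of the mask without comment.

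There is, however, one concrete gap, and it sits precisely in the case you identify as the crux (corrupt $P_1$). You justify the simulatability of the full degree-$2d$ polynomial $\llbracket AB\rrbracket_{2d}+\llbracket R\rrbracket_{2d}$ by claiming that adding $\llbracket \boldsymbol{0}\rrbracket_d$ in offline Step 2 makes $\llbracket R\rrbracket_{2d}$ a uniformly random degree-$2d$ PSS of $R$. That claim is false for a degree-$d$ zero sharing: such a sharing is a degree-$d$ polynomial constrained to vanish at the $k$ secret positions, and since $t+k=(d-k+1)+k=d+1$, the adversary's $t$ shares together with those $k$ zero constraints determine it completely; it contributes no entropy whatsoever against a $t$-corruption. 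Meanwhile, each $\llbracket R''^i_{\alpha,\beta}\rrbracket_d$, conditioned on the adversary's $t$ shares, has exactly $k$ remaining degrees of freedom, namely its secret bits, so the entire mask polynomial is a deterministic function of the secret bits and the adversary's known shares. A corrupt $P_1$ reconstructs the whole degree-$2d$ polynomial, which has $2d+1-t-k=d$ evaluations beyond the secret positions and the adversary's own points, and these are not masked by anything; your simulator's choice of a uniformly random degree-$2d$ polynomial through $\tilde C$ and the corrupt shares would therefore not match the real distribution. To close the gap you need the zero sharing to be a uniformly random \emph{degree-$2d$} sharing of $\boldsymbol{0}$ (which supplies exactly the missing $d$ dimensions), analogous to the genuinely random $\llbracket \boldsymbol{r}\rrbracket_{2d}$ produced by $\mathcal{F}_{VM-RandTuple}$ in the vector--matrix protocol; with that correction the rest of your argument goes through.
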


\begin{proof}
The correctness can be proved similarly as the DN protocol~\cite{damgaard2007scalable}.
The privacy can be proved similarly as Theorem~\ref{theoremVector-Matrix}.
\end{proof}

\begin{theorem}
\label{theoremPackTrans}
$\Pi_{PackTrans}$ securely realizes $\mathcal{F}_{PackTrans}$ in the $(\mathcal{F}_{DegreeTrans},\mathcal{F}_{Random})$-hybrid model with abort, in the presence of a fully semi-honest adversary controlling $t$ corrupted parties.
\end{theorem}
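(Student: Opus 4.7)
The plan is to construct a PPT simulator $\mathcal{S}$ that produces a view for the corrupted parties indistinguishable from a real execution of $\Pi_{PackTrans}$ in the $(\mathcal{F}_{DegreeTrans}, \mathcal{F}_{Random})$-hybrid model. Let $C$ denote the set of at most $t = d-k+1$ corrupted parties controlled by the semi-honest adversary $\mathcal{A}$. I would split the analysis into two cases according to whether $P_1 \in C$, since $P_1$ plays the distinguished role of reconstructing $\boldsymbol{z}$ and therefore sees information that the other corrupted parties do not.

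For the offline phase, $\mathcal{S}$ simulates the hybrid calls directly. For each invocation of $\mathcal{F}_{Random}$ and $\mathcal{F}_{DegreeTrans}$, $\mathcal{S}$ samples uniformly random field elements and hands them to the corrupted parties as their shares of $\llbracket \boldsymbol{r}'^i\rrbracket_d$ and (eventually) $\llbracket \boldsymbol{r}\rrbracket_d$. By the hiding property of degree-$d$ PSS recalled in \S\ref{PSS} (any $d-k+1$ shares are independent of the secrets), the corrupted view is identically distributed to the real offline view. The local linear combination in offline step~2 produces no communication, so no further simulation is needed there.

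For the online phase I would handle the two cases as follows. When $P_1 \notin C$, the corrupted parties only send their own shares of $\llbracket \boldsymbol{z}\rrbracket_d$ to $P_1$ and later receive their shares of each $\llbracket \boldsymbol{z}^i\rrbracket_d$ from $P_1$; $\mathcal{S}$ simulates the incoming messages as uniformly random field elements, and then the protocol's step~3 fixes the corrupted parties' output shares. Because $|C|\le d-k+1$, these random shares are consistent with any secret vector, in particular with the output values $\llbracket \boldsymbol{x}^i\rrbracket_d$ that $\mathcal{F}_{PackTrans}$ would return. When $P_1 \in C$, $\mathcal{S}$ must additionally simulate $P_1$'s reconstruction of the full vector $\boldsymbol{z}$. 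Since in the real execution $\boldsymbol{r}$ is uniform in $\mathbb{F}_p^k$ and the at most $t$ corrupted shares of $\llbracket \boldsymbol{r}\rrbracket_d$ reveal nothing about $\boldsymbol{r}$, the reconstructed $\boldsymbol{z}=\boldsymbol{x}+\boldsymbol{r}$ is uniformly distributed; $\mathcal{S}$ samples $\boldsymbol{z}$ uniformly, sets $\boldsymbol{z}^i=(\boldsymbol{z}_i,\ldots,\boldsymbol{z}_i)$ as in the protocol, and programs the (still unconstrained) simulated shares of $\llbracket \boldsymbol{r}'^i\rrbracket_d$ so that step~3 yields exactly the output shares $\llbracket \boldsymbol{x}^i\rrbracket_d|_C$ delivered by $\mathcal{F}_{PackTrans}$.

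The main obstacle I anticipate is justifying this consistency in the $P_1 \in C$ case: namely, showing that a simulator who samples $\boldsymbol{z}$ freely can always find shares of $\llbracket \boldsymbol{r}'^i\rrbracket_d$ that are (i) uniform from $\mathcal{A}$'s viewpoint and (ii) force the eventual output shares to match those returned by $\mathcal{F}_{PackTrans}$. The threshold arithmetic $t=d-k+1$ supplies exactly the degrees of freedom required: fixing at most $t$ coordinates of a degree-$d$ packed sharing of $k$ secrets leaves the conditional distribution of the remaining shares uniform, so $\mathcal{S}$ can always interpolate compatible polynomials. Once this point is established, indistinguishability follows by inspection, since every message in $\mathcal{S}$'s transcript is either determined by values already agreed upon with $\mathcal{F}_{PackTrans}$ or uniformly distributed in $\mathbb{F}_p$, matching the real distribution exactly and giving perfect (hence computational) security against a fully semi-honest $\mathcal{A}$ corrupting up to $t$ parties.
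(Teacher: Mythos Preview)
Your proposal is correct and follows essentially the same simulation-based template the paper uses. In fact, the paper does not give a dedicated proof for Theorem~\ref{theoremPackTrans}: Appendix~\ref{appendixC} only spells out the simulator for $\Pi_{VecMatMult}^{Trunc}$ and states that the remaining protocols, including $\Pi_{PackTrans}$, ``are simple compositions of local computations and invocations of ideal functionalities.'' Your case split on whether $P_1$ is corrupted, your use of the uniform mask $\boldsymbol{r}$ to argue that the reconstructed $\boldsymbol{z}$ is uniform, and your appeal to the $t=d-k+1$ threshold of degree-$d$ PSS all match that template exactly; your discussion of programming the $\llbracket \boldsymbol{r}'^i\rrbracket_d$ shares to hit the functionality's output is actually more careful than the paper's own treatment.
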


\begin{proof}
The correctness holds because $\boldsymbol{x}+\boldsymbol{r}=(\boldsymbol{x}_0+\boldsymbol{r}_0,\cdots,\boldsymbol{x}_{k-1}+\boldsymbol{r}_{k-1}),\boldsymbol{z}^i=(\boldsymbol{x}_{i}+\boldsymbol{r}_{i},\cdots,\boldsymbol{x}_{i}+\boldsymbol{r}_{i}),\boldsymbol{z}^i-\boldsymbol{r}'^i=(\boldsymbol{x}_{i},\cdots,\boldsymbol{x}_{i})$.
The privacy can be proved similarly as Theorem~\ref{theoremVector-Matrix}.
\end{proof}

\begin{theorem}
\label{theoremPreOR}
$\Pi_{PreOR}$ securely realizes $\mathcal{F}_{PreOR}$ in the $\mathcal{F}_{PreMult}$-hybrid model with abort, in the presence of a fully semi-honest adversary controlling $t$ corrupted parties.
\end{theorem}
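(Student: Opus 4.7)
The plan is to argue correctness first, and then semi-honest simulation-based security in the $\mathcal{F}_{PreMult}$-hybrid model, exploiting the fact that Steps 1 and 3 are purely local linear transformations and Step 2 is a single call to an ideal functionality.

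For correctness, I would reason coordinate-wise. Fix an arbitrary slot $v \in [k]$. Because Step 1 is just a subtraction from the public constant $1$ applied via linear homomorphism of the PSS, the underlying secret satisfies $\boldsymbol{b}^i_v = 1 - \boldsymbol{a}^i_v$, which for bit inputs is exactly $\neg \boldsymbol{a}^i_v$. By the specification of $\mathcal{F}_{PreMult}$, Step 2 yields $\boldsymbol{c}^i_v = \prod_{j=0}^{i} \boldsymbol{b}^j_v = \prod_{j=0}^{i}(1 - \boldsymbol{a}^j_v)$, which equals $1$ iff every $\boldsymbol{a}^j_v$ for $j \le i$ is zero, and $0$ otherwise. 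Therefore $1 - \boldsymbol{c}^i_v = \lor_{j=0}^{i} \boldsymbol{a}^j_v$ by De Morgan's law. Since this identity holds in every packed slot independently and Step 3 is again a local subtraction from $1$, the output PSS encodes exactly the claimed prefix-ORs.

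For security, I would construct a simulator $\mathcal{S}$ in the usual way. Let $T \subset \{1,\dots,n\}$ with $|T| = t$ denote the corrupted indices, and let $\mathcal{S}$ receive the corrupted parties' input shares $\{\llbracket \boldsymbol{a}^i \rrbracket_d^{(j)}\}_{j\in T, i \in [\ell]}$ together with their shares of the output. For Step 1, $\mathcal{S}$ locally computes the corrupted shares of $\llbracket \boldsymbol{b}^i \rrbracket_d$ by subtracting from $1$, matching what the real parties would do; no messages are exchanged. For Step 2, $\mathcal{S}$ invokes the simulator $\mathcal{S}_{PreMult}$ guaranteed by the security of $\mathcal{F}_{PreMult}$ on the just-computed corrupted shares of $\{\llbracket \boldsymbol{b}^i \rrbracket_d\}_{i}$ and on the corrupted shares of the target outputs $\{\llbracket \boldsymbol{c}^i \rrbracket_d\}_i$ (these are derivable from the final-output shares via a local addition by $1$). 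The resulting transcript is spliced into the view. Step 3 is again local and requires no further simulation.

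The main obstacle, and essentially the only non-trivial point, is showing that this composition is faithful, i.e., that the simulated corrupted-party view in the hybrid model is statistically (in fact perfectly) indistinguishable from the real view. This reduces to two observations: (i) the local linear-homomorphic steps produce shares that are deterministic functions of the corrupted parties' existing shares and public constants, so they are reproduced identically; and (ii) the shares fed into $\mathcal{S}_{PreMult}$ are exactly those the real parties would hold as input to $\mathcal{F}_{PreMult}$, so indistinguishability follows from the security guarantee of $\mathcal{F}_{PreMult}$ by a standard hybrid argument. No additional randomness leaks because no messages are sent outside the single $\mathcal{F}_{PreMult}$ call, and the threshold $t = d - k + 1$ carries over from the $\mathcal{F}_{PreMult}$ guarantee without modification. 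Hence the security of $\Pi_{PreOR}$ reduces immediately to that of $\mathcal{F}_{PreMult}$.
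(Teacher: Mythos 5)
Your proposal is correct and follows essentially the same route as the paper: the paper's Appendix~\ref{appendixC} disposes of this theorem by noting that the security proof is a ``simple composition of local computations and invocations of ideal functionalities,'' which is precisely the structure you spell out (local linear maps in Steps 1 and 3, a single ideal $\mathcal{F}_{PreMult}$ call in Step 2, and the De Morgan identity for correctness). Your version is simply a fully worked-out instance of the argument the paper leaves implicit.
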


\begin{proof}
The correctness holds because computing $\boldsymbol{b}^i_j=\lor_{v=0}^{j}\boldsymbol{a}^i_v$ is equivalent to compute $\bar{\boldsymbol{b}}_j^i=\land_{v=0}^{j}\bar{\boldsymbol{a}}_v^i$ where $\bar{\boldsymbol{a}}_v^i$ denote the opposite bit of $\boldsymbol{a}_v^i$ and so is $\boldsymbol{b}_j^i$. The privacy follows from the fact that this protocol only invokes the private ideal functionality $\mathcal{F}_{PreMult}$.
\end{proof}

\begin{theorem}
\label{theoremBitwiseLT}
$\Pi_{Bitwise-LT}$ securely realizes $\mathcal{F}_{Bitwise-LT}$ in the $(\mathcal{F}_{PreOR},\mathcal{F}_{DegreeT rans},\mathcal{F}_{Xor})$-hybrid model with abort, in the presence of a fully semi-honest adversary controlling $t$ corrupted parties.
\end{theorem}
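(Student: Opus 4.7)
The plan is to give a standard simulation-based argument in the semi-honest honest-majority hybrid model. Let $\mathcal{A}$ be an adversary corrupting a set $\mathcal{T}$ of at most $t=d-k+1$ parties. I will construct a PPT simulator $\mathsf{Sim}$ that, given the inputs and outputs of the corrupted parties, produces a view computationally indistinguishable from the real execution of $\Pi_{Bitwise\text{-}LT}$ in the hybrid model.

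First, I would inspect $\Pi_{Bitwise\text{-}LT}$ step by step and classify every operation as either (i) a purely local computation on degree-$d$ (or degree-$(k{-}1)$) PSS shares---additions, subtractions, public-constant multiplications, and the linear expansion $\llbracket \boldsymbol{a}^i \rrbracket_{k-1}+\llbracket \boldsymbol{b}^i \rrbracket_{d}-2\cdot\llbracket \boldsymbol{a}^i \rrbracket_{k-1}\cdot\llbracket \boldsymbol{b}^i \rrbracket_{d}$ used to prepare the XOR---or (ii) a call to one of the three hybrid functionalities $\mathcal{F}_{PreOR}$, $\mathcal{F}_{DegreeTrans}$, or $\mathcal{F}_{Xor}$. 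Because no party sends any message outside of these functionality calls, simulation reduces to composing the simulators guaranteed by Theorem~\ref{theoremPreOR} and the analogous security claims for $\mathcal{F}_{DegreeTrans}$ and $\mathcal{F}_{Xor}$ proved earlier in the appendix.

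Next, I would describe $\mathsf{Sim}$ wire by wire. For each intermediate PSS wire $\mathsf{Sim}$ maintains the $|\mathcal{T}|\le t$ shares held by corrupted parties: initially these are taken from the protocol inputs; for each local gate they are derived deterministically from already-simulated shares; and for each hybrid call $\mathsf{Sim}$ invokes that functionality's ideal-world simulator, supplying the corrupted-party inputs it has and reading back the shares delivered to $\mathcal{T}$. By the privacy property of PSS---that any $d-k+1$ shares of a degree-$d$ sharing are uniformly distributed independently of the packed secrets---these simulated shares are identically distributed to the real ones, regardless of the honest inputs. At the final output wire $\mathsf{Sim}$ programs the last invocation so that the shares delivered to $\mathcal{T}$ are consistent with the ideal output returned by $\mathcal{F}_{Bitwise\text{-}LT}$.

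Indistinguishability then follows by a standard hybrid argument across the polynomially many functionality invocations, composing the simulation guarantees of $\mathcal{F}_{PreOR}$, $\mathcal{F}_{DegreeTrans}$, and $\mathcal{F}_{Xor}$ with the perfect privacy of every intermediate PSS exposed to $\mathcal{T}$. The point I expect to require the most care is the degree bookkeeping: the XOR expansion multiplies a degree-$(k{-}1)$ sharing by a degree-$d$ sharing, yielding a degree-$(d+k-1)$ result that must be reduced back to degree $d$ via $\mathcal{F}_{Xor}$ (internally a $\mathcal{F}_{DegreeTrans}$) before it can be reused; I must verify that at every step the sharing degrees respect the $d_1+d_2<n$ constraint of PSS multiplication and that the number of shares visible to $\mathcal{T}$ never exceeds the privacy threshold $d-k+1$. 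Once this degree-and-threshold accounting is discharged, the remainder of the simulation argument is routine.
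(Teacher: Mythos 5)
Your proposal is correct and follows essentially the same route as the paper: the paper's Appendix~C only writes out an explicit simulator for $\Pi_{VecMatMult}^{Trunc}$ and dispatches Theorem~\ref{theoremBitwiseLT} with the remark that the remaining protocols are ``simple compositions of local computations and invocations of ideal functionalities,'' which is exactly the composition-of-hybrid-simulators argument you develop. Your wire-by-wire bookkeeping (including the degree-$(d+k-1)$ products in Steps~4 and~7 being reduced via $\mathcal{F}_{Xor}$/$\mathcal{F}_{DegreeTrans}$, and the check that $t=d-k+1$ corrupted shares of a degree-$d$ PSS reveal nothing) is a more detailed, but fully consistent, elaboration of what the paper leaves implicit.
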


\begin{proof}
The correctness can be proved similarly as LXY24. The privacy follows from the fact that this protocol invokes the private ideal functionalities $\mathcal{F}_{Xor},\mathcal{F}_{PreOR},\mathcal{F}_{DegreeTrans}$.
\end{proof}

\begin{theorem}
\label{theoremDReLU}
$\Pi_{DReLU}$ securely realizes $\mathcal{F}_{DReLU}$ in the $(\mathcal{F}_{Bitwise-LT}, \mathcal{F}_{Xor})$-hybrid model with abort, in the presence
of a fully semi-honest adversary controlling $t$ corrupted parties.
\end{theorem}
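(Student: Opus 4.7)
The plan is to construct a PPT simulator $\mathcal{S}$ for any corrupted subset $\mathcal{C}$ of at most $t=d-k+1$ parties and show via a hybrid argument that its output is statistically indistinguishable from the adversary's real view. I would compose the simulators for $\mathcal{F}_{RandomBits}$, $\mathcal{F}_{Xor}$, and $\mathcal{F}_{Bitwise-LT}$ using the standard hybrid-model composition theorem, handling the offline and online phases in turn.

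For the offline phase, $\mathcal{S}$ invokes the ideal functionality $\mathcal{F}_{RandomBits}$ to obtain simulated degree-$d$ PSS shares $\{\llbracket\boldsymbol{r}^i\rrbracket_d\}_{i\in[\ell]}$ for the corrupted parties, and then lets them locally form $\llbracket\boldsymbol{r}\rrbracket_d=\sum_{i=0}^{\ell-1} 2^i \llbracket\boldsymbol{r}^i\rrbracket_d$. No further simulation is needed here, since $t$-privacy of packed Shamir sharing ensures that the corrupt-party shares of every $\boldsymbol{r}^i$ are uniform and independent of the packed secrets, matching the real distribution exactly.

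For the online phase, the only message leaving the sharing abstraction is the vector $\boldsymbol{y}=2\boldsymbol{a}+\boldsymbol{r}$ broadcast in Step~1. The simulator samples $\boldsymbol{y}$ uniformly from $\mathbb{F}_p^k$. I would justify this by observing that because each $\boldsymbol{r}^i_j$ is a uniform independent bit and $\boldsymbol{r}_j=\sum_{i=0}^{\ell-1} 2^i \boldsymbol{r}^i_j$, the masked value $2\boldsymbol{a}_j+\boldsymbol{r}_j$ is statistically close to uniform on $\mathbb{F}_p$: since $p=2^{\ell}-1$, the element $0$ is hit with probability $2/2^{\ell}$ while every other element is hit with probability $1/2^{\ell}$, giving statistical distance $O(2^{-\ell})$ per coordinate. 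Steps~2 and~3 are local on $\boldsymbol{y}$ and its bit decomposition, while Steps~4--6 are simulated by calling the simulators for $\mathcal{F}_{Xor}$ and $\mathcal{F}_{Bitwise-LT}$ on the corresponding hybrid inputs. Finally, $\mathcal{S}$ queries $\mathcal{F}_{DReLU}$ to receive $\text{DReLU}(\boldsymbol{a})$ and arranges the last simulated $\mathcal{F}_{Xor}$ output so that the corrupt parties' shares of $1-\llbracket\boldsymbol{f}\rrbracket_d$ are consistent with a degree-$d$ PSS opening to $\text{DReLU}(\boldsymbol{a})$; this is always possible because any $t\le d-k+1$ shares can be completed to a sharing of any designated secret.

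The proof concludes with a hybrid chain: real execution, then replace $\mathcal{F}_{RandomBits}$ calls by ideal ones, then replace the opened $\boldsymbol{y}$ by a fresh uniform sample, then replace $\mathcal{F}_{Xor}$ and $\mathcal{F}_{Bitwise-LT}$ calls by their ideal counterparts, then patch the output shares. Each transition is justified by PSS privacy, statistical masking, or a sub-functionality's security guarantee. The main obstacle I anticipate is making the masking argument rigorous in the Mersenne-prime field, since the bit-decomposed $\boldsymbol{r}$ is uniform on $\{0,\ldots,2^{\ell}-1\}$ rather than on $\mathbb{F}_p$ and the collision at $2^{\ell}\equiv 1 \pmod p$ must be tracked through every coordinate; every other step reduces cleanly to the $t$-privacy of packed Shamir sharing or to the security of an already-proved hybrid functionality.
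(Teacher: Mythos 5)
Your proposal is correct and matches the paper's approach: the paper explicitly states that for every protocol other than $\Pi_{VecMatMult}^{Trunc}$ the security proof is a ``simple composition of local computations and invocations of ideal functionalities,'' and your simulator is exactly that composition, handling the single opened value $\boldsymbol{y}=2\boldsymbol{a}+\boldsymbol{r}$ the same way the paper's one worked example handles its opened masked value $\boldsymbol{z}$. Your additional tracking of the $O(2^{-\ell})$ statistical distance of the bit-composed mask in the Mersenne-prime field (and the resulting statistical, rather than perfect, indistinguishability) is a detail the paper glosses over and is a strengthening rather than a deviation.
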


\begin{proof}
The correctness follows easily from $\text{LSB}(2\boldsymbol{a})=\text{LSB}(\boldsymbol{y})\oplus\text{LSB}(\boldsymbol{r})\oplus(\boldsymbol{y}<\boldsymbol{r})$. As for privacy, the revealed vector $\boldsymbol{y}$ leaks no information about $\boldsymbol{a}$, as $\boldsymbol{r}$ is uniformly random. Then the invocation of the ideal functionalities $\mathcal{F}_{Xor}$ and $\mathcal{F}_{Bitwise-LT}$ is private.
\end{proof}

\begin{theorem}
\label{theoremReLU}
$\Pi_{ReLU}$ securely realizes $\mathcal{F}_{ReLU}$ in the $(\mathcal{F}_{DReLU},\mathcal{F}_{PDN})$-hybrid model with abort, in the presence of a fully semi-honest adversary controlling $t$ corrupted parties.
\end{theorem}

\begin{proof}
The correctness follows easily from $\text{ReLU}(\boldsymbol{a})=\text{DReLU}(\boldsymbol{a})\cdot\boldsymbol{a}$. The privacy follows from the fact that this protocol invokes the private ideal functionalities $\mathcal{F}_{DReLU},\mathcal{F}_{PDN}$.
\end{proof}

\begin{theorem}
\label{theoremM axpool}
$\Pi_{Maxpool}$ securely realizes $\mathcal{F}_{Maxpool}$ in the $\mathcal{F}_{ReLU}$-hybrid model with abort, in the presence of a fully semi-honest adversary controlling $t$ corrupted parties.
\end{theorem}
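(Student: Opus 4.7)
The plan is to prove security via the standard simulation-based argument in the $\mathcal{F}_{ReLU}$-hybrid model. Since the adversary $\mathcal{A}$ corrupts at most $t = d-k+1$ parties and every sharing $\llbracket \cdot \rrbracket_d$ in the protocol is a degree-$d$ PSS, any $t$ of its shares are information-theoretically independent of the underlying $k$ secrets. This privacy invariant of PSS is the foundation I will exploit throughout the simulation.

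First I would construct a simulator $\mathcal{S}$ that interacts with the ideal functionality $\mathcal{F}_{Maxpool}$ and produces a view of the corrupted parties indistinguishable from the real-world execution of $\Pi_{Maxpool}$. Inspecting the protocol, every instruction falls into one of three categories: (i) ideal calls to $\mathcal{F}_{ReLU}$; (ii) local linear operations, namely the subtraction $\boldsymbol{f}^j - \boldsymbol{h}^j$ and the subsequent addition of $\boldsymbol{h}^j$; and (iii) publicly known control flow (the value of $m$, the pairing, the iteration counter). Categories (ii) and (iii) produce no network messages, so $\mathcal{S}$ only needs to simulate the responses of $\mathcal{F}_{ReLU}$. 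For each such call, $\mathcal{S}$ samples uniformly random degree-$d$ PSS shares for the $t$ corrupted parties; by the $t$-privacy of degree-$d$ PSS, these are distributed identically to the real shares of the true ReLU outputs.

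The formal argument then proceeds by a standard hybrid over the $m-1$ invocations of $\mathcal{F}_{ReLU}$ that occur across the $\lceil \text{log}_2(m) \rceil$ iterations of the outer loop. In each hybrid step I replace one real ReLU call by a simulated one; indistinguishability of adjacent hybrids follows from two observations. First, the input sharings to that call, namely $\llbracket \boldsymbol{f}^j - \boldsymbol{h}^j \rrbracket_d$, are either original inputs or outputs of previously-simulated $\mathcal{F}_{ReLU}$ calls combined with other shares via the linear homomorphism of PSS, and in both cases the corrupted parties' view of them is already correctly distributed. Second, each $\mathcal{F}_{ReLU}$ invocation produces a fresh degree-$d$ PSS whose restriction to any $t$ shares is uniform and independent of prior calls. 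Correctness is immediate from the identity $\text{Max}(a,b) = \text{ReLU}(a-b) + b$ applied in parallel to each of the $k$ coordinates packed in the PSS, together with the standard tournament-tree construction of the maximum of $m$ values.

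The main obstacle I anticipate is arguing composability cleanly, that is, verifying that the intermediate sharings $\llbracket \boldsymbol{g}^j \rrbracket_d$ which are fed as inputs into later $\mathcal{F}_{ReLU}$ calls remain valid degree-$d$ PSS and that their correlation with the adversary's view does not invalidate the simulation of subsequent calls. This is handled by observing that each call to $\mathcal{F}_{ReLU}$ outputs an independent fresh degree-$d$ PSS of the correct secret, and that adding the share $\llbracket \boldsymbol{h}^j \rrbracket_d$ is a local linear operation which preserves both the degree and the $t$-privacy. With this invariant maintained across all rounds, the final hybrid is identical to the ideal-world execution, completing the proof.
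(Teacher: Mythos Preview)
Your proposal is correct and follows essentially the same approach as the paper: the paper's own proof (in Appendix~\ref{appendixC}) simply states that, apart from $\Pi_{VecMatMult}^{Trunc}$, all remaining protocols---including $\Pi_{Maxpool}$---are ``simple compositions of local computations and invocations of ideal functionalities,'' which is precisely the structure you identify and elaborate. Your write-up is considerably more detailed than the paper's one-line justification, but the underlying argument (hybrid-model simulation where only the $\mathcal{F}_{ReLU}$ outputs need to be simulated, and all other steps are local and leak nothing) is the same.
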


\begin{proof}
The correctness and privacy can be proved similarly as LXY24.
\end{proof}

\bibliographystyle{IEEEtran}
\bibliography{IEEEabrv,references}

\end{document}